\documentclass[
    sigconf,
    screen,
    nonacm,
]{acmart}

\setcopyright{none}

\usepackage{mathtools}
\usepackage{stmaryrd}
\usepackage{algorithmic}
\usepackage{graphicx}
\usepackage{textcomp}
\usepackage{xcolor}
\def\BibTeX{{\rm B\kern-.05em{\sc i\kern-.025em b}\kern-.08em
    T\kern-.1667em\lower.7ex\hbox{E}\kern-.125emX}}

\usepackage{cmll}

\usepackage[all]{xy}
\usepackage{tikzit}

\usepackage{amsthm}
\usepackage{aliascnt}
\usepackage{cleveref} 

\theoremstyle{plain}
\newtheorem{theorem}{Theorem}

\newaliascnt{lemma}{theorem}
\newtheorem{lemma}[lemma]{Lemma}
\aliascntresetthe{lemma}

\newaliascnt{corollary}{theorem}
\newtheorem{corollary}[corollary]{Corollary}
\aliascntresetthe{corollary}

\newaliascnt{proposition}{theorem}
\newtheorem{proposition}[proposition]{Proposition}
\aliascntresetthe{proposition}

\newaliascnt{conjecture}{theorem}

\aliascntresetthe{conjecture}

\newtheorem*{theorem*}{Theorem}
\newtheorem*{proposition*}{Proposition}
\newtheorem*{claim*}{Claim}

\theoremstyle{definition}
\newaliascnt{definition}{theorem}
\newtheorem{definition}[definition]{Definition}
\aliascntresetthe{definition}

\newaliascnt{example}{theorem}
\newtheorem{example}[example]{Example}
\aliascntresetthe{example}

\newaliascnt{question}{theorem}

\aliascntresetthe{question}

\newtheorem*{problem*}{Problem}
\newtheorem*{question*}{Question}

\theoremstyle{remark}
\newaliascnt{remark}{theorem}
\newtheorem{remark}[remark]{Remark}
\aliascntresetthe{remark}

\newaliascnt{notation}{theorem}
\newtheorem{notation}[notation]{Notation}
\aliascntresetthe{notation}

\newtheorem*{notation*}{Notation}

\crefname{theorem}{Theorem}{Theorems}
\crefname{lemma}{Lemma}{Lemmas}
\crefname{corollary}{Corollary}{Corollaries}
\crefname{proposition}{Proposition}{Propositions}
\crefname{conjecture}{Conjecture}{Conjectures}

\crefname{definition}{Definition}{Definitions}
\crefname{example}{Example}{Examples}
\crefname{question}{Question}{Questions}

\crefname{remark}{Remark}{Remarks}

\crefname{section}{Section}{Sections}

\crefname{figure}{Figure}{Figures}

\definecolor{DarkGreen}{RGB}{0,120,0}
\definecolor{DarkBlue}{RGB}{0,0,200}
\definecolor{Cyan}{RGB}{0,139,139}

\usepackage{proof}
\newcommand{\subgroup}{\le}

\newcommand{\PropVar}{X}
\newcommand{\PropVarb}{Y}
\newcommand{\PropVarc}{Z}

\newcommand{\GroupoidCat}{\mathbf{Grpd}}

\newcommand{\creed}{\mathcal{A}}
\newcommand{\creedb}{\mathcal{B}}

\newcommand{\Grp}{\mathbb{A}}
\newcommand{\Grpb}{\mathbb{B}}
\newcommand{\Grpc}{\mathbb{C}}

\newcommand{\kit}{\mathcal{A}}

\newcommand{\InducedProf}[1]{\widetilde{#1}}

\newcommand\lohide[1]{}

\newcommand{\Powerset}{P}

\newcommand{\Int}{\mathbb{Z}}
\newcommand{\defeq}{:=}
\newcommand{\defiff}{:\Longleftrightarrow}
\newcommand{\orthogonal}{\mathrel{\bot}}

\newcommand{\CreedOf}{\mathfrak{C}}
\newcommand{\KitOf}{\mathfrak{K}}

\newcommand{\Profunctor}{\mathcal{P}}
\newcommand{\Profunctorb}{\mathcal{Q}}
\newcommand{\Profunctorc}{\mathcal{R}}
\newcommand{\Group}{G}
\newcommand{\Groupb}{H}
\newcommand{\Groupc}{K}

\newcommand{\Formula}{A}
\newcommand{\Formulab}{B}
\newcommand{\Formulac}{C}

\newcommand{\TotalElement}{\mathit{Tot}}
\newcommand{\CototalElement}{\mathit{Cot}}

\newcommand{\SProf}{\mathbf{SProf}}

\newcommand{\TUnit}{\mathcal{I}}
\newcommand{\TGrp}{\creed}
\newcommand{\TGrpb}{\creedb}

\newcommand{\Underlying}{\mathcal{U}}

\newcommand{\Tot}{\mathbf{Tot}}

\newcommand{\Conjugate}{\mathrel{\equiv_{\mathit{conj}}}}

\newcommand{\ProfCat}{\mathbf{Prof}}

\newcommand{\TotProf}{\mathbf{STProf}}

\newcommand{\Truncation}[1]{|{#1}|}
\newcommand{\Stabiliser}{\mathsf{fix}}

\newcommand{\llpar}{\invamp}

\newcommand{\Functor}{\mathcal{F}}
\newcommand{\Functorb}{\mathcal{G}}

\newcommand{\profarrow}{\mathrel{\relbar\joinrel\mapstochar\rightarrow}}

\newcommand{\ident}{\mathrm{id}}
\newcommand{\sem}[1]{\llbracket{#1}\rrbracket}
\newcommand{\op}{\mathrm{op}}

\newcommand{\Set}{\mathbf{Set}}

\newcommand{\Rel}{\mathbf{Rel}}

\newcommand{\Prof}{\mathbf{Prof}}

\begin{document}

\title{Full Definability in a Profunctorial Model}

\author{Takeshi Tsukada}
\email{t.tsukada@acm.org}
\affiliation{\institution{Chiba University}
  \country{Japan}
}
\orcid{0000-0002-2824-8708}

\author{Kazuyuki Asada}
\affiliation{\institution{Tohoku University}
  \country{Japan}
}
\orcid{0000-0001-8782-2119}

\author{Kengo Hirata}
\affiliation{\institution{University of Edinburgh}
  \country{UK}
}
\affiliation{\institution{Kyoto University}
  \country{Japan}
}
\orcid{0009-0005-4416-2655}

\begin{abstract}
    A semantic model enjoys full definability if every semantic element in the model is a denotation of some proof or program.
    Full definability indicates that the model captures programs and proofs in a highly detailed manner.

    This paper studies full definability in a model based on the (bi)category of profunctors on groupoids, which is a proof-relevant variant of the relational model.
Despite the fact that a profunctor is far more complicated than a relation, we show that a rather straightforward application of the ideas for the relational model, together with the notion of stability in profunctors, provides a complete characterisation of definable profunctors.
    More precisely, all \emph{logical} families of \emph{stable} and \emph{total} profunctors are definable by proof-nets of multiplicative linear logic with MIX.
    As a part of the full definability proof, we show that the stability serves as a correctness criterion, which we think is of independent interest. 
\end{abstract}

\maketitle

\section{Introduction}
The relational model based on the category of sets and relations is one of the most fundamental models of linear logic~\cite{Girard1987}.
It is both concise and conceptually rich, and it connects naturally to a wide range of other notions, including intersection types~\cite{Carvalho2017} and a game model~\cite{Hyland1999}.
The relational model itself forms a compact closed category, so the tensor product \( \otimes \) and par \( \llpar \) collapse in the relational model.
However, by equipping sets with additional structure and restricting to relations that respect that structure, one can obtain many important models that are not necessarily compact closed.
Examples of such models include coherence spaces~\cite{Girard1987}, hypercoherences~\cite{Ehrhard1993}, finiteness spaces~\cite{Ehrhard2008}, and totality spaces~\cite{Loader1994}.

The (bi)category of profunctors can be seen as a categorification of the category of relations.
While a relation records only whether two elements are related or not, a profunctor carries richer information: it can encode how many ways two objects may be related, and even some relationships among those ways.
Owing to this additional structure, profunctor-based models are deeply connected to several important themes, including the Taylor expansion~\cite{Tsukada2017,Tsukada2018} (in the sense of Danos and Ehrhard~\cite{Danos2011}), type derivations in intersection type systems~\cite{Olimpieri2021}, and game semantics~\cite{Clairambault2023,Clairambault2024}.
For this reason, the (bi)category of profunctors has attracted considerable attention in recent years~\cite{Fiore2007,Tsukada2017,Tsukada2018,Olimpieri2021,Clairambault2023,Fiore2024}.

It is known that profunctors arising as interpretations of linear-logic proofs and of \( \lambda \)-terms enjoy special properties.
For instance, one can associate a matrix to a profunctor; although this passage is not functorial for arbitrary profunctors, it behaves as a functor when restricted to profunctors that are denotations of \( \lambda \)-terms~\cite{Tsukada2018,Clairambault2023}.
It is therefore natural to ask what intrinsic properties characterize profunctors that are representable by proofs or \( \lambda \)-terms.

This paper addresses this question and, as an answer, presents a fully definable model based on profunctors.

A model is said to be \emph{fully definable} (or \emph{fully complete}) if every element of the model arises as the denotation of some proof or \( \lambda \)-term.
A number of fully definable models have been constructed.
For example, Loader~\cite{Loader1994} proposed the \emph{totality space model}, which is based on the relational model and fully definable for \( \mathsf{MLL} + \mathsf{Mix} \);
Devarajan~et~al.~\cite{Devarajan1999} shows that a model based on Chu spaces is fully definable for \( \mathsf{MLL} \) (without \( \mathsf{Mix} \));
and fully definable game models have been developed for a wide range of programming languages and proof systems (see, e.g., \cite{Abramsky1994,Abramsky2000,Hyland2000,Mellies2005a}).

Our construction is based on two key ideas in the literature.
The first is Loader's \emph{totality}~\cite{Loader1994}, which was employed to obtain a fully definable model built on the relational model.
Since profunctors generalize relations, totality also plays a central role in our study of profunctors.
The second idea is the concept of \emph{stable profunctor}, introduced by Taylor~\cite{Taylor1989} and further developed by Fiore et~al.~\cite{Fiore2022,Fiore2024,Fiore2024a}.
The concept of stable profunctor was motivated by Berry's stable domain theory~\cite{Berry1978} and Girard's normal functor~\cite{Girard1988} and proposed as a categorification of the stable domain model.
Fortunately, these two ideas can both be understood as instances of \emph{orthogonality} in the sense of Hyland and Schalk~\cite{Hyland2003}.
We propose \emph{stably total orthogonality}, which combines these two notions, and construct a fully definable model by using this orthogonality.
Given that profunctors are significantly more intricate than relations, it is perhaps surprising that such a simple construction yields a fully definable model.

In contrast to the construction of the model, the proof of full definability is not straightforward.
Furthermore, we obtain some observations that may be of independent interest.

First, we prove that the stability of profunctors yields a correctness criterion (\cref{thm:creed-kit-correctness-without-cut}).
This result can be understood as the profunctorial analogue of Retor{\'e}'s result~\cite{Retore1997} that coherence of the relational interpretation provides a correctness criterion.
Retor{\'e}'s proof and ours share the same high level idea of the proof, which suggests a close relationship between coherence and stability; however, at present we do not have a precise correspondence between them in any formal sense.

Second, we observe that strict factorization systems naturally arise in the stably total model (\cref{prop:totality:strict-factorization-system}).
Strict factorization systems on groupoids have previously been used for combinatorial analyses of profunctor models and related semantics~\cite{Tsukada2018,Olimpieri2021,Clairambault2023,Clairambault2024}.
The appearance of strict factorization systems in this paper seems to have the same origin.
However, the way strict factorization systems arise here differs from earlier work in two respects.
First, in this paper, strict factorization systems emerge from the notions of stability and totality, which at first sight appear unrelated to strict factorization systems.
This provides a new perspective that further justifies the role of strict factorization systems in the study of profunctors.
Second, whereas existing approaches typically equip each object with a single strict factorization system, in this paper we associate multiple strict factorization systems with each object.
Moreover, part of the information of \( \mathsf{MLL} \) proofs is also encoded in the choice of which strict factorization system to adopt among these alternatives.

The technical contributions of this paper can be summarized as follows.
\begin{itemize}
    \item We construct a new model based on profunctors that is fully definable with respect to \( \mathsf{MLL}+\mathsf{Mix} \).  Our result shows that definable profunctors are characterized by using \emph{totality} and \emph{stability}.
    \item We show that stability provides a new semantic correctness criterion for \( \mathsf{MLL}+\mathsf{Mix} \).
    \item We show that strict factorization systems arise from stability and totality. This provides a new perspective on the relevance of strict factorization systems in the analysis of profunctors. 
\end{itemize}

\subsection{Related Work}
The definability result of this paper is inspired by Loader's approach~\cite{Loader1994}.
Other technically related studies include results based on Chu spaces~\cite{Devarajan1999}.
These constructions can be explained in terms of orthogonality~\cite{Hyland2003}.

The full definability (or its variant such as compact definability) is considered as a key property of game models, and many game models with full definability with respect to a variety of programming languages and proof systems have been proposed (see, e.g., \cite{Abramsky1994,Abramsky2000,Hyland2000,Mellies2005a}).
A motivation for this work comes from attempts to recast game semantics in terms of (pre)sheaves and profunctors~\cite{Winskel2013,Tsukada2015,Eberhart2017,Eberhart2018,Clairambault2023,Clairambault2024,Jacq2018}.
Profunctor models may be viewed as a simplified variant of game semantics, obtained by discarding several structures, and one of our interests is to understand how far game-semantic reasoning can be carried out in such a simplified setting.
From this perspective, it is a particularly intriguing question whether the approach developed in this paper can be extended to the setting that includes exponentials.

As work on profunctor models as extensions of the relational model, one may first mention Fiore~et~al.~\cite{Fiore2007} (although there is also earlier related work, such as Taylor~\cite{Taylor1989}).
Fiore~et~al.~\cite{Fiore2007} introduced, in the setting of profunctor models, a structure corresponding to the Kleisli category of the multiset comonad of the relational model, and showed that it carries a cartesian closed structure.
Profunctor models are widely used as a tool for analysing the ``quantitative'' or ``combinatorial'' information of programs and proofs~\cite{Tsukada2017,Tsukada2018,Clairambault2023,Clairambault2024,Olimpieri2021,Fiore2022,Fiore2024,Fiore2024a,Ong2017,Jacq2018}, which may often be lost in the relational model.

 \section{Preliminaries}
\label{sec:pre}
This section briefly reviews multiplicative linear logic (MLL) and profunctors on groupoids.
We also discuss the interpretation of MLL proof structure in the (bi)category of profunctors.

\subsection{Profunctors on Groupoids}
A \emph{(small) groupoid} \( \Grp \) is a (small) category in which every morphism is invertible.
Then \( \Grp(a,a) \) is a group for every object \( a \in \Grp \).
We write \( \Grp, \Grpb, \Grpc \) for groupoids, \( a,b,c \) for their objects, and \( \alpha,\beta,\gamma \) for morphisms.
Let \( \GroupoidCat \) be the category of small groupoids and functors.

\begin{notation}
    The composition of morphisms is written in diagrammatic order: the composite of \( \alpha \colon a \longrightarrow b \) and \( \beta \colon b \longrightarrow c \) is \( (\alpha; \beta) \colon a \longrightarrow c \), which we sometimes simply write as \( \alpha \beta \).
    We write \( I \) for the trivial category, which has one object \( \star \) and one morphism \( \ident_\star \).
For groups \( \Group \) and \( \Groupb \), we write \( \Group \subgroup \Groupb \) to mean that \( \Group \) is a subgroup of \( \Groupb \).
\end{notation}

For groupoids \( \Grp \) and \( \Grpb \), a \emph{profunctor} \( \Profunctor \colon \Grp \profarrow \Grpb \) is a functor \( \Profunctor \colon \Grp^{\op} \times \Grpb \longrightarrow \Set \).
This definition works for general categories \( \Grp \) and \( \Grpb \), but this paper focuses on profunctors between groupoids.
For \( x \in \Profunctor(a,b) \), \( \alpha \in \Grp(a', a) \) and \( \beta \in \Grpb(b,b') \), we write \( \alpha \cdot x \cdot \beta \) for \( \Profunctor(\alpha,\beta)(x) \in \Profunctor(a', b') \).
When \( \alpha = \ident \), we simply write \( \ident \cdot x \cdot \beta \) as \( x \cdot \beta \), and similarly for \( \alpha \cdot x \defeq \alpha \cdot x \cdot \ident \).

Profunctors \( \Profunctor \colon \Grp \profarrow \Grpb \) and \( \Profunctorb \colon \Grpb \profarrow \Grpc \) compose in a manner similar to relations:
\begin{equation*}
    (\Profunctor ; \Profunctorb)(a,c)
    \quad\defeq\quad
    \int^{b \in \Grpb} \Profunctor(a,b) \times \Profunctorb(b,c).
\end{equation*}
Here \( \int^{b \in \Grpb} \) is an operation called a \emph{coend}, which can intuitively be thought of as a kind of existential quantification.
Concretely, the composition is
\begin{equation*}
    \textstyle
    (\Profunctor; \Profunctorb)(a,c)
    \quad\defeq\quad
    \left(\coprod_{b \in \Grpb} \Profunctor(a,b) \times \Profunctorb(b,c)\right)/{\sim},
\end{equation*}
where \( \sim \) is an equivalence relation defined by
\begin{equation*}
    (x,y) \sim (x',y')
    \:\defiff\:
    (x', y') = (x \cdot \beta^{-1}, \beta \cdot y)
    \mbox{ for some \( \beta \).}
\end{equation*}
This relation \(\sim\) is also given by \( (x, \beta \cdot y) \sim (x \cdot \beta, y) \).

Unlike relations, the composition of profunctors is usually not strictly associative, but associative up to isomorphisms.
For this reason, the categorical structure in which morphisms are profunctors is often treated as a \emph{bicategory}.
However, within the scope of this paper, no issues arise from taking quotients of profunctors by isomorphisms.
Thus, we define \( \ProfCat \) as the category where objects are groupoids and morphisms are isomorphism classes of profunctors.\footnote{This construction has previously been pointed out as raising concerns about size issue.  A simple way to justify the quotient here is to assume a Grothendieck universe \( U \) and to use \( \ProfCat_U \) of the bicategory of \( U \)-small groupoids and \( \Set_U \)-valued profunctors.}
This allows us to directly use results in the \( 1 \)-category theory, such as the definition and coherence theorem of compact closed categories.
By abuse of notation, we shall write as if a morphism is a profunctor (instead of an equivalence class of profunctors).

The category of profunctors is a model of linear logic.
The interpretations of logical connectives are given as follows:
\begin{align*}
    \Grp \otimes \Grpb &\defeq \Grp \times \Grpb
    &
    \Grp \llpar \Grpb &\defeq \Grp \times \Grpb.
\end{align*}
The negation is the opposite category \( \Grp^{*} \defeq \Grp^{\op} \).
Since \( \otimes \) and \( \llpar \) coincide on \( \Prof \), it is a compact closed category.
The unit \( \eta_{\Grp} \colon I \profarrow \Grp \times \Grp^{\op} \) is \( \eta_{\Grp}(\star, (a,a')) = \Grp(a',a) \) and the counit \( \epsilon_{\Grp} \colon \Grp^{\op} \times \Grp \profarrow I \) is \( \epsilon_{\Grp}((a', a), \star) = \Grp(a,a') \).

A profunctor induces a relation.
For a groupoid \( \Grp \), let \( \Truncation{\Grp} \) be the set of the isomorphism classes of \( \Grp \).
Then a profunctor \( \Profunctor \colon \Grp \profarrow \Grpb \) induces a relation \( \Truncation{\Profunctor} \subseteq \Truncation{\Grp} \times \Truncation{\Grpb} \) defined by \( ([a]_{\cong}, [b]_{\cong}) \in \Truncation{\Profunctor} \defiff \Profunctor(a,b) \neq \emptyset \).
It is easy to see that \( \Truncation{-} \colon \ProfCat \longrightarrow \Rel \) is a functor.

Presheaves and profunctors on groupoids have a useful representation as a sum of atomic ones, as discussed in \cite{Fiore2024}.
We shall heavily use this representation, so we give a concrete definition.

For profunctors \( \Profunctor_i \colon \Grp \profarrow \Grpb \),
the profunctor \( \coprod_{i \in I} \Profunctor_i \colon \Grp \profarrow \Grpb \) is defined by \( (\coprod_{i \in I} \Profunctor_i)(a,b) \defeq \coprod_{i \in I} (\Profunctor_i(a,b)) \) on objects and \( (\coprod_{i \in I} \Profunctor_i)(\alpha, \beta)(\mathsf{in}_j(x)) \defeq \mathsf{in}_j(\Profunctor_j(\alpha,\beta)(x)) \) on morphisms, where \( \mathsf{in}_j \colon \Profunctor_j(a,b) \longrightarrow (\coprod_{i \in I} \Profunctor(a,b)) \) is the injection.
When \( I = \{ 1,\dots,n \} \) is a finite set, we also write \( \Profunctor_1 + \dots + \Profunctor_n \) for \( \coprod_i \Profunctor_i \).
An ``atom'' with respect to \( + \) is described by a group, and every profunctor is expressed as the sum of atoms.

For a subgroup \( \Group \le \Grp^{\op}(a,a) \times \Grpb(b,b) \), 
we write \( \InducedProf{\Group} \colon \Grp \profarrow \Grpb \) for the profunctor defined by:
\begin{align*}
    \InducedProf{\Group}(a', b')
    &\quad\defeq\quad
    \{\, \alpha \cdot \Group \cdot \beta \mid \alpha \in \Grp(a', a), \beta \in \Grpb(b, b') \,\}
    \\
    \InducedProf{\Group}(\alpha,\beta)
    &\quad\defeq\quad
    X \mapsto (\alpha \cdot X \cdot \beta)
\end{align*}
where for \( X \subseteq \Grp^{\op}(a, a') \times \Grpb(b, b') \), \( \alpha \in \Grp^{\op}(a', a'') \) and \( \beta \in \Grpb(b',b'') \), the subset \( (\alpha \cdot X \cdot \beta) \subseteq \Grp^{\op}(a, a'') \times \Grpb(b,b'') \) is defined by \( (\alpha \cdot X \cdot \beta) \defeq \{ (\alpha \alpha_0, \beta_0 \beta) \mid (\alpha_0, \beta_0) \in X \} \) (where the composition \( \alpha \alpha_0 \) is taken in \( \Grp \)).
This profunctor cannot be decomposed by \( + \) any further.

Every profunctor can be written as a sum of profunctors of the form \( \InducedProf{\Group} \).
Given a profunctor \( \Profunctor \colon \Grp \profarrow \Grpb \), let \( (x_i \in \Profunctor(a_i, b_i))_{i \in I} \) be a family such that (1) for every \( x \in \Profunctor(a,b) \), we have \( x = \alpha \cdot x_i \cdot \beta \) for some \( i \), \( \alpha \) and \( \beta \), and (2) \( x_j = \alpha \cdot x_i \cdot \beta \) implies \( i = j \).
Then
\begin{equation*}
    \textstyle
    \Profunctor
    \quad\cong\quad
    \coprod_{i \in I} \InducedProf{\Stabiliser(x_i)}
\end{equation*}
where \( \Stabiliser(x) \le \Grp^{\op}(a,a) \times \Grpb(b,b) \) for \( x \in \Profunctor(a,b) \) is the \emph{stabiliser subgroup} defined by 
\begin{equation*}
    \Stabiliser(x)
    \:\defeq\:
    \{\, (\alpha,\beta) \in \Grp^{\op}(a,a) \times \Grpb(b,b) \mid \alpha \cdot x \cdot \beta = x \,\}.
\end{equation*}
\begin{proposition}\label{prop:prof-decomposition}
    A profunctor \( \Profunctor \colon \Grp \profarrow \Grpb \) can be expressed as
    \begin{equation*}
        \textstyle
        \Profunctor
        \:\cong\:
        \coprod_{i \in I} \InducedProf{\Group_i},
        \qquad
        \forall i \in I.~
        \Group_i \le \Grp^{\op}(a_i,a_i) \times \Grpb(b_i,b_i)
    \end{equation*}
    for some family \( (\Group_i)_{i \in I} \).
\qed
\end{proposition}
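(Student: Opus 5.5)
The plan is to prove the explicit decomposition \( \Profunctor \cong \coprod_{i \in I} \InducedProf{\Stabiliser(x_i)} \) stated just before the proposition; this gives the claim with \( \Group_i \defeq \Stabiliser(x_i) \). Consider the action of the groupoid \( \Grp^{\op} \times \Grpb \) on \( \coprod_{(a,b)} \Profunctor(a,b) \) given by \( x \mapsto \alpha \cdot x \cdot \beta \). Its orbits partition the elements of \( \Profunctor \). Using the axiom of choice, pick one representative \( x_i \in \Profunctor(a_i,b_i) \) from each orbit. This family satisfies conditions (1) and (2) by construction, and \( \Stabiliser(x_i) \le \Grp^{\op}(a_i,a_i) \times \Grpb(b_i,b_i) \) is a subgroup because the action is functorial.

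Next, fix \( i \) and write \( x = x_i \), \( \Group = \Stabiliser(x) \). I will build a natural isomorphism \( \phi \colon \InducedProf{\Group} \Rightarrow \Profunctor_i \), where \( \Profunctor_i \) is the subprofunctor consisting of the orbit of \( x \). Define \( \phi_{a',b'}(\alpha \cdot \Group \cdot \beta) \defeq \alpha \cdot x \cdot \beta \).

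\emph{Well-definedness and injectivity.} Suppose \( \alpha \cdot \Group \cdot \beta = \alpha' \cdot \Group \cdot \beta' \). Then \( (\alpha'^{-1}\alpha, \beta\beta'^{-1}) \in \Group \), with inverses taken in the appropriate groupoids and composition in diagrammatic order, so \( \alpha \cdot x \cdot \beta = \alpha' \cdot x \cdot \beta' \). The argument runs in both directions: if \( \alpha \cdot x \cdot \beta = \alpha' \cdot x \cdot \beta' \), then the same pair fixes \( x \), hence lies in \( \Group \), and the two cosets coincide.

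\emph{Surjectivity and naturality.} Surjectivity onto the orbit holds by definition of the orbit. Naturality is immediate, since both sides act by pre- and post-composition: \( \phi(\gamma \cdot (\alpha \cdot \Group \cdot \beta) \cdot \delta) = \gamma\alpha \cdot x \cdot \beta\delta \).

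Finally, \( \Profunctor(a,b) \) is the disjoint union of the orbit-pieces \( \Profunctor_i(a,b) \), and the action preserves each piece. Hence \( \Profunctor \cong \coprod_i \Profunctor_i \cong \coprod_i \InducedProf{\Group_i} \) naturally.

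The main obstacle is only bookkeeping. One must check that the coset notation \( \alpha \cdot \Group \cdot \beta \), which lives in \( \Grp^{\op}(a_i,a') \times \Grpb(b_i,b') \), matches the action conventions. In particular, the equality of cosets has to translate exactly to the membership \( (\alpha'^{-1}\alpha, \beta\beta'^{-1}) \in \Group \), with the order of composition correct in \( \Grp^{\op} \). I would verify this first on the one-sided case \( \Grp = I \), that is, presheaves, and then combine the two sides.
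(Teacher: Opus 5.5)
Your proposal is correct and follows the paper's approach: the paper justifies the proposition by the remark just before it, choosing orbit representatives $x_i$ satisfying (1) and (2) and taking $\Group_i \defeq \Stabiliser(x_i)$, which is exactly your construction. Your verification also checks out in detail: well-definedness and injectivity reduce to $(\alpha'^{-1}\alpha, \beta\beta'^{-1}) \in \Group$, and naturality holds. Since the paper states the proposition without writing out these checks, your argument supplies the omitted details.
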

There is some flexibility in the choice of the family \( (x_i)_{i \in I} \),
and correspondingly, there is also flexibility in the choice of \( (\Group_i)_{i \in I} \).
For subgroups \( \Group, \Groupb \le \Grp^{\op}(a,a) \times \Grpb(b,b) \), we say \( \Group \) is \emph{conjugate} to \( \Groupb \) if
\begin{equation*}
    \Groupb = \{\, (\alpha^{-1} g \alpha,\, \beta h \beta^{-1}) \mid (g,h) \in \Group \,\}
\end{equation*}
for some \( \alpha \in \Grp(a,a) \) and \( \beta \in \Grpb(b,b) \).
We write \( \Group \Conjugate \Groupb \) when \( \Group \) is conjugate to \( \Groupb \).
\begin{proposition}
    Let \( \Group, \Groupb \le \Grp^{\op}(a,a) \times \Grpb(b,b) \).
    Then \( \InducedProf{\Group} \cong \InducedProf{\Groupb} \) if and only if\/ \( \Group \) is conjugate to \( \Groupb \).
\qed
\end{proposition}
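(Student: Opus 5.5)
The plan is to reduce both directions to stabilisers of elements, using the fact that \( \InducedProf{\Group} \) is transitive. Write \( e_{\Group} \defeq \Group \in \InducedProf{\Group}(a,b) \) for the canonical element, the coset with \( \alpha = \ident_a \) and \( \beta = \ident_b \). First I record two facts. Every element of \( \InducedProf{\Group}(a',b') \) has the form \( \alpha \cdot e_{\Group} \cdot \beta \). Also \( \Stabiliser(e_{\Group}) = \Group \): the equation \( \alpha \cdot \Group \cdot \beta = \Group \) for \( \alpha \in \Grp(a,a) \), \( \beta \in \Grpb(b,b) \) says that the coset determined by \( (\alpha,\beta) \) equals the trivial coset, and this holds exactly when \( (\alpha,\beta) \in \Group \). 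This is the usual coset computation, carried out with the left/right conventions of the action \( \alpha \cdot X \cdot \beta \).

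\emph{(\(\Leftarrow\))} Suppose \( \Groupb \) is obtained from \( \Group \) by conjugation with \( (\alpha,\beta) \in \Grp(a,a) \times \Grpb(b,b) \). I define the candidate isomorphism
\begin{equation*}
  \Phi \colon \InducedProf{\Group} \longrightarrow \InducedProf{\Groupb},
  \qquad
  \Phi(\gamma \cdot \Group \cdot \delta) \defeq (\gamma\alpha') \cdot \Groupb \cdot (\beta'\delta),
\end{equation*}
where \( \alpha' \) and \( \beta' \) are \( \alpha \) or \( \alpha^{-1} \) (resp.\ \( \beta \) or \( \beta^{-1} \)), chosen to match the conjugation convention. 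This amounts to checking that \( \Groupb = \Stabiliser(\alpha'' \cdot e_{\Group} \cdot \beta'') \) for suitable \( \alpha'', \beta'' \). Hence \( \Phi \) is the map sending \( e_{\Groupb} \) to the element \( \alpha'' \cdot e_{\Group} \cdot \beta'' \), and its inverse is the map in the other direction. There are three things to verify: well-definedness (independence of the representative \( (\gamma,\delta) \)), naturality (immediate from the form of the formula), and that the inverse formula really is inverse. The key lemma I will use throughout is that \( \Stabiliser(\alpha\cdot x\cdot\beta) \) is the conjugate of \( \Stabiliser(x) \) by \( (\alpha,\beta) \) whenever \( \alpha,\beta \) are endomorphisms. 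This is a one-line computation from functoriality.

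\emph{(\(\Rightarrow\))} Let \( \Phi \colon \InducedProf{\Group} \cong \InducedProf{\Groupb} \) be a natural isomorphism. Set \( x \defeq \Phi_{a,b}(e_{\Group}) \in \InducedProf{\Groupb}(a,b) \). Naturality and bijectivity give \( \Stabiliser(x) = \Stabiliser(e_{\Group}) = \Group \). By transitivity, \( x = \alpha \cdot e_{\Groupb} \cdot \beta \) for some \( \alpha \in \Grp(a,a) \) and \( \beta \in \Grpb(b,b) \). The endpoints are forced to be \( a \) and \( b \), so these are automorphisms. By the key lemma, \( \Group = \Stabiliser(x) \) is the conjugate of \( \Stabiliser(e_{\Groupb}) = \Groupb \) by \( (\alpha,\beta) \). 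Since conjugacy is symmetric (invert \( \alpha,\beta \)), \( \Group \Conjugate \Groupb \).

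The only real obstacle is bookkeeping. The action on \( \Grp \) is contravariant, and the conjugation formula \( (\alpha^{-1} g \alpha, \beta h \beta^{-1}) \) mixes the \( \Grp^{\op} \) and \( \Grpb \) compositions. So the main care is in stating the key lemma with exactly matching conventions. I would do that computation first and then derive both directions from it.
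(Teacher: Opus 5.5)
Your proof is correct. The paper states this proposition without proof. Your orbit--stabiliser argument is the standard one, and it is the same device the paper uses in the proof of \cref{lem:total:equivalence-principle}. That device combines three facts: \( \Stabiliser(e_{\Group}) = \Group \); the stabiliser of \( \alpha \cdot x \cdot \beta \) is a conjugate of \( \Stabiliser(x) \); and a map out of a single-orbit profunctor is well-defined, and injective, exactly when the corresponding stabiliser inclusions hold.

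The only slip is in (\(\Leftarrow\)), where the map you describe as sending \( e_{\Groupb} \) to \( \alpha'' \cdot e_{\Group} \cdot \beta'' \) is \( \Phi^{-1} \), not \( \Phi \). Reading composites in \( \Grp \), the choice \( \alpha' = \alpha \), \( \beta' = \beta \) works, because then \( \Stabiliser(\alpha \cdot e_{\Groupb} \cdot \beta) = \Group \).
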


\subsection{MLL proofs and proof structures}
An \emph{\( \mathsf{MLL} \) formula}, or \emph{formula}, is given by:
\begin{equation*}
    \Formula,\Formulab,\Formulac
    \quad::=\quad
    \PropVar \mid \PropVar^{\bot} \mid \Formula \otimes \Formulab \mid \Formula \llpar \Formulab,
\end{equation*}
where \( \PropVar \) is a propositional variable and \( \PropVar^{\bot} \) is its negation.
For simplicity, we consider only formulas in negation normal form (i.e.~the negation \( ({-})^{\bot} \) is applicable to only propositional variables).
This paper studies the unit-free fragment, as in many other papers on this subject~\cite{Loader1994,Devarajan1999}.

A \emph{sequent} \( \Gamma \) is a finite sequence of formulas.
The concatenation of sequents \( \Gamma \) and \( \Delta \) is written as \( \Gamma, \Delta \).
The (one-sided) sequent calculus \( \mathsf{MLL+Mix} \) is defined by the rules in \cref{fig:sequent-calculus-mll}.
We focus on the cut-free proofs, so the cut rule is not allowed.
This is justified by the fact that the cut elimination process does not change the interpretation.
\begin{figure}[t]
    \begin{gather*}
        \dfrac{\mathstrut}{\vdash \PropVar, \PropVar^\bot}(\mathsf{Ax})
        \quad
        \dfrac{\vdash \Gamma, A \qquad \vdash \Delta, B}{\vdash \Gamma, \Delta, A \otimes B}(\otimes)
        \quad
        \dfrac{\vdash \Gamma, A, B}{\vdash \Gamma, A \llpar B}(\llpar)
        \\[5pt]
        \dfrac{\vdash \Gamma, A, B, \Delta}{\vdash \Gamma, B, A, \Delta}(\mathsf{Ex})
        \quad
        \dfrac{\vdash \Gamma \qquad \vdash \Delta}{\vdash \Gamma, \Delta}(\mathsf{Mix})
\end{gather*}
    \vspace{-3ex}
    \caption{The proof rules of the sequent calculus \( \mathsf{MLL+Mix} \).}
    \label{fig:sequent-calculus-mll}
\end{figure}

A \emph{proof structure} \( \pi \) of a sequent \( \Gamma \) is an involutive function on occurrences of propositional variables that maps a positive occurrence of \( \PropVar \) to its negative occurrence.
A proof structure can be presented as a graph, consisting of the syntax tree together with extra links connecting \( \PropVar \) and \( \PropVar^{\bot} \) (see \cref{fig:pre:proof-structure}).
These extra links are called \emph{axiom links}.
\begin{figure}[t]
    \begin{tikzpicture}[yscale=0.6]
	\begin{pgfonlayer}{nodelayer}
		\node [style=none] (00) at (0.5, -0.7) {};
		\node [style=none] (0) at (0.5,0) {$\llpar$};
		\node [style=none] (1) at (-1, 1) {$\otimes$};
		\node [style=none] (2) at (-2, 2) {$\llpar$};
		\node [style=none] (3) at (-2.5, 3) {$\PropVar_1$};
		\node [style=none] (4) at (-1.5, 3) {$\PropVar_2$};
		\node [style=none] (5) at (-0.5, 2) {$\PropVar_2^\bot$};
		\node [style=none] (6) at (2, 1) {$\llpar$};
		\node [style=none] (7) at (1, 2) {$\llpar$};
		\node [style=none] (8) at (0.5, 3) {$\PropVar_2$};
		\node [style=none] (9) at (1.5, 3) {$\PropVar_2^\bot$};
		\node [style=none] (10) at (2.5, 2) {$\PropVar_1^\bot$};
		\node [style=none] (11) at (-2.5, 4.5) {};
		\node [style=none] (12) at (-1.5, 4) {};
		\node [style=none] (13) at (-0.5, 3.5) {};
		\node [style=none] (14) at (0.5, 3.5) {};
		\node [style=none] (15) at (1.5, 4) {};
		\node [style=none] (16) at (2.5, 4.5) {};
	\end{pgfonlayer}
	\begin{pgfonlayer}{edgelayer}
		\draw (00) to (0);
		\draw (0) to (1);
		\draw (1) to (2);
		\draw (1) to (5);
		\draw (2) to (3);
		\draw (2) to (4);
		\draw (0) to (6);
		\draw (6) to (7);
		\draw (6) to (10);
		\draw (7) to (8);
		\draw (7) to (9);
		\draw (3) to (11.center);
		\draw (4) to (12.center);
		\draw (5) to (13.center);
		\draw (8) to (14.center);
		\draw (9) to (15.center);
		\draw (10) to (16.center);
		\draw (11.center) to (16.center);
		\draw (12.center) to (15.center);
		\draw (13.center) to (14.center);
	\end{pgfonlayer}
\end{tikzpicture}
     \caption{A proof structure of \( ((\PropVar_1 \llpar \PropVar_2) \otimes \PropVar_2^{\bot}) \llpar ((\PropVar_2 \llpar \PropVar_2^\bot) \llpar \PropVar_1^{\bot}) \).}
    \label{fig:pre:proof-structure}
\end{figure}

\begin{figure}[t]
    \begin{tikzpicture}[yscale=0.6]
	\begin{pgfonlayer}{nodelayer}
		\node [style=none] (00) at (0.5, -0.7) {};
		\node [style=none] (0) at (0.5,0) {$\llpar$};
		\node [style=none] (1) at (-1, 1) {$\otimes$};
		\node [style=none] (2) at (-2, 2) {$\llpar$};
		\node [style=none] (3) at (-2.5, 3) {$\PropVar_1$};
		\node [style=none] (4) at (-1.5, 3) {$\PropVar_2$};
		\node [style=none] (5) at (-0.5, 2) {$\PropVar_2^\bot$};
		\node [style=none] (6) at (2, 1) {$\llpar$};
		\node [style=none] (7) at (1, 2) {$\llpar$};
		\node [style=none] (8) at (0.5, 3) {$\PropVar_2$};
		\node [style=none] (9) at (1.5, 3) {$\PropVar_2^\bot$};
		\node [style=none] (10) at (2.5, 2) {$\PropVar_1^\bot$};
		\node [style=none] (11) at (-2.5, 4) {};
		\node [style=none] (12) at (-1.5, 3.5) {};
		\node [style=none] (13) at (-0.5, 3.5) {};
		\node [style=none] (14) at (0.5, 3.5) {};
		\node [style=none] (15) at (1.5, 3.5) {};
		\node [style=none] (16) at (2.5, 4) {};
	\end{pgfonlayer}
	\begin{pgfonlayer}{edgelayer}
		\draw (00) to (0);
		\draw (0) to (1);
		\draw (1) to (2);
		\draw (1) to (5);
		\draw (2) to (3);
		\draw (2) to (4);
		\draw (0) to (6);
		\draw (6) to (7);
		\draw (6) to (10);
		\draw (7) to (8);
		\draw (7) to (9);
		\draw (3) to (11.center);
		\draw (4) to (12.center);
		\draw (5) to (13.center);
		\draw (8) to (14.center);
		\draw (9) to (15.center);
		\draw (10) to (16.center);
		\draw (11.center) to (16.center);
		\draw (12.center) to (13.center);
		\draw (14.center) to (15.center);
	\end{pgfonlayer}
\end{tikzpicture}
     \caption{An incorrect proof structure of \( ((\PropVar_1 \llpar \PropVar_2) \otimes \PropVar_2^{\bot}) \llpar ((\PropVar_2 \llpar \PropVar_2^\bot) \llpar \PropVar_1^{\bot}) \).  For a Danos-Regnier switching choosing the right branch for the left-most \( \llpar \), there is a cycle \( \llpar \)--\( X_2 \)--\( X_2^{\bot} \)--\( \otimes \)--\( \llpar \) on the left.}
    \label{fig:pre:incorrect-proof-structure}
\end{figure}

A sequent calculus proof induces a proof structure.
For example,
the following proof
\begin{equation*}
    \infer{\vdash ((\PropVar_1 \llpar \PropVar_2) \otimes \PropVar_2^{\bot}) \llpar ((\PropVar_2 \llpar \PropVar_2^\bot) \llpar \PropVar_1^{\bot})}{
        \infer{\vdash (\PropVar_1 \llpar \PropVar_2) \otimes \PropVar_2^{\bot}, \PropVar_2, \PropVar_2^\bot, \PropVar_1^{\bot}}{
            \infer{\vdash \PropVar_1 \llpar \PropVar_2, \PropVar_2^{\bot}, \PropVar_1^\bot}{
                \infer{\vdash \PropVar_1, \PropVar_2, \PropVar_2^{\bot}, \PropVar_1^\bot}{
                    \vdash \PropVar_1, \PropVar_1^\bot
                    &
                    \vdash \PropVar_2, \PropVar_2^\bot
                }
            }
            &
            \vdash \PropVar_2, \PropVar_2^\bot
        }
    }
\end{equation*}
induces the proof structure in \cref{fig:pre:proof-structure}.
However, a proof structure does not necessarily correspond to a proof.
An example of a proof structure with no corresponding proof is found in \cref{fig:pre:incorrect-proof-structure}.

A proof structure is \emph{sequentialisable} or \emph{correct} if it comes from a sequent calculus proof, and a correct proof structure is called a \emph{proof-net}.
A natural question asks when a proof structure is correct.
A condition characterizing the correctness is called a \emph{correctness criterion} and many correctness criteria have been proposed.
\emph{Danos-Regnier criterion} is a famous correctness criterion~\cite{Danos1989,Fleury1994}.
A \emph{Danos-Regnier switching} is a choice of left/right for each occurrence of \( \llpar \).
The graph associated to a Danos-Regnier switching is obtained by removing every edge between a \( \llpar \)-node and its unchosen child.
\begin{theorem}[Danos and Regnier~\cite{Danos1989}]\label{thm:danos-regnier}
    A proof structure is correct with respect to \( \mathsf{MLL} \) if and only if, for every Danos-Regnier switching, the associated graph is a tree.
\qed
\end{theorem}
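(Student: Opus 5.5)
We prove the two directions of \cref{thm:danos-regnier} separately, arguing by induction on the number of nodes of the proof structure (counting \( \otimes \)- and \( \llpar \)-nodes and axiom links).

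\emph{Soundness} (correct \( \Rightarrow \) every switching graph is a tree). We induct on the sequent calculus proof inducing the structure.
\begin{itemize}
    \item An axiom gives a single edge between \( \PropVar \) and \( \PropVar^\bot \), which is a tree.
    \item For \( (\llpar) \), a switching of the new structure is a switching of the premise together with one new edge from the \( \llpar \)-node to the chosen child. Adding a pendant vertex to a tree yields a tree.
    \item For \( (\otimes) \), the two premise switching graphs are disjoint trees. The new \( \otimes \)-node is joined to one vertex of each, so the result is again a tree.
    \item \( (\mathsf{Ex}) \) does not change the structure.
\end{itemize}
Strictly, in plain \( \mathsf{MLL} \) the \( (\mathsf{Mix}) \) rule is excluded. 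If it were allowed, it would produce a forest rather than a tree, which is why the theorem concerns \( \mathsf{MLL} \).

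\emph{Sequentialisation} (every switching graph is a tree \( \Rightarrow \) correct). Suppose every switching graph is acyclic and connected. We look at the conclusions of the structure, i.e.\ the roots of its formula trees.
\begin{itemize}
    \item If some conclusion is a \( \llpar \)-node, we remove it. Every switching of the smaller structure extends to a switching of the original by choosing one branch at the removed node. Removing the resulting pendant vertex from a tree leaves a tree. By induction the smaller structure comes from a proof, and we finish with the \( (\llpar) \) rule.
    \item If all conclusions are atoms, connectedness forces the structure to be a single axiom link.
    \item Otherwise every non-atomic conclusion is a \( \otimes \)-node. We need a \emph{splitting} \( \otimes \): one whose removal disconnects the structure into two parts, each satisfying the criterion. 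These two parts then sequentialise by induction and combine via \( (\otimes) \).
\end{itemize}
The splitting condition makes sense: since each switching graph is a tree, deleting a \( \otimes \)-node always leaves exactly two components in that switching. The difficulty is that these components may depend on the switching, because a \( \llpar \)-node higher up can connect the two sides in one switching but not in another.

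The main obstacle is the splitting tensor lemma. I would follow the standard argument using empires. For a formula occurrence \( A \), its empire is the intersection, over all switchings, of the set of vertices connected to \( A \) without passing through the edge below \( A \). The plan has three steps.
\begin{enumerate}
    \item Show that the empire of \( A \) is the largest sub-structure with conclusion \( A \) that itself satisfies the criterion.
    \item Choose a terminal \( \otimes \)-node \( A\otimes B \) whose empires \( e(A) \), \( e(B) \) are maximal, say with respect to inclusion.
    \item Suppose \( A\otimes B \) does not split. Then some vertex outside \( e(A)\cup e(B) \) is reachable from it. Following a \( \llpar \)-node that lies in one empire but has a premise outside it, we reach another terminal tensor whose empire strictly contains \( e(A) \) or \( e(B) \). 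This contradicts maximality, and the contradiction argument here is where care is needed.
\end{enumerate}

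Finally, acyclicity together with a counting argument on edges versus vertices shows that connectedness is preserved in both parts after the split. This completes the induction.
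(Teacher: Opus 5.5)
The paper gives no proof of this theorem; it is quoted from Danos and Regnier, so your outline has to stand on its own. Your soundness direction and the easy sequentialisation cases (a terminal \(\llpar\), or all conclusions atomic) are correct. The final step also needs no counting argument: if deleting a terminal tensor disconnects the structure, each part's switching graph is one of the two components left when a degree-two node is deleted from a switching tree. The genuine gap is the splitting-tensor lemma, which is the whole difficulty, and steps 1--3 assert it rather than prove it.

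Step 1 (that \(e(A)\) is the largest sub-net having \(A\) as a conclusion) needs real switching-modification arguments. You must show that \(e(A)\) is closed under premises, and that it can only be left through the edge below \(A\) or through a \(\llpar\)-link with exactly one premise inside. You only state it. This also shows that step 3 misdescribes the border: a \(\llpar\)-node lying in \(e(A)\) always has both premises in \(e(A)\). The relevant configuration is \(P=C\llpar D\notin e(A)\) with \(C\in e(A)\) and \(D\notin e(A)\). From \(P\) lying above a premise \(G\) of another terminal tensor you only get \(C\in e(A)\cap e(G)\), not \(e(A)\subseteq e(G)\). For the inclusion you need Girard's nesting lemma:
if \(A\notin e(G)\) and \(G\notin e(A)\), then \(e(A)\cap e(G)=\emptyset\);
if \(A\in e(G)\) and \(G\notin e(A)\), then \(e(A)\subseteq e(G)\).
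It follows from step 1, since the union of two overlapping sub-nets is a sub-net, but this has to be said. You also need \(G\notin e(A)\), which holds because otherwise \(P\in e(A)\).

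More seriously, the choice in step 2 is wrong. Read literally, it can be impossible. Take the net of \(\vdash X\otimes Y,\ (X^\bot\llpar Y^\bot)\otimes Z,\ Z^\bot\otimes W,\ W^\bot\) with its forced axiom links.
\(e(Z)=\{Z,Z^\bot,Z^\bot\otimes W,W,W^\bot\}\).
\(e(Z^\bot)\) is everything except \(Z^\bot\otimes W\), \(W\) and \(W^\bot\).
These two are the maximal empires of premises of terminal tensors. Their partners \(e(X^\bot\llpar Y^\bot)\) and \(e(W)\) are strictly contained in \(e(Z^\bot)\) and \(e(Z)\) respectively. So no terminal tensor has both premise empires maximal.

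If you instead maximise only \(e(A)\), your contradiction covers only a border \(\llpar\) hanging off \(e(A)\). One hanging off \(e(B)\) yields \(e(B)\subseteq e(G)\) strictly, which contradicts nothing.

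The repair is to choose \(A\) with \(e(A)\) maximal among all premises of terminal tensors. In the \(e(B)\) case, argue as follows.
The nesting lemma gives \(B\in e(G)\).
This forces the tensor link \(A\otimes B\), and hence \(A\), into \(e(G)\).
Moreover \(G\notin e(A)\), since otherwise \(C\in e(A)\cap e(B)=\emptyset\).
So the nesting lemma again gives \(e(A)\subseteq e(G)\), strictly since \(G\in e(G)\setminus e(A)\). This contradicts the maximality of \(e(A)\).
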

\begin{theorem}[Fleury and Retor{\'e}~\cite{Fleury1994}]\label{thm:weak-danos-regnier}
    A proof structure is correct with respect to \( \mathsf{MLL+Mix} \) if and only if, for every Danos-Regnier switching, the associated graph is acyclic.
\qed
\end{theorem}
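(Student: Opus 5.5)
The statement just before this point is Fleury and Retoré's criterion: a proof structure is correct for \( \mathsf{MLL+Mix} \) iff every Danos-Regnier switching graph is acyclic. The plan is to prove the two directions separately, following the standard sequentialisation argument for the Danos-Regnier theorem (\cref{thm:danos-regnier}). Acyclicity replaces the tree condition, and \( \mathsf{Mix} \) absorbs the loss of connectedness.

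\emph{Soundness.} Proceed by induction on a sequent calculus derivation in \( \mathsf{MLL+Mix} \) and show that the induced proof structure is acyclic under every switching.
\begin{itemize}
    \item Axiom: the graph is a single edge.
    \item \( \mathsf{Ex} \): it does not change the structure.
    \item \( \llpar \): a switching keeps exactly one premise edge, so a new pendant edge is attached to an acyclic graph.
    \item \( \otimes \): it joins two disjoint acyclic graphs by one new node with one edge into each component, so no cycle arises.
    \item \( \mathsf{Mix} \): it is a disjoint union of two acyclic graphs.
\end{itemize}

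\emph{Sequentialisation.} Proceed by induction on the number of nodes. Suppose every switching graph of \( \pi \) is acyclic.
\begin{itemize}
    \item If the underlying structure, ignoring switchings, is disconnected, each connected component is itself a proof structure of a subsequent satisfying the criterion. By induction each component is sequentialisable, and \( \mathsf{Mix} \) (with \( \mathsf{Ex} \)) combines them.
    \item If some conclusion is a \( \llpar \), remove it. The resulting structure still satisfies the criterion, since its switchings are restrictions of switchings of \( \pi \). Apply induction and then the \( \llpar \) rule.
    \item If all conclusions are axioms, the structure is a single axiom link, given connectedness.
    \item Otherwise, all conclusions are \( \otimes \) or atoms. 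Find a \emph{splitting} \( \otimes \): a terminal \( \otimes \) whose removal disconnects the structure into the part above its left premise and the part above its right premise. Each part satisfies the criterion, induction applies to each, and the \( \otimes \) rule finishes.
\end{itemize}

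\emph{Main obstacle.} The hard part is the splitting-tensor lemma: in a connected proof structure with all conclusions atoms or \( \otimes \), acyclic under all switchings, some terminal \( \otimes \) splits.

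I would follow Danos's empire argument. For a terminal tensor \( T \) with premise \( A \), define the empire of \( A \) as the intersection, over all switchings, of the component of \( A \) in the switching graph minus the edge \( A \)–\( T \). Then establish two facts:
\begin{itemize}
    \item Empires are closed in the sense that a \( \llpar \) lies inside only if both of its premises do.
    \item For a non-splitting tensor, a \( \llpar \) node whose switching connects the two sides can be used to define an order on terminal tensors. Acyclicity makes this order strict, so a maximal element exists, and that element must split.
\end{itemize}

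One subtlety must be handled separately from the usual proof. Without connectedness of every switching graph, a switching may disconnect the two sides of \( T \) even when \( T \) is not splitting. The argument should therefore work with the underlying connected structure and use acyclicity only to rule out the cycles that would arise in the comparison step. Alternatively, one could first add auxiliary \( \otimes \)-links between components to reduce to the \( \mathsf{MLL} \) case of \cref{thm:danos-regnier}, but ensuring connectedness of every switching this way is delicate, so I would prefer the direct empire argument.
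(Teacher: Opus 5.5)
Your soundness direction is correct and standard. So is the outer induction of the sequentialisation: disconnected underlying graph, then terminal \(\llpar\), then a single axiom link, then a splitting terminal \(\otimes\). This reduces everything to a splitting-tensor lemma for structures that are acyclic, but not necessarily connected, under every switching. The paper gives no proof of this statement; it imports it from Fleury and Retor\'e. So that lemma carries all the content of your argument, and it is where the gap lies.

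You prove the lemma with the MLL empire argument, whose basic properties rely on every switching graph being a \emph{tree}. Take the usual proof that a \(\llpar\)-node \(P \in e(A)\) has both premises in \(e(A)\). Suppose a switching \(S\) selects the right premise of \(P\) and the left premise \(L\) is not in the component of \(A\). Deleting the edge between \(A\) and \(T\) cuts the tree \(S\) into exactly two pieces, so \(L\) is connected to \(T\). Flipping \(P\) to \(L\) then closes a cycle through that edge. Under mere acyclicity, \(L\) can sit in a third component, and no cycle arises.

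Concretely, take the \(\mathsf{MLL+Mix}\) proof net of \(\vdash ((X \llpar Y) \otimes Z) \otimes W, X^{\bot}, Y^{\bot}, Z^{\bot}, W^{\bot}\). It is obtained by mixing the \(X\)- and \(Y\)-axioms, applying \(\llpar\), then \(\otimes\) with the \(Z\)- and \(W\)-axioms. Its underlying graph is connected and it has no terminal \(\llpar\), so it falls in your last case. Let \(T\) be the terminal \(\otimes\) with premises \(A = (X \llpar Y) \otimes Z\) and \(W\). The two switchings give \(\{A, X \llpar Y, Z, Z^{\bot}, X, X^{\bot}\}\) and \(\{A, X \llpar Y, Z, Z^{\bot}, Y, Y^{\bot}\}\) as components of \(A\). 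Hence \(e(A) = \{A, X \llpar Y, Z, Z^{\bot}\}\) contains the \(\llpar\)-node but neither of its premises, so your first ``fact'' is false.

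The example also breaks the steps built on that fact. \(T\) does split, yet \(e(A) \cup e(W) \cup \{T\}\) misses \(X, X^{\bot}, Y, Y^{\bot}\). The only edges leaving this region are the premise edges of a \(\llpar\) lying above \(T\) itself. So you cannot characterise splitting by the empires covering the structure. Nor can you find an exit \(\llpar\) whose conclusion leads down to a \emph{different} terminal \(\otimes\) with strictly larger empire.

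Your last paragraph recognises that Mix causes trouble, but it only asserts that the argument ``should'' be adapted. That adaptation is precisely the content of the Fleury--Retor\'e theorem. You need a genuinely Mix-aware splitting lemma. One option is Bellin's reworked theory of subnets for MLL with MIX. Another is Retor\'e's perfect-matching formulation, where the criterion becomes uniqueness of a perfect matching and Kotzig's theorem yields a splitting edge. Otherwise, cite the result as the paper does.
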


\subsection{Interpretation of Proof Structures}\label{sec:pre:interpretation}
It is well-known that an \( \mathsf{MLL} \) proof can be interpreted in any \( * \)-autonomous category, so a proof-net has an interpretation in any \(*\)-autonomous category.
However, interpreting a possibly incorrect proof structure needs an additional structure.
A sufficient structure to interpret a proof structure is a compact closed category, which is a \( * \)-autonomous category with \( \Formula \llpar \Formulab \cong \Formula \otimes \Formulab \), so it admits
\begin{equation*}
    \dfrac{\vdash \Gamma, \Formula, \Formulab}{\vdash \Gamma, \Formula \otimes \Formulab}(\otimes{=}\llpar).
\end{equation*}
In the (wrong) proof system \( \mathsf{MLL+Mix}+(\otimes{=}\llpar) \), every proof structure has a corresponding (possibly wrong) proof: Given a proof structure, prepare an instance of the axiom rule for each axiom link, take their MIX, and introduce logical connectives by using \( (\llpar) \), \( (\otimes{=}\llpar) \) and \( (\mathsf{Ex}) \).
By the coherence of compact closed categories, the interpretation of a proof in \( \mathsf{MLL+Mix}+(\otimes{=}\llpar) \) is determined by its proof structure.
Since \( \ProfCat \) is compact closed, a proof structure can be interpreted in \( \ProfCat \).

The interpretation of a formula is straightforward: it is obtained by replacing \( \otimes \) and \( \llpar \) with \( \times \) and \( ({-})^{\bot} \) with \( ({-})^{\op} \).
For example, the interpretation of \( ((\PropVar_1 \llpar \PropVar_2) \otimes \PropVar_2^{\bot}) \llpar ((\PropVar_2 \llpar \PropVar_2^\bot) \llpar \PropVar_1^{\bot}) \) is
\begin{equation*}
    ((\Grp_1 \times \Grp_2) \times \Grp_2^{\op}) \times ((\Grp_2 \times \Grp_2^\op) \times \Grp_1^{\op})
\end{equation*}
when the interpretation of \( \PropVar_i \) is \( \Grp_i \).
The interpretation of a sequent is similar.

The interpretation of a proof structure is slightly more involved but still we have two convenient ways to describe it.

The first one comes from the above description of the ``proof'' corresponding to a proof structure: it is the mix of axiom rules followed by exchange rules and introductions of logical connectives.
The axiom \( \vdash \PropVar^{\bot}, \PropVar \) is interpreted as the unit of the compact structure, which is \( \Grp({-}, {+}) \colon I \profarrow \Grp^{\op} \times \Grp \) in the case of \( \ProfCat \) (where \( \Grp \) is the interpretation of the propositional variable \( \alpha \)).
So the mix \( \vdash \PropVar_{\xi_1}^{\bot}, \PropVar_{\xi_1}, \dots, \PropVar_{\xi_n}^{\bot}, \PropVar_{\xi_n} \) of axiom rules gives a profunctor \( \Profunctor \colon I \profarrow \Grp_{\xi_1}^{\op} \times \Grp_{\xi_1} \times \dots \times \Grp_{\xi_n}^{\op} \times \Grp_{\xi_n} \) defined by \( \Profunctor(\star, (c_1, b_1, \dots, c_n, b_n)) \defeq \Grp_{\xi_1}(c_1, b_1)  \times \dots \times \Grp_{\xi_n}(c_n, b_n) \).
The interpretations of the exchange rule and the rules for logical connectives are just the symmetry and associativity laws of the tensor product \( \times \).
Hence, the interpretation of a proof structure is \( \Grp_{\xi_1}({-}, {+}) \times \dots \times \Grp_{\xi_n}({-}, {+}) \) followed by the symmetry and associativity.
\begin{example}
    Consider the proof structure in \cref{fig:pre:proof-structure}.
    The interpretation of the sequent is \( ((\Grp_1 \times \Grp_2) \times \Grp_2^{\op}) \times ((\Grp_2 \times \Grp_2^\op) \times \Grp_1^{\op}) \), and we have the canonical isomorphisms
    \begin{equation*}
        ((\mathbb{X} \times \mathbb{Y}) \times \mathbb{Z}) \times ((\mathbb{U} \times \mathbb{V}) \times \mathbb{W})
        \quad\cong\quad
        \mathbb{X} \times \mathbb{W} \times \mathbb{Y} \times \mathbb{V} \times \mathbb{Z} \times \mathbb{U}.
    \end{equation*}
    The interpretation of the proof structure in \cref{fig:pre:proof-structure} followed by the above isomorphism is
    \begin{equation*}
        \Grp_1({-}, {+}) \times \Grp_2({-}, {+}) \times \Grp_2({-}, {+}).
    \end{equation*}
A more detailed description of the interpretation of the proof structure is obtained by analysing the isomorphism
    \(
        ((\mathbb{X} \times \mathbb{Y}) \times \mathbb{Z}) \times ((\mathbb{U} \times \mathbb{V}) \times \mathbb{W})
        \quad\cong\quad
        \mathbb{W} \times \mathbb{X} \times \mathbb{V} \times \mathbb{Y} \times \mathbb{Z} \times \mathbb{U}
    \).
    This isomorphism in \( \ProfCat \) originates from an isomorphism in \( \GroupoidCat \).
    In general, a functor \( \Functor \colon \Grp \longrightarrow \Grpb \) induces profunctors \( \Functor_* \colon \Grp \profarrow \Grpb \) and \( \Functor^* \colon \Grpb \profarrow \Grp \) given by \( \Functor_*(a, b) \defeq \Grpb(F(a), b) \) and \( \Functor^*(b,a) \defeq \Grpb(b, F(a)) \).
    The symmetry and associativity laws in \( \Prof \) come from functors, so does the above isomorphism.
    So, the interpretation \( \chi \colon I \profarrow ((\Grp_1 \times \Grp_2) \times \Grp_2^{\op}) \times ((\Grp_2 \times \Grp_2^\op) \times \Grp_1^{\op}) \) is \( \chi(\star, (((x, y), z), ((u, v), w))) \cong \Grp_1(w,x) \times \Grp_2(v,y) \times \Grp_2(z,u) \).
\end{example}

The second one is based on annotations on proof structures in the spirit of \emph{experiments} in the relational interpretation~\cite{Girard1987}.
We call this scheme the \emph{profunctorial experiments}, which shall be explained in \cref{sec:experiments}.

 \section{Stability as a Correctness Criterion}
The notion of stability was introduced in the domain theory~\cite{Berry1978}, and Taylor~\cite{Taylor1989} and Fiore~et~al.~\cite{Fiore2022,Fiore2024,Fiore2024a} studied this notion in \( \ProfCat \) using \emph{creed} and \emph{kit}.
This section shows that the stability provides a new semantic correctness criterion for \( \mathsf{MLL}+\mathsf{Mix} \), which we call the \emph{creed-kit criterion}.
The creed-kit criterion is one of the key steps towards the full definability proof, and we think it should be of independent interest.

\subsection{Creed and Kit}
Just as morphisms from a singleton set play a crucial role in the category \( \Set \), profunctors from \( I \) (or to \( I \)) play an important role in \( \Prof \).
A profunctor \( \Profunctor \colon I \profarrow \Grp \) can be expressed as a sum
\( \Profunctor \cong \coprod_i \InducedProf{\Group_i} \) of (profunctors induced by) subgroups \( \Group_i \le \Grp(a_i,a_i) \).
\emph{Creed} proposed by Taylor~\cite{Taylor1989} and \emph{kit} proposed by Fiore~et~al.~\cite{Fiore2022,Fiore2024} are both designed to appropriately constrain these subgroups \( \Group_i \).

\begin{definition}[Creed~\cite{Taylor1989} and Kit~\cite{Fiore2022,Fiore2024}]
    Let \( \Profunctor \colon I \profarrow \Grp \).
    The \emph{creed of \( \Profunctor \)}, written \( \CreedOf \Profunctor \), is a family of subsets \( (\CreedOf \Profunctor)_a \subseteq \Grp(a,a) \) parameterized by objects \( a \in \Grp \) defined by
    \begin{align*}
        (\CreedOf \Profunctor)_a
        &\quad\defeq\quad
        \{\, \alpha \in \Grp(a,a) \mid \exists x \in \Profunctor(\star, a).\: x \cdot \alpha = x \,\}
        \\
        &\quad=\quad
        \textstyle
        \bigcup_{x \in \Profunctor(a)} \Stabiliser(x).
    \end{align*}
    The \emph{kit of\/ \( \Profunctor \)}, written \( \KitOf \Profunctor \), is a family of subsets \( (\KitOf \Profunctor)_a \subseteq \{ \Group \mid \Group \le \Grp(a,a) \} \) of subgroups of \( \Grp(a,a) \) parameterized by objects \( a \in \Grp \) defined by
    \begin{equation*}
        (\KitOf \Profunctor)_a
        \quad\defeq\quad
        \{\, \Stabiliser(x) \mid x \in \Profunctor(\star, a) \,\}.
    \end{equation*}
    Obviously, \( (\CreedOf \Profunctor)_a = \bigcup (\KitOf \Profunctor)_a \).
    The creed and kit of \( \Profunctorb \colon \Grp \profarrow I \) is defined similarly.
    A family \( C = (C_a)_a \) of subsets \( C_a \subseteq \Grp(a,a) \) is a \emph{creed} if \( C = (\CreedOf \Profunctor) \) for some \( \Profunctor \colon I \profarrow \Grp \).
    Similarly, a \emph{kit} is a family obtained as \( (\KitOf \Profunctor) \).
\qed
\end{definition}
Note that \( \CreedOf \Profunctor \) and \( \KitOf \Profunctor \) are not defined (at least at this moment) for general profunctors \( \Profunctor \colon \Grp \profarrow \Grpb \) (with \( \Grp \neq I \neq \Grpb \)).
The definitions given above differ from the original one, but ours are equivalent to theirs as expected (\cref{prop:creed:equivalence}).
\begin{proposition}\label{prop:creed:equivalence}
    Let \( \Grp \) be a groupoid.
    \begin{enumerate}
        \item A family \( C = (C_a \subseteq \Grp(a,a))_a \) is a creed if and only if
        \begin{enumerate}
\item \( \alpha \in C_a \) implies \( \alpha^k \in C_a \) for every \( k \in \Int \), and
            \item \( \alpha \in C_a \) and \( \beta \in \Grp(a,b) \) implies \( \beta^{-1} ; \alpha ; \beta \in C_b \).
        \end{enumerate}
        \item A family \( K = (K_a \subseteq \{ \Group \mid \Group \le \Grp(a,a) \})_{a} \) is a kit if and only if \( K \) is closed under the conjugation in the sense that, if \( \Group \in K_a \) and \( \beta \in \Grp(a,b) \), then \( \beta^{-1} \cdot \Group \cdot \beta \in K_b \), where \( \beta^{-1} \cdot \Group \cdot \beta = \{ \beta^{-1} ; \alpha ; \beta \mid \alpha \in \Group \} \).
\qed
    \end{enumerate}
\end{proposition}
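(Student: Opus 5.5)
Both parts split into necessity, which is a direct computation with stabilisers, and sufficiency, which comes from building a suitable profunctor as a sum of atoms \( \InducedProf{\Group} \) using \cref{prop:prof-decomposition}.

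\emph{Necessity.} Let \( \Profunctor \colon I \profarrow \Grp \) and \( x \in \Profunctor(\star,a) \), and write \( \Profunctor(a) \) for \( \Profunctor(\star,a) \). Then \( \Stabiliser(x) \) is a subgroup of \( \Grp(a,a) \). For \( \beta \in \Grp(a,b) \) put \( y \defeq x \cdot \beta \in \Profunctor(b) \). For \( \alpha \in \Grp(b,b) \) we have \( y \cdot \alpha = y \) iff \( x \cdot (\beta \alpha \beta^{-1}) = x \). Hence
\[ \Stabiliser(y) = \beta^{-1} \cdot \Stabiliser(x) \cdot \beta . \]
This gives closure of \( \KitOf\Profunctor \) under conjugation. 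Since \( (\CreedOf\Profunctor)_a = \bigcup(\KitOf\Profunctor)_a \) is a union of subgroups, it is closed under integer powers, which is (a). Applying the same conjugation identity elementwise gives (b).

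\emph{Sufficiency for kits.} Let \( K \) be closed under conjugation. Define \( \Profunctor \defeq \coprod_{a}\coprod_{\Group \in K_a} \InducedProf{\Group} \), where \( \InducedProf{\Group} \colon I \profarrow \Grp \) is the atom with \( \InducedProf{\Group}(b) = \{ \Group \cdot \beta \mid \beta \in \Grp(a,b) \} \), the right cosets transported along \( \beta \). The key computation is the stabiliser of an element \( \Group\cdot\beta \) in \( \InducedProf{\Group}(b) \). For \( \gamma \in \Grp(b,b) \) we have \( \Group\beta\gamma = \Group\beta \) iff \( \beta\gamma\beta^{-1} \in \Group \). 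So the stabiliser is exactly \( \beta^{-1}\cdot\Group\cdot\beta \). Therefore \( (\KitOf\Profunctor)_b \) consists exactly of conjugates of members of \( K \), and by closure this equals \( K_b \). Here the inclusion \( \supseteq \) uses the coset \( \Group \) itself with \( \beta = \ident \). One should check size: the sum is indexed by a set as long as \( \Grp \) is small, since \( K_a \) is a set of subgroups.

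\emph{Sufficiency for creeds.} This is the step I expect to need the most care. Given \( C \) satisfying (a) and (b), let \( K_a \defeq \{ \langle\alpha\rangle \mid \alpha \in C_a \} \), the cyclic subgroups generated by elements of \( C_a \). By (b), \( K \) is closed under conjugation, because \( \beta^{-1}\langle\alpha\rangle\beta = \langle \beta^{-1}\alpha\beta\rangle \). So by part 2, \( K = \KitOf\Profunctor \) for some \( \Profunctor \). Then \( (\CreedOf\Profunctor)_a = \bigcup_{\alpha \in C_a}\langle\alpha\rangle \). This contains \( C_a \) because \( \alpha \in \langle\alpha\rangle \), and it is contained in \( C_a \) by (a). The subtle point is exactly here: condition (a), closure under all integer powers including \( k=0 \) and negative \( k \), is what makes the cyclic-subgroup union equal to \( C_a \). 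Note that when \( C_a = \emptyset \) the construction correctly yields \( \Profunctor(a)=\emptyset \) on that component.
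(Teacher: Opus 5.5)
Your proposal is correct and follows essentially the same route as the paper: you obtain necessity from the conjugation identity for stabilisers, and sufficiency from a coproduct of atoms $\InducedProf{\Group}$, using the cyclic subgroups $\langle\alpha\rangle$, $\alpha \in C_a$, in the creed case. The only difference is that you derive the creed case from the kit case by taking unions. The paper instead checks the creed of $\coprod_{a,\alpha \in C_a} \InducedProf{\Group_{a,\alpha}}$ directly, which is the same profunctor up to multiplicities, so this is a harmless reorganisation.
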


\begin{proposition}\label{prop:creed:induced-prof}
    Let \( \Grp \) be a groupoid, \( a \in \Grp \) and \( \Group \subgroup \Grp(a,a) \).
    Then \( (\CreedOf{\InducedProf{\Group}})_b = \{\, \beta^{-1} \cdot \alpha \cdot \beta \mid \alpha \in \Group, \beta \in \Grp(a,b) \,\} \) and \( (\KitOf{\InducedProf{\Group}})_b = \{\, \beta^{-1} \cdot \Group \cdot \beta \mid \beta \in \Grp(a,b) \,\} \).
\end{proposition}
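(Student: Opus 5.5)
We prove both equalities by directly computing the stabiliser of every element of \( \InducedProf{\Group} \), since by definition \( (\KitOf \InducedProf{\Group})_b = \{ \Stabiliser(x) \mid x \in \InducedProf{\Group}(\star,b) \} \) and \( (\CreedOf \InducedProf{\Group})_b = \bigcup (\KitOf\InducedProf{\Group})_b \). Here \( \Group \le \Grp(a,a) \) is viewed as a subgroup of \( I^{\op}(\star,\star)\times\Grp(a,a) \), with the trivial first component suppressed. So an element of \( \InducedProf{\Group}(\star,b) \) is a right coset-like set \( \Group\cdot\beta = \{ g\beta \mid g \in \Group\} \) with \( \beta \in \Grp(a,b) \), and the action of \( \gamma \in \Grp(b,b') \) sends it to \( \Group\cdot\beta\gamma \). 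In particular every element of \( \InducedProf{\Group}(\star,b) \) has the form \( x = \Group\cdot\beta \), and \( \beta \) ranges over all of \( \Grp(a,b) \).

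The key step is the claim \( \Stabiliser(\Group\cdot\beta) = \beta^{-1}\cdot\Group\cdot\beta \). For \( \gamma \in \Grp(b,b) \), the condition \( \Group\beta\gamma = \Group\beta \) means that for the element \( \beta\gamma \) there is some \( g \in \Group \) with \( \beta\gamma = g\beta \). This gives \( \gamma = \beta^{-1}g\beta \), which establishes \( \subseteq \). Conversely, if \( \gamma = \beta^{-1}g\beta \) with \( g \in \Group \), then \( \Group\beta\gamma = \Group g\beta = \Group\beta \) because \( \Group g = \Group \), which establishes \( \supseteq \). In the first direction we use that \( \Group\beta\gamma = \Group\beta \) forces \( \beta\gamma \in \Group\beta \); the reverse inclusion of sets is automatic for cosets of the same group.

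The kit equality now follows immediately: as \( x \) ranges over \( \InducedProf{\Group}(\star,b) \), \( \beta \) ranges over \( \Grp(a,b) \). The creed is the union of these stabilisers, i.e.\ \( \{\beta^{-1}\alpha\beta \mid \alpha\in\Group,\ \beta\in\Grp(a,b)\} \). The proof has no real obstacle. The only point needing care is the bookkeeping of the diagrammatic composition order and of which side the action is on. I would fix the convention \( x\cdot\gamma \) as right composition at the outset, so that the conjugate comes out as \( \beta^{-1};\alpha;\beta \), matching \cref{prop:creed:equivalence}.
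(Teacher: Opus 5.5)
Your proof is correct and follows essentially the same argument as the paper. The paper carries out the identical computation \( \Stabiliser(\Group\cdot\beta) = \{\gamma \mid \beta;\gamma;\beta^{-1} \in \Group\} = \beta^{-1}\cdot\Group\cdot\beta \) in the appendix proof of \cref{prop:creed:equivalence}, part (2, \(\Leftarrow\)). From that computation the kit equality is read off directly, and the creed is obtained as the union of these stabilisers.
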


\begin{definition}
    For a profunctor \( \Profunctor \colon I \profarrow \Grp \) and a creed \( \creed \) on \( \Grp \), we write \( \Profunctor \models \creed \) to mean \( (\CreedOf \Profunctor)_a \subseteq \creed_a \) for every \( a \in \Grp \).
    Similarly, for a kit \( \kit \), \( \Profunctor \models \kit \) if \( (\KitOf \Profunctor)_a \subseteq \kit_a \) for every \( a \).
\qed
\end{definition}

\begin{proposition}
    Let \( \Profunctor \colon I \profarrow \Grp^{\op} \) and assume \( \Profunctor \cong \coprod_i \InducedProf{\Group_i} \), \( \Group_i \le \Grp(a_i,a_i) \).
    Let \( C \) be a creed and \( K \) be a kit on \( \Grp \).
    \begin{itemize}
        \item \( \Profunctor \models C \) if and only if \( \Group_i \subseteq C_{a_i} \) for every \( i \).
        \item \( \Profunctor \models K \) if and only if \( \Group_i \in K_{a_i} \) for every \( i \).
\qed
    \end{itemize}
\end{proposition}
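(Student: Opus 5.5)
The plan is to reduce both items to \cref{prop:creed:induced-prof}. Two observations do most of the work: the creed and the kit of a sum are the pointwise unions of those of the summands, and both are invariant under isomorphism of profunctors.

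\emph{Invariance and additivity.} Let \( \phi \colon \Profunctor \cong \Profunctorb \) be a natural isomorphism. Naturality gives \( \phi(x) \cdot \alpha = \phi(x \cdot \alpha) \), so \( \Stabiliser(x) = \Stabiliser(\phi(x)) \). Hence \( \CreedOf \) and \( \KitOf \) depend only on the isomorphism class, and we may take \( \Profunctor = \coprod_i \InducedProf{\Group_i} \) literally. For the sum, an element of \( \Profunctor(\star,a) \) has the form \( \mathsf{in}_j(y) \) with \( y \in \InducedProf{\Group_j}(\star,a) \). Since the action preserves the summand index, \( \Stabiliser(\mathsf{in}_j(y)) = \Stabiliser(y) \). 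Therefore \( (\KitOf\Profunctor)_a = \bigcup_i (\KitOf\InducedProf{\Group_i})_a \), and the analogous formula holds for creeds. (Here \( \Profunctor \colon I \profarrow \Grp^{\op} \) is the same as a functor on \( \Grp \), so stabilisers are subgroups of \( \Grp(a,a) \) up to the variance convention. The formulas of \cref{prop:creed:induced-prof} apply verbatim after reading composition in \( \Grp \).)

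\emph{Kit item.} By the two observations and \cref{prop:creed:induced-prof}, \( \Profunctor \models K \) holds if and only if \( \beta^{-1}\cdot\Group_i\cdot\beta \in K_b \) for every \( i \), every \( b \) and every \( \beta \in \Grp(a_i,b) \).
\begin{itemize}
\item Taking \( b = a_i \) and \( \beta = \ident \) gives \( \Group_i \in K_{a_i} \).
\item Conversely, if \( \Group_i \in K_{a_i} \), then closure of the kit \( K \) under conjugation (\cref{prop:creed:equivalence}(2)) gives all the conjugates.
\end{itemize}

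\emph{Creed item.} This is identical in structure. \( \Profunctor \models C \) holds if and only if \( \beta^{-1}\alpha\beta \in C_b \) for all \( \alpha\in\Group_i \) and all \( \beta\in\Grp(a_i,b) \). Specialising to \( \beta=\ident \) gives \( \Group_i\subseteq C_{a_i} \). The converse uses closure condition (1b) of \cref{prop:creed:equivalence}.

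\emph{Main obstacle.} The only delicate point is bookkeeping of variance. The statement uses \( \Grp^{\op} \) while the creed and kit are on \( \Grp \), and composition \( \beta^{-1};\alpha;\beta \) in the opposite category swaps to \( \beta;\alpha;\beta^{-1} \) in \( \Grp \). Since \( \beta \) ranges over all isomorphisms, the two conjugacy conditions coincide, so this causes no real difficulty. One should still check that the identification of \( \Grp^{\op}(a,a) \) with \( \Grp(a,a) \) as sets is the one intended. Everything else is routine.
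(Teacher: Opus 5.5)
Your proof is correct. The paper states this proposition without proof as routine, and your argument is exactly the verification it leaves implicit: stabilisers are invariant under natural isomorphism, creeds and kits are additive over \( \coprod \), and then \cref{prop:creed:induced-prof} combined with closure under conjugation from \cref{prop:creed:equivalence} finishes both items. The same stabiliser computation \( \Stabiliser(\Group\cdot\beta) = \beta^{-1}\cdot\Group\cdot\beta \) appears in the paper's appendix proof of \cref{prop:creed:equivalence}, and your treatment of the \( \Grp^{\op} \) variance is sound, since conjugating by \( \beta \) or by \( \beta^{-1} \) yields the same set of conjugates.
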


Now, as we begin to develop the theory using creed and kit, it is worth noting that there are some subtle differences between the approaches of Taylor~\cite{Taylor1989} and Fiore~et~al.~\cite{Fiore2022,Fiore2024}.
This paper follows the approach of Fiore~et~al., mainly due to the fact that the approach of Fiore~et~al.~is based on a well-developed theory of \emph{orthogonality} by Hyland and Schalk~\cite{Hyland2003}.
While we will discuss both creed and kit throughout this paper, it should be noted that some concepts and operations on creed are not always identical to Taylor's original formulation.

\begin{definition}\label{def:creed:jointly-orthogonal}
    Let \( C \) and \( C' \) be creeds on \( \Grp \).
    They are \emph{orthogonal}, written \( C \bot C' \), if and only if \( C_a \cap C'_a \subseteq \{ \ident \} \) for every \( a \).
    For kits \( K \) and \( K' \), their orthogonal is defined by
    \begin{equation*}
        K \bot K' \Longleftrightarrow \forall a. \forall \Group \in K_a. \forall \Group' \in K'_a. \Group \cap \Group' = \{ \ident \}.
    \end{equation*}
    Given a creed \( C \) and kit \( K \) on \( \Grp \), we write \( C^{\bot} \) and \( K^{\bot} \) for the maximum creed and kit such that \( C \bot C^{\bot} \) and \( K \bot K^{\bot} \).
    Concretely, the creed \( C^{\bot} \) and kit \( K^{\bot} \) are given by
    \begin{align*}
        (C^\bot)_a &= \{ \alpha \in \Grp(a,a) \mid \forall k \in \Int. \alpha^k \notin (C_a \setminus \{ \ident \}) \} \\ 
        (K^\bot)_a &= \{ \Group' \le \Grp(a,a) \mid \forall \Group \in K_a. \Group \bot \Group' \}.
    \end{align*}
    The orthogonality extends to profunctors: for \( \Profunctor \colon I \profarrow \Grp \) and \( \Profunctorb \colon \Grp \profarrow I \), we write \( \Profunctor \bot \Profunctorb \) when \( (\CreedOf \Profunctor) \bot (\CreedOf \Profunctorb) \), which is equivalent to \( (\KitOf \Profunctor) \bot (\KitOf \Profunctorb) \)
\qed
\end{definition}
\begin{lemma}\label{lem:creed:orthogonality-for-groups}
    Let \( \Grp \) be a groupoid, \( a \in \Grp \), \( \Group \le \Grp(a,a) \) and \( \Groupb \le \Grp^{\op}(a,a) \).
    Then \( \InducedProf{\Group} \orthogonal \InducedProf{\Groupb} \) if and only if\/ \( \Group' \cap \Groupb' = \{ \ident \} \) for every \( \Group' \Conjugate \Group \) and \( \Groupb' \Conjugate \Groupb \).
\qed
\end{lemma}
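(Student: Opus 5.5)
The plan is to unfold the definition of orthogonality, \( \InducedProf{\Group} \orthogonal \InducedProf{\Groupb} \iff (\KitOf \InducedProf{\Group}) \orthogonal (\KitOf \InducedProf{\Groupb}) \), and then to compute both kits using \cref{prop:creed:induced-prof}. By that proposition, \( (\KitOf \InducedProf{\Group})_b = \{ \beta^{-1} \cdot \Group \cdot \beta \mid \beta \in \Grp(a,b) \} \). For \( \InducedProf{\Groupb} \colon \Grp \profarrow I \), the same proposition applied in \( \Grp^{\op} \) gives \( (\KitOf \InducedProf{\Groupb})_b = \{ \beta \cdot \Groupb \cdot \beta^{-1} \mid \beta \in \Grp(a,b) \} \), viewed as subgroups of \( \Grp(b,b) \). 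I would check once that composition in \( \Grp^{\op} \) reverses order, so conjugation in \( \Grp^{\op} \) is again conjugation in \( \Grp \), with \( \beta \) replaced by \( \beta^{-1} \). This gives the formula above, and the set of conjugates is unchanged.

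The condition \( \KitOf \InducedProf{\Group} \orthogonal \KitOf \InducedProf{\Groupb} \) then says: for every object \( b \) and all \( \beta, \gamma \in \Grp(a,b) \),
\[
(\beta^{-1}\Group\beta) \cap (\gamma^{-1}\Groupb\gamma) = \{\ident\}.
\]
Conjugating both sides by \( \beta^{-1} \) is a group isomorphism \( \Grp(b,b) \to \Grp(a,a) \). It preserves intersections and sends \( \{\ident\} \) to \( \{\ident\} \). So the condition at \( b \) with \( (\beta,\gamma) \) is equivalent to
\[
\Group \cap (\delta^{-1}\Groupb\delta) = \{\ident\}, \qquad \delta = \gamma\beta^{-1} \in \Grp(a,a).
\]
Conversely, every \( \delta \in \Grp(a,a) \) arises this way: take \( b=a \) and \( \beta=\ident \). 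Hence orthogonality is equivalent to \( \Group \cap \Groupb'' = \{\ident\} \) for all conjugates \( \Groupb'' \) of \( \Groupb \) by elements of \( \Grp(a,a) \).

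It remains to match this with the statement's condition over all pairs \( \Group' \Conjugate \Group \), \( \Groupb' \Conjugate \Groupb \). Here the conjugacy notion from the preliminaries specialises to the case \( \Grpb = I \) or \( \Grp = I \), so it means conjugation by elements of \( \Grp(a,a) \). If \( \Group' = \alpha^{-1}\Group\alpha \) and \( \Groupb' = \delta^{-1}\Groupb\delta \), then conjugating the intersection by \( \alpha^{-1} \) reduces to \( \Group \cap (\delta\alpha^{-1})^{-1}\Groupb(\delta\alpha^{-1}) \), which is covered by the previous paragraph. The other direction is the special case \( \Group' = \Group \).

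The only delicate point, and the main obstacle, is the bookkeeping of the \( \op \) and of the diagrammatic composition order. Concretely, I need to confirm that the stabilisers of \( \InducedProf{\Groupb} \colon \Grp \profarrow I \) are exactly the \( \Grp \)-conjugates of \( \Groupb \), regarded inside \( \Grp(a,a) \) as a set of morphisms. Since conjugacy classes are closed under inversion of the conjugating element, any sign or order slip there is harmless. So I would handle it by the one-line observation that the set \( \{\beta\Groupb\beta^{-1}\} \) equals \( \{\beta^{-1}\Groupb\beta\} \) as \( \beta \) ranges over the isomorphisms.
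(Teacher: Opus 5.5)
Your proof is correct and is the routine unfolding the paper has in mind. The lemma is stated there with \qed and no written proof, and the intended argument is yours: compute both kits via \cref{prop:creed:induced-prof}, as in the stabiliser computation \( \Stabiliser(\Group\cdot\beta)=\beta^{-1}\cdot\Group\cdot\beta \) in the appendix, then reduce everything to conjugation by elements of \( \Grp(a,a) \). The one slip is notational and harmless: with diagrammatic composition, \( \beta\cdot\Groupb\cdot\beta^{-1} \) requires \( \beta \in \Grp(b,a) \) rather than \( \Grp(a,b) \); the form \( \gamma^{-1}\Groupb\gamma \) with \( \gamma \in \Grp(a,b) \), which you actually use in the next step, is the correct one.
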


Once the orthogonality \( \bot \) is defined, we simply follow Hyland and Schalk~\cite{Hyland2003}.
A creed \( C \) (resp.~kit \( K \)) is \emph{Boolean} if \( C = C^{\bot\bot} \) (resp.~\( K = K^{\bot\bot} \)).
There is a bijective correspondence between Boolean creeds and Boolean kits: \( C \mapsto K \) with \( K_a \defeq \{ \Group \le \Grp(a,a) \mid \Group \subseteq C_a \} \).
\begin{definition}
    The category \( \SProf \) of \emph{stable profunctors} is defined as follows.
    Its object is a pair \( \creed = (\Grp, \creed) \) of a groupoid \( \Grp \) and a creed \( \creed \) on \( \Grp \).
    Its morphism \( (\Grp, \creed) \profarrow (\Grpb, \creedb) \) is a profunctor \( \Profunctor \colon \Grp \profarrow \Grpb \) that satisfies
    \begin{itemize}
        \item \( (\Profunctorb;\Profunctor) \models \creedb \) for every \( \Profunctorb \colon I \profarrow \Grp \) with \( \Profunctorb \models \creed \).
        \item \( (\Profunctor;\Profunctorc) \models \creed^{\bot} \) for every \( \Profunctorc \colon \Grpb \profarrow I \) with \( \Profunctorc \models \creedb^\bot \).
    \end{itemize}
    Morphisms are composed as profunctors.
\qed
\end{definition}

The category of \( \SProf \) is a model of linear logic~\cite{Fiore2024}.
Furthermore, there is the forgetful functor \( \Underlying \colon \SProf \longrightarrow \Prof \) that preserves the structures.
Its action on objects is \( \Underlying(\creed) = \Grp \) when \( \creed = (\Grp, \creed) \).
We describe the structures.

The dual is easy: \( (\Grp, \creed)^{\bot} = (\Grp^{\op}, \creed^{\bot}) \) (note that a creed \( \creed \) on \( \Grp \) can be seen as a creed on \( \Grp^{\op} \)).
The product in \( \SProf \) is actually the biproduct: \( (\Grp, \creed) \oplus (\Grpb, \creedb) = (\Grp \oplus \Grpb, \creed \oplus \creedb) \) where \( \creed \oplus \creedb \) assigns \( \creed_a \) for objects \( a \) in \( \Grp \) and \( \creedb_b \) for objects \( b \) in \( \Grpb \).
The tensor product \( \otimes \) is given by \( (\Grp, \creed) \otimes (\Grpb, \creedb) = (\Grp \times \Grpb, (\creed \otimes \creedb)_{a,b}) \) where \( (\creed \otimes \creedb)_{a,b} = (\creed_a \times \creed_b)^{\bot\bot} \).
Then \( \creed \llpar \creedb \defeq (\creed^\bot \otimes \creedb^\bot)^{\bot} \).

\begin{lemma}\label{lem:creed:creed-sufficient-condition}
    Let \( \creed \) and \( \creedb \) be creeds over \( \Grp \) and \( \Grpb \), respectively.
    Let \( a \in \Grp \), \( \alpha \in \Grp(a,a) \), and
    \begin{gather*}
        \alpha_{++} \in (\creed_a \setminus \{ \ident \}),
        \quad
        \alpha_{+} \in \creed_a,
        \\
        \alpha_{--} \in (\creed^{\bot}_a \setminus \{ \ident \}),
        \quad
        \alpha_{-} \in \creed^{\bot}_a,
    \end{gather*}
    and similarly, \( b \in \Grpb \), \( \beta_{++}, \beta_{+}, \beta, \beta_{-}, \beta_{--} \) as above.
    Then
    \begin{gather*}
        (\alpha_{+}, \beta_{+}) \in (\creed \otimes \creedb)_{(a,b)},
        \quad
        (\alpha_{--}, \beta) \in (\creed \otimes \creedb)^{\bot}_{(a,b)},
        \\
        (\alpha_{++}, \beta) \in (\creed \llpar \creedb)_{(a,b)},
        \quad
        (\alpha_{-}, \beta_{-}) \in (\creed \llpar \creedb)^{\bot}_{(a,b)}.
    \end{gather*}
\qed
\end{lemma}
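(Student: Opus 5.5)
The plan is to reduce all four memberships to two general facts about the orthogonality of \cref{def:creed:jointly-orthogonal}, namely \( C \subseteq C^{\bot\bot} \) and \( C^{\bot\bot\bot} = C^{\bot} \), and then to use the explicit description
\[
(C^\bot)_a = \{\, \gamma \mid \forall k \in \Int.\ \gamma^k \notin C_a \setminus \{\ident\} \,\}.
\]
Both general facts are the standard Galois-connection properties of \( ({-})^\bot \) in the sense of Hyland and Schalk. I also need that \( \creed_a \times \creedb_b \) is a creed on \( \Grp \times \Grpb \). This follows from \cref{prop:creed:equivalence}: closure under powers and under conjugation are both checked componentwise.

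The two ``positive'' memberships should then be immediate.
\begin{itemize}
\item For the tensor: \( (\alpha_+,\beta_+) \in \creed_a \times \creedb_b \subseteq (\creed_a \times \creedb_b)^{\bot\bot} = (\creed \otimes \creedb)_{(a,b)} \).
\item For the par: first compute
\[
(\creed \llpar \creedb)^{\bot} = (\creed^\bot \otimes \creedb^\bot)^{\bot\bot} = (\creed^\bot \times \creedb^\bot)^{\bot\bot\bot\bot} = (\creed^\bot \times \creedb^\bot)^{\bot\bot},
\]
which contains \( \creed^\bot_a \times \creedb^\bot_b \ni (\alpha_-,\beta_-) \).
\end{itemize}

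The two ``strict'' memberships use the triple-orthogonal identity, which gives
\[
(\creed \otimes \creedb)^\bot = (\creed \times \creedb)^\bot
\quad\text{and}\quad
\creed \llpar \creedb = (\creed^\bot \times \creedb^\bot)^\bot .
\]
So for \( (\alpha_{--},\beta) \) I must show that no power \( (\alpha_{--}^k, \beta^k) \) lies in \( (\creed_a \times \creedb_b) \setminus \{\ident\} \). Suppose \( (\alpha_{--}^k,\beta^k) \in \creed_a \times \creedb_b \). Then \( \alpha_{--}^k \in \creed_a \), and also \( \alpha_{--}^k \in \creed^\bot_a \) by power-closure. Since \( \creed \bot \creed^\bot \), this forces \( \alpha_{--}^k = \ident \). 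The case \( (\alpha_{++},\beta) \) is the same argument with \( \creed \) and \( \creed^\bot \) interchanged, using \( \creed^{\bot\bot} \supseteq \creed \).

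The main obstacle is the remaining step: concluding from \( \alpha_{--}^k=\ident \) that the whole pair \( (\alpha_{--}^k,\beta^k) \) is the identity. This is automatic only when \( \alpha_{--} \) has infinite order; then \( k=0 \) and \( \beta^0=\ident \). If \( \alpha_{--} \) has finite order \( n \), the pair \( (\ident,\beta^n) \) may well lie in \( \creed \times \creedb \) with \( \beta^n\neq\ident \). A concrete test case is \( \Grp = \Int/2 \) with \( \creed = \{\ident\} \) and \( \alpha_{--} \) the generator, and \( \Grpb = \Int \) with \( \creedb \) everything and \( \beta = 1 \). There \( (\alpha_{--},\beta)^2 = (\ident, 2) \) appears to lie in \( (\creed\times\creedb)\setminus\{\ident\} \).

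So my first move would be to recheck exactly which orthogonality the paper intends for \( \otimes \). It might be the kit-level orthogonality, transported along the Boolean creed/kit correspondence. With that orthogonality one would argue that any subgroup \( \Group \ni (\alpha_{--},\beta) \) meets every product subgroup \( \Group_1\times\Group_2 \), with \( \Group_i \) inside the creeds, only trivially; but the same counterexample seems to persist, with \( \Group=\langle(\alpha_{--},\beta)\rangle \). If it does, I would look for an extra hypothesis under which the statement can be proved, for example restricting \( \beta \) so that \( \beta^{k} \notin \creedb_b\setminus\{\ident\} \) whenever \( \alpha_{--}^k=\ident \). Alternatively, I would weaken the claim to \( \alpha_{--} \) of infinite order, and use only that form in the subsequent correctness argument.
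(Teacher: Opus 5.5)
Your analysis is correct. The paper offers no proof to compare with: the lemma is stated with a bare \qed.

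Your derivations of $(\alpha_{+},\beta_{+})\in(\creed\otimes\creedb)_{(a,b)}$ and $(\alpha_{-},\beta_{-})\in(\creed\llpar\creedb)^{\bot}_{(a,b)}$ are sound. They rest on $C\subseteq C^{\bot\bot}$, $C^{\bot\bot\bot}=C^{\bot}$, and the fact that $\creed_a\times\creedb_b$ is a creed.

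The other two memberships are false as stated, and your test case is a genuine counterexample, not merely an apparent one. Take $\Grp(a,a)=\Int_2$ with $\creed_a=\{\ident\}$, so $\creed^{\bot}_a=\Int_2$. Take $\Grpb(b,b)=\Int$ with $\creedb_b=\Int$, let $\alpha_{--}$ be the generator and $\beta=1$. Then $(\alpha_{--},\beta)^2=(\ident,2)$ lies in $(\creed_a\times\creedb_b)\setminus\{\ident\}$, so $(\alpha_{--},\beta)\notin(\creed\times\creedb)^{\bot}=(\creed\otimes\creedb)^{\bot}$.
\begin{itemize}
\item Both creeds are Boolean, so restricting to the Boolean creeds of the creed-kit criterion does not help.
\item The kit reading fails for the same reason: the cyclic subgroup generated by $(\alpha_{--},1)$ meets $\{\ident\}\times 2\Int$ nontrivially.
\item Swapping roles refutes $(\alpha_{++},\beta)\in(\creed\llpar\creedb)_{(a,b)}$: take $\creed_a=\Int_2$ and $\creedb_b=\{0\}$.
\end{itemize}
What your argument does establish are the correct versions: the two claims hold if $\alpha_{--}$ (resp.\ $\alpha_{++}$) has infinite order, or if $\beta\in\creedb^{\bot}_b$ (resp.\ $\beta\in\creedb_b$).

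Of your two repairs, only the infinite-order one serves the place where the lemma is used. In the proof of \cref{thm:creed-kit-correctness-without-cut}, cases $(0)$ and $(3)$ pair a downward child with an upward-directed child, and the induction gives no information about the upward child. There the partner $\beta$ can be $\alpha_{+}\notin\creed^{\bot}$, so a hypothesis like $\beta\in\creedb^{\bot}_b$ is unavailable.

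With the paper's choice $\Int_2\times\Int_3$ the argument in fact breaks. Consider the incorrect structure on $(\PropVar\otimes\PropVar)\llpar(\PropVar^{\bot}\otimes\PropVar^{\bot})$ that links the $i$-th $\PropVar$ to the $i$-th $\PropVar^{\bot}$.
\begin{itemize}
\item The node $N$ receives $(\alpha_{+},\alpha_{-})$. Since $(\alpha_{+},\alpha_{-})^3=(\alpha_{+},\ident)$, this element is not in $(\creed\otimes\creed)^{\bot}$.
\item The root receives $\gamma=((1,0),(0,1),(1,0),(0,2))$.
\item Here $(\creed\otimes\creed)^{\bot}=\creed^{\bot}\times\creed^{\bot}$ and $(\creed^{\bot}\otimes\creed^{\bot})^{\bot}=\creed\times\creed$. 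So $\gamma^k$ lies in their product only when $6\mid k$, i.e.\ only when $\gamma^k=\ident$.
\item Hence $\gamma$ belongs to the creed of the formula and witnesses nothing.
\end{itemize}
Take a torsion-free group instead, e.g.\ $\Int\times\Int$ with $\creed=\Int\times\{0\}$; this creed is Boolean, and $(0,1)\in\creed^{\bot}\setminus\creed$. Then every group in the induction is torsion-free, and your infinite-order version of the lemma is exactly what the case analysis needs.
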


\subsection{Creed-Kit Criterion}

We characterize the correctness of a proof structure \( \pi \vdash \Gamma \) in terms of creed and kit.
The interpretation of a proof structure \( \pi \) is a profunctor \( \sem{\pi}_{\vec{\Grp}} \colon I \profarrow \sem{\Gamma}_{\vec{\Grp}} \) parameterized by \( \vec{\Grp} = (\Grp_1,\dots,\Grp_N) \).
For each \( \Grp_i \), let \( \creed_i \) be a Boolean creed on \( \Grp_i \) and \( \vec{\creed} = (\creed_1,\dots,\creed_N) \).
Then \( \sem{\Gamma}_{\vec{\creed}} \) is a creed on \( \sem{\Gamma}_{\vec{\Grp}} \), so the stability of \( \sem{\pi}_{\vec{A}} \) with respect to \( \sem{\Gamma}_{\vec{\creed}} \) makes sense.

\begin{definition}[Creed-Kit criterion]
    A cut-free proof structure \( \pi \) satisfies the \emph{creed-kit criterion} if
    \( \sem{\pi}_{\Underlying(\vec{\creed})} \models \sem{\Gamma}_{\vec{\creed}} \) for every assignment \( \vec{\creed} \) of Boolean creeds to propositional variables.
\qed
\end{definition}

The main result of this section is as follows.
\begin{theorem}\label{thm:creed-kit-correctness-without-cut}
    A cut-free proof structure is correct with respect to \( \mathsf{MLL+Mix} \) if and only if it satisfies the creed-kit criterion.
\qed
\end{theorem}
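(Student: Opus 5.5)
Both directions go through the Fleury--Retoré characterisation of \( \mathsf{MLL+Mix} \) correctness by acyclicity of every Danos--Regnier switching graph (\cref{thm:weak-danos-regnier}), following the strategy Retoré used for coherence spaces.

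\emph{Soundness (correct \( \Rightarrow \) creed-kit).} Since \( \SProf \) is a model of \( \mathsf{MLL+Mix} \) and \( \Underlying \) preserves the structure, I would argue by induction on a sequentialisation of \( \pi \) that \( \sem{\pi}_{\Underlying(\vec\creed)} \) is a morphism \( I \profarrow \sem{\Gamma}_{\vec\creed} \) in \( \SProf \). From \( I \) this amounts to \( \sem{\pi} \models \sem{\Gamma}_{\vec\creed} \). The cases are:
\begin{itemize}
  \item the axiom is the unit of the \(*\)-autonomous structure;
  \item \( \otimes \) and Mix use that \( \creed_a \times \creedb_b \subseteq (\creed\otimes\creedb)_{(a,b)} \);
  \item \( \llpar \) and exchange are isomorphisms in \( \SProf \).
\end{itemize}
Mix needs \( I \) to be self-dual with the trivial creed, or equivalently \( \bot \to I \), which holds since \( \{\ident\}^\bot \) on \( I \) is trivial. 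This direction should be routine.

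\emph{Completeness (creed-kit \( \Rightarrow \) correct).} I would argue by contraposition. Suppose some switching \( S \) has a cycle. Take a minimal such cycle, passing through axiom links \( \xi_1,\dots,\xi_n \). I then choose a concrete assignment:
\begin{itemize}
  \item every variable is interpreted by the group \( \Int/2 \) (one object \( a \), \( \alpha^2=\ident \)), or by \( \Int \) if needed;
  \item every variable gets the Boolean creed \( \creed=\{\ident,\alpha\} \), so \( \creed^\bot=\{\ident\} \). One could also pick different creeds per variable to control polarity.
\end{itemize}
By the explicit description in \cref{sec:pre:interpretation}, \( \sem{\pi} \) at the object \( (a,\dots,a) \) is a product of hom-sets \( \Grp_{\xi}(a,a) \), one per axiom link. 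The element \( x=(\ident,\dots,\ident) \) has stabiliser consisting of tuples \( g \) such that the two ends of each axiom link receive the same group element.

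I would then define \( g \in \Stabiliser(x) \) to be \( \alpha \) on exactly the leaves touched by the cycle's axiom links and \( \ident \) elsewhere. The goal is to show \( g \notin \sem{\Gamma}_{\vec\creed} \), which violates the criterion.

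\emph{Main obstacle.} The hard step is showing \( g \notin \sem{\Gamma}_{\vec\creed} \). I would propagate downward through the formula tree with the rule ``nontrivial on both sides of a \( \otimes \) forces membership in a positive part''. Concretely, I would build an element \( h \in (\sem{\Gamma}_{\vec\creed})^\bot \) with \( g \in \langle h \rangle \setminus\{\ident\} \), constructed as follows.
\begin{itemize}
  \item At each \( \llpar \) the switching \( S \) selects one child. I use \cref{lem:creed:creed-sufficient-condition} in its dual form, \( (\alpha_{--},\beta)\in(\creed\otimes\creedb)^\bot \), applied to \( \llpar \) viewed from the negative side: the component on the selected branch must be nontrivial-negative, while the other component is arbitrary.
  \item At \( \otimes \) nodes, membership of the negative creed requires both sides negative.
\end{itemize}
The cycle condition should make these constraints consistent: each \( \otimes \) on the cycle has both premises on the cycle, and each \( \llpar \) is entered through its switched edge. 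Since every element involved has order 2, \( g = h \) would then lie in \( \creed_\Gamma \cap \creed_\Gamma^\bot \) while being nontrivial, a contradiction.

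Making this local bookkeeping precise is the crux. Lemma~\cref{lem:creed:creed-sufficient-condition} gives only sufficient conditions, so one probably needs exact descriptions of \( (\creed\otimes\creedb)^{\bot\bot} \) for these small creeds, proved by induction on subformulas. Alternatively, one could use distinct variables per axiom on the cycle, taking \( \Int/2 \) for the cycle links and the trivial group for all others, so that the relevant creeds reduce to \( \{\ident,\alpha\} \) versus \( \{\ident\} \) patterns on products of \( \Int/2 \). This reduces everything to finite linear algebra over \( \mathbb{F}_2 \), where the orthogonality is computable. I would try that reduction first.
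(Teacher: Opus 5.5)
Your soundness direction is fine and matches the paper. The completeness direction has a genuine gap, and it lies in your choice of assignment.

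On $\mathbb{Z}/2$, and likewise on $\Int$, every creed satisfies $\creed_a=\{\ident\}$ or $\creed^{\bot}_a=\{\ident\}$. With your $\creed=\{\ident,\alpha\}$, the orthogonal creed at a positive literal $X$ is trivial. So at a positive leaf where the cycle \emph{descends} (a link traversed from $X^{\bot}$ to $X$), no nontrivial element lies in the negative creed. Putting the same $\alpha$ on every cycle leaf then pushes $g$ into $\sem{\Gamma}_{\vec{\creed}}$ rather than into its orthogonal.

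A concrete failure is $A=(X\otimes X)\llpar(X^{\bot}\otimes X^{\bot})$, with the $i$-th $X$ linked to the $i$-th $X^{\bot}$. It is incorrect, since the cycle through the two $\otimes$-nodes contains no $\llpar$-node. Write elements of $\sem{A}$ as $(u,v)\in(\mathbb{Z}/2)^2\times(\mathbb{Z}/2)^2$. Then $\sem{A}_{\vec{\creed}}=\{(u,v)\mid u\neq 0\}\cup\{0\}$ for the full creed and $\{(u,v)\mid v\neq 0\}\cup\{0\}$ for the trivial one, while $\Stabiliser(x)=\{(u,u)\}$. So the criterion holds for every assignment available to you, and in particular $g=(\alpha,\alpha,\alpha,\alpha)\in\sem{A}_{\vec{\creed}}$.

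Choosing creeds per variable cannot help here, because there is only one variable. Giving each axiom link its own variable is not allowed either: the criterion quantifies over assignments to the variables of $\Gamma$, and a violation for a renamed formula does not transfer back.

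The missing idea is the paper's construction.
\begin{enumerate}
\item Use a creed in which both $\creed_a$ and $\creed^{\bot}_a$ are nontrivial, e.g.\ $\mathbb{Z}_2\times\mathbb{Z}_3$ with $\creed_a=\mathbb{Z}_2\times\{0\}$. Over $\mathbb{F}_2$, $(\mathbb{Z}/2)^2$ with $\creed_a=\{0,e_1\}$ also works, which would rescue your linear-algebra idea.
\item Orient the cycle starting from its lowest node $N$, which is necessarily a $\otimes$, together with the tree path from $N$ to the root.
\item Label each cycle link by $\alpha_+\in\creed_a\setminus\creed^{\bot}_a$ or $\alpha_-\in\creed^{\bot}_a\setminus\creed_a$, according to the direction in which the cycle traverses it. This ensures every leaf at which the cycle descends carries a nontrivial element of the orthogonal creed at that leaf. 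All other links get $\ident$.
\item Induct from the leaves: downward-oriented nodes carry elements of $\creedb^{\bot}\setminus\{\ident\}$, and unoriented nodes carry elements of $\creedb^{\bot}$.
\end{enumerate}

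Your local rules are also reversed.
\begin{itemize}
\item At $\otimes$: \Cref{lem:creed:creed-sufficient-condition} gives $(\alpha_{--},\beta)\in(\creed\otimes\creedb)^{\bot}$ with $\beta$ arbitrary, so one nontrivially negative premise suffices. This is exactly what handles $N$, whose ascending premise carries arbitrary labels. Demanding both premises negative there would fail.
\item At $\llpar$: one only has $(\alpha_-,\beta_-)\in(\creed\llpar\creedb)^{\bot}$, with both premises negative. Your rule ``selected side nontrivially negative, other side arbitrary'' is false. For example, $(\alpha_{--},\beta_{++})$ lies in $\creed\llpar\creedb$ and is nontrivial, hence it is not in $(\creed\llpar\creedb)^{\bot}$.
\end{itemize}
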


The left-to-right direction is trivial since \( \SProf \) is a model of \( \mathsf{MLL+Mix} \)~\cite{Fiore2024}.
The remainder of this section is devoted to proving the converse direction.

\begin{remark}
    The cut-freeness assumption is essential to \cref{thm:creed-kit-correctness-without-cut}.
    There exists an incorrect proof structure with \( (\mathsf{Cut}) \) of which all sources of its incorrectness are eliminated by the cut elimination procedure.
    An example of such a proof structure (written as an \( \mathsf{MLL}+\mathsf{Mix}+(\otimes{=}\llpar) \) proof) is
    \begin{equation*}
        \pi_0 \defeq\quad
        \infer[(\mathsf{Cut})]{\vdash \PropVar, \PropVar^{\bot}}{
            \infer[(\mathsf{Ex})]{\vdash \PropVar, \PropVar^{\bot}, \PropVar^{\bot} \llpar \PropVar}{
                \infer[(\llpar)]{\vdash \PropVar, \PropVar^{\bot} \llpar \PropVar, \PropVar^{\bot}}{
                    \infer[(\mathsf{Mix})]{\vdash \PropVar, \PropVar^{\bot}, \PropVar, \PropVar^{\bot}}{
                        \infer[(\mathsf{Ax})]{\vdash \PropVar, \PropVar^\bot}{}
                        &
                        \infer[(\mathsf{Ax})]{\vdash \PropVar, \PropVar^\bot}{}
                    }
                }
            }
            &
            \infer[(\otimes{=}\llpar)]{\vdash \PropVar \otimes \PropVar^{\bot}}{
                \infer[(\mathsf{Ax})]{\vdash \PropVar, \PropVar^{\bot}}{}
            }
        }
    \end{equation*}
    to which the cut-elimination procedure gives \( \pi_1 \defeq \: \infer[(\mathsf{Ax})]{\vdash \PropVar, \PropVar^\bot}{} \).
    Since the semantic interpretation is an invariant of the cut-elimination procedure, any semantic criterion cannot distinguish between the correct proof \( \pi_1 \) from the incorrect proof \( \pi_0 \).
\qed
\end{remark}

\subsection{Profunctorial Experiment}\label{sec:experiments}
Before giving the proof of \cref{thm:creed-kit-correctness-without-cut}, we introduce a convenient graphical method for representing the profunctor interpretation of proof nets.
This can be viewed as an extension of Girard's \emph{experiment}~\cite{Girard1987}, which provides a graphical presentation of the relational interpretation of proof nets.

\begin{definition}
    Let \( \Gamma \) be a formula with propositional variables \( \vec{\PropVar} = (\PropVar_1,\dots,\PropVar_n) \), \( \pi \) be a proof structure of \( \Gamma \), and \( \vec{\Grp} = (\Grp_1,\dots,\Grp_n) \) be an assignment of groupoids to propositional variables.
    A \emph{(profunctorial) experiment} is a labelling on its nodes and/or axiom links.

    An \emph{object experiment} is a labelling on its nodes.
    \begin{itemize}
        \item Each occurrence of a propositional variable \( \PropVar_i \) and its negation \( \PropVar_i^{\bot} \) is labelled by an object \( a \in \Grp_i \).
            Different occurrences of the same variable may be assigned different objects.
        \item Each occurrence of \( \otimes \) and \( \llpar \) is labelled by the pair \( (a_1, a_2) \) of objects assigned to its left and right children.
    \end{itemize}
    When \( \Gamma = (\Formula_1, \dots, \Formula_k) \), an object experiment labels an object \( a_i \in \sem{\Formula_i}_{\vec{\Grp}} \) for each \( i \).
    Then \( (a_1,\dots,a_k) \) is an object of \( \sem{\Gamma}_{\vec{\Grp}} \).

    An \emph{element experiment} is a labelling on axiom links.
    \begin{itemize}
        \item Each axiom link \( \PropVar_i^\bot \)---\( \PropVar_i \) is labelled by a morphism \( \alpha \colon a' \longrightarrow a \) in \( \Grp_i \).
    \end{itemize}
    An element experiment determines an object experiment by choosing \( a' \) for \( \PropVar_i^\bot \) and \( a \) for \( \PropVar_i \) when \( \PropVar_i^\bot \)---\( \PropVar_i \) is annotated by \( \alpha \colon a' \longrightarrow a \).
    Let \( \mathit{Exp}(\pi)(a_1,\dots,a_k) \) be the set of all element experiments whose associated object experiment is \( (a_1,\dots,a_k) \).

    Let \( (\alpha_1,\dots,\alpha_k) \colon (a_1,\dots,a_k) \longrightarrow (a'_1,\dots,a'_k) \) be a morphism in \( \sem{\Gamma} \).
    The action of this morphism on \( \mathit{Exp}(\pi)(a_1,\dots,a_k) \) is defined by propagating the morphism upward from the bottom, following the structure of the proof net, and composing whenever it reaches an axiom link.
    This action makes \( \mathit{Exp}(\pi) \) a profunctor \( I \profarrow \sem{\Gamma} \).
\end{definition}

\begin{example}
    An element experiment for the proof structure in \cref{fig:pre:incorrect-proof-structure} is shown in \cref{fig:criterion:element-experiment}.
    The action of \( (((f_2, g_2), g_1), ((h_2, h_1),f_1)) \) to the element experiment in \cref{fig:criterion:element-experiment} is shown in \cref{fig:criterion:experiment-action}.
\qed
\end{example}

\begin{figure}[t]
    \begin{tikzpicture}[yscale=0.6]
	\begin{pgfonlayer}{nodelayer}
		\node [style=none] (00) at (0.5, -0.7) {};
		\node [style=none] (0) at (0.5,0) {$\llpar$};
		\node [style=none] (1) at (-1, 1) {$\otimes$};
		\node [style=none] (2) at (-2, 2) {$\llpar$};
		\node [style=none] (3) at (-2.5, 3) {$\PropVar_1$};
		\node [style=none] (4) at (-1.5, 3) {$\PropVar_2$};
		\node [style=none] (5) at (-0.5, 2) {$\PropVar_2^\bot$};
		\node [style=none] (6) at (2, 1) {$\llpar$};
		\node [style=none] (7) at (1, 2) {$\llpar$};
		\node [style=none] (8) at (0.5, 3) {$\PropVar_2$};
		\node [style=none] (9) at (1.5, 3) {$\PropVar_2^\bot$};
		\node [style=none] (10) at (2.5, 2) {$\PropVar_1^\bot$};
		\node [style=none] (11) at (-2.5, 4.3) {};
		\node [style=none] (12) at (-1.5, 3.5) {};
		\node [style=none] (13) at (-0.5, 3.5) {};
		\node [style=none] (14) at (0.5, 3.5) {};
		\node [style=none] (15) at (1.5, 3.5) {};
		\node [style=none] (16) at (2.5, 4.3) {};
		\node [style=none] (m1) at (0, 4.6) {$\alpha \colon a_1 \to a_2$};
		\node [style=none] (m2) at (-1, 3.8) {$\beta \colon b_1 \to b_2$};
		\node [style=none] (m3) at (1, 3.8) {$\gamma \colon c_1 \to c_2$};
	\end{pgfonlayer}
	\begin{pgfonlayer}{edgelayer}
		\draw (0) to (00.center);
		\draw (0) to (1);
		\draw (1) to (2);
		\draw (1) to (5);
		\draw (2) to (3);
		\draw (2) to (4);
		\draw (0) to (6);
		\draw (6) to (7);
		\draw (6) to (10);
		\draw (7) to (8);
		\draw (7) to (9);
		\draw (3) to (11.center);
		\draw (4) to (12.center);
		\draw (5) to (13.center);
		\draw (8) to (14.center);
		\draw (9) to (15.center);
		\draw (10) to (16.center);
		\draw (11.center) to (16.center);
		\draw (12.center) to (13.center);
		\draw (14.center) to (15.center);
	\end{pgfonlayer}
\end{tikzpicture}
     \caption{An element experiment.  Here \( \alpha \colon a_1 \to a_2 \) is in \( \Grp_1 \), and \( \beta \colon b_1 \to b_2 \) and \( \gamma \colon c_1 \to c_2 \) are morphisms in \( \Grp_2 \).  This describes an element in \( \mathit{Exp}(\pi)(((a_2, b_2), b_1),((c_2, c_1), a_1)) \).}
    \label{fig:criterion:element-experiment}
\end{figure}

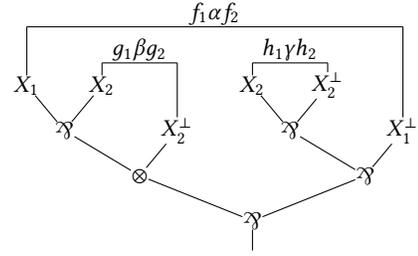
\begin{figure}[t]
    \begin{tikzpicture}[yscale=0.6]
	\begin{pgfonlayer}{nodelayer}
		\node [style=none] (00) at (0.5, -0.7) {};
		\node [style=none] (0) at (0.5,0) {$\llpar$};
		\node [style=none] (1) at (-1, 1) {$\otimes$};
		\node [style=none] (2) at (-2, 2) {$\llpar$};
		\node [style=none] (3) at (-2.5, 3) {$\PropVar_1$};
		\node [style=none] (4) at (-1.5, 3) {$\PropVar_2$};
		\node [style=none] (5) at (-0.5, 2) {$\PropVar_2^\bot$};
		\node [style=none] (6) at (2, 1) {$\llpar$};
		\node [style=none] (7) at (1, 2) {$\llpar$};
		\node [style=none] (8) at (0.5, 3) {$\PropVar_2$};
		\node [style=none] (9) at (1.5, 3) {$\PropVar_2^\bot$};
		\node [style=none] (10) at (2.5, 2) {$\PropVar_1^\bot$};
		\node [style=none] (11) at (-2.5, 4.3) {};
		\node [style=none] (12) at (-1.5, 3.5) {};
		\node [style=none] (13) at (-0.5, 3.5) {};
		\node [style=none] (14) at (0.5, 3.5) {};
		\node [style=none] (15) at (1.5, 3.5) {};
		\node [style=none] (16) at (2.5, 4.3) {};
		\node [style=none] (m1) at (0, 4.6) {$f_1 \alpha f_2$};
		\node [style=none] (m2) at (-1, 3.8) {$g_1 \beta g_2$};
		\node [style=none] (m3) at (1, 3.8) {$h_1 \gamma h_2$};
	\end{pgfonlayer}
	\begin{pgfonlayer}{edgelayer}
		\draw (00) to (0);
		\draw (0) to (1);
		\draw (1) to (2);
		\draw (1) to (5);
		\draw (2) to (3);
		\draw (2) to (4);
		\draw (0) to (6);
		\draw (6) to (7);
		\draw (6) to (10);
		\draw (7) to (8);
		\draw (7) to (9);
		\draw (3) to (11.center);
		\draw (4) to (12.center);
		\draw (5) to (13.center);
		\draw (8) to (14.center);
		\draw (9) to (15.center);
		\draw (10) to (16.center);
		\draw (11.center) to (16.center);
		\draw (12.center) to (13.center);
		\draw (14.center) to (15.center);
	\end{pgfonlayer}
\end{tikzpicture}
     \caption{The result of the action of \( (((f_2, g_2), g_1),((h_2, h_1), f_1)) \) on the element experiment in \cref{fig:criterion:element-experiment}.}
    \label{fig:criterion:experiment-action}
\end{figure}

\begin{lemma}
    \( \sem{\pi} \cong \mathit{Exp}(\pi) \).
\end{lemma}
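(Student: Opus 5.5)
We compare both sides against the first description of \( \sem{\pi} \) in \cref{sec:pre:interpretation}. By that description, \( \sem{\pi}_{\vec{\Grp}} \) is the profunctor \( \Grp_{\xi_1}({-},{+}) \times \dots \times \Grp_{\xi_n}({-},{+}) \colon I \profarrow \prod_j \Grp_{\xi_j}^{\op} \times \Grp_{\xi_j} \), one factor per axiom link, followed by the profunctor \( \sigma_* \) induced by the canonical isomorphism of groupoids \( \sigma \colon \prod_j (\Grp_{\xi_j}^{\op} \times \Grp_{\xi_j}) \longrightarrow \sem{\Gamma}_{\vec{\Grp}} \) built from symmetries and associativities. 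Here \( \sigma \) sends the two endpoints of each axiom link to the corresponding leaf positions of the syntax trees.

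First, we observe that post-composing \( \Profunctor \colon I \profarrow \Grpb \) with \( \sigma_* \) for an isomorphism \( \sigma \colon \Grpb \to \Grpc \) is, up to isomorphism, just reindexing along \( \sigma^{-1} \):
\[
(\Profunctor;\sigma_*)(\star,c) \cong \Profunctor(\star,\sigma^{-1}c).
\]
Indeed, the coend \( \int^{b}\Profunctor(\star,b)\times\Grpc(\sigma b,c) \) collapses, since each class has a unique representative with second component \( \ident \) at \( b=\sigma^{-1}c \). The action of \( \gamma \) becomes the action of \( \sigma^{-1}\gamma \). Hence \( \sem{\pi}(\star,(a_1,\dots,a_k)) \cong \prod_j \Grp_{\xi_j}(c_j,b_j) \), where \( (c_j,b_j) \) are the objects that \( \sigma^{-1}(a_1,\dots,a_k) \) places at the negative and positive ends of the \( j \)-th axiom link.

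Second, this set is exactly \( \mathit{Exp}(\pi)(a_1,\dots,a_k) \). An element experiment with associated object experiment \( (a_1,\dots,a_k) \) is a choice of a morphism \( c_j \to b_j \) for each axiom link, where the endpoint objects are read off the leaves. Since the labels on \( \otimes/\llpar \) nodes are pairs of their children's labels, reading off leaves from \( (a_1,\dots,a_k) \) is precisely \( \sigma^{-1} \). This gives a natural bijection on objects.

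Third, we check naturality, which I expect to be the only real work though still routine. A morphism \( (\alpha_1,\dots,\alpha_k) \) of \( \sem{\Gamma} \) decomposes, through the tree structure, into leaf morphisms; this decomposition is \( \sigma^{-1} \) applied to morphisms. In \( \mathit{Exp} \), "propagating upward and composing at axiom links" composes the label of link \( j \) with the leaf morphism at its positive end (post-composition) and at its negative end (pre-composition, since that leaf lives in \( \Grp_{\xi_j}^{\op} \)). This is exactly the action of \( \prod_j \Grp_{\xi_j}({-},{+}) \) on the reindexed morphism. The care needed is only in matching the variance conventions of \( \alpha\cdot x\cdot\beta \) with the diagrammatic order of composition; I would verify this on a single axiom link and a single \( \otimes \) node and conclude by induction on the formula trees.
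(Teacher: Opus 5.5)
Your proof is correct. The paper's proof is the one line ``by easy induction on the size of proof structures''. That amounts to peeling off one rule at a time (mix of axiom links, exchange, introduction of a root $\otimes$ or $\llpar$) and checking at each step that $\mathit{Exp}$ is reindexed exactly as the corresponding structural map dictates.

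You instead start from the closed form of $\sem{\pi}$ given in \cref{sec:pre:interpretation}, namely $\prod_j \Grp_{\xi_j}({-},{+})$ followed by a single structural isomorphism $\sigma$. The only categorical input is then the co-Yoneda-style fact $(\Profunctor;\sigma_*)(\star,c)\cong\Profunctor(\star,\sigma^{-1}c)$ for a strict isomorphism of groupoids $\sigma$. The remaining induction is on formula trees rather than on proof structures, and it is purely syntactic: reading the leaves off an object experiment and propagating a morphism upward both compute $\sigma^{-1}$.

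In effect you have unfolded the paper's induction in one go, since each of its inductive steps is an instance of your reindexing lemma along an associator or symmetry. Your version makes the variance bookkeeping explicit (post-composition at positive leaves, pre-composition at negative ones) and isolates exactly where the coend is used. The paper's stepwise version is more local and never needs a global description of $\sigma$. Both rest equally on the coherence-based definition of $\sem{\pi}$.
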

\begin{proof}
    By easy induction on the size of proof structures.
\end{proof}

\subsection{Proof of \cref{thm:creed-kit-correctness-without-cut}}
We prove the right-to-left direction by contraposition.
Assume an incorrect proof structure \( \pi \).

We can assume without loss of generality that \( \pi \vdash \Formula \) for some formula \( \Formula \); otherwise, add \( \llpar \)-nodes as many as needed to make a sequent into one formula.
By \cref{thm:weak-danos-regnier}, there exists a Danos-Regnier switching of \( \pi \) such that the associated graph has a cycle.
Fix a Danos-Regnier switching with a cycle in the associated graph and choose a cycle.
Let \( N \) be the node such that (1) it is involved in the cycle and that (2) the ancestors of \( N \) are not involved in the cycle.
The node \( N \) must be labelled by \( \otimes \).
We can assume without loss of generality that \( N \) is reachable from the root in the associated graph; if it is not reachable, change the switching in the unique path from \( N \) to the root in the syntax tree.
\begin{example}
    For the incorrect proof structure in \cref{fig:pre:incorrect-proof-structure}, the node \( N \) is the unique \( \otimes \)-node.
\qed
\end{example}

We assign directions to some of the edges using the following rules:
\begin{itemize}
    \item The edges involved in the cycle from \( N \) to \( N \) are oriented so that the traversal from \(N\) goes to the upper-left of \( N \) and then returns from the upper-right.
    \item The edges in the path in the syntax tree from \( N \) to the root are oriented toward the root.
\end{itemize}
\begin{example}
    For the incorrect proof structure in \cref{fig:pre:incorrect-proof-structure}, these rules results in the partially directed graph in \cref{fig:criterion:partial-orientation}.
\qed
\end{example}
\begin{figure}[t]
    \begin{tikzpicture}[yscale=0.6]
	\begin{pgfonlayer}{nodelayer}
		\node [style=none] (00) at (0.5, -0.7) {};
		\node [style=none] (0) at (0.5,0) {$\llpar$};
		\node [style=none] (1) at (-1, 1) {$\otimes$};
		\node [style=none] (2) at (-2, 2) {$\llpar$};
		\node [style=none] (3) at (-2.5, 3) {$\PropVar_1$};
		\node [style=none] (4) at (-1.5, 3) {$\PropVar_2$};
		\node [style=none] (5) at (-0.5, 2) {$\PropVar_2^\bot$};
		\node [style=none] (6) at (2, 1) {$\llpar$};
		\node [style=none] (7) at (1, 2) {$\llpar$};
		\node [style=none] (8) at (0.5, 3) {$\PropVar_2$};
		\node [style=none] (9) at (1.5, 3) {$\PropVar_2^\bot$};
		\node [style=none] (10) at (2.5, 2) {$\PropVar_1^\bot$};
		\node [style=none] (11) at (-2.5, 4) {};
		\node [style=none] (12) at (-1.5, 3.5) {};
		\node [style=none] (13) at (-0.5, 3.5) {};
		\node [style=none] (14) at (0.5, 3.5) {};
		\node [style=none] (15) at (1.5, 3.5) {};
		\node [style=none] (16) at (2.5, 4) {};
	\end{pgfonlayer}
	\begin{pgfonlayer}{edgelayer}
		\draw[<-] (00) to (0);
		\draw[<-] (0) to (1);
		\draw[->] (1) to (2);
		\draw[<-] (1) to (5);
		\draw[dashed] (2) to (3);
		\draw[->] (2) to (4);
		\draw[dashed] (0) to (6);
		\draw[dashed] (6) to (7);
		\draw[dashed] (6) to (10);
		\draw[dashed] (7) to (8);
		\draw[dashed] (7) to (9);
		\draw[dashed] (3) to (11.center);
		\draw (4) to (12.center);
		\draw[<-] (5) to (13.center);
		\draw[dashed] (8) to (14.center);
		\draw[dashed] (9) to (15.center);
		\draw[dashed] (10) to (16.center);
		\draw[dashed] (11.center) to (16.center);
		\draw (12.center) to (13.center);
		\draw[dashed] (14.center) to (15.center);
	\end{pgfonlayer}
\end{tikzpicture}
     \caption{The result of the orientation.  For readability, unoriented edges are drawn as dashed lines.}
    \label{fig:criterion:partial-orientation}
\end{figure}
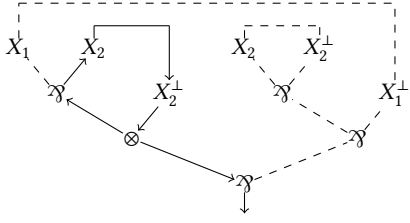

\begin{claim*}
    The edges to the start node \( N \) must be oriented as
    \begin{gather*}
        \xymatrix@C=3mm@R=3mm{
            &  (0) & \ar[ld] \\
            &
            {\otimes} \ar[ul] \ar[d] &
            \\
            & &
        }
    \end{gather*}
\end{claim*}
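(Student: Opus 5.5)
The claim concerns only three edges at \( N \): the edge to its parent, the edge to its upper-left child, and the edge to its upper-right child. Since \( N \) is a \( \otimes \)-node, none of these edges is deleted by the switching. I will determine the orientation of each edge directly from the two orientation rules, and then check that the two rules never give conflicting orientations to the same edge.

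\textbf{Child edges.} By the first rule, the cycle is traversed starting at \( N \), leaving through the upper-left child and returning through the upper-right child. So the edge between \( N \) and its left child is oriented away from \( N \) (pointing upward-left), and the edge between the right child and \( N \) is oriented into \( N \) (pointing downward-left onto \( N \)). This is exactly what the diagram shows for the two upper edges. It still has to be checked that the cycle really uses both child edges of \( N \). This follows from the choice of \( N \). Any node of the cycle lies hereditarily above some lowest node, and \( N \) is a node of the cycle none of whose syntax-tree ancestors lie on the cycle. Hence both neighbours of \( N \) on the cycle lie strictly above \( N \) in the syntax tree, so the two cycle edges at \( N \) are its two child edges and not its parent edge. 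If \( N \) were a \( \llpar \)-node, only one child edge would survive the switching, so \( N \) could not have degree two in the cycle; this is why \( N \) is labelled \( \otimes \), as already noted.

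\textbf{Parent edge.} The remaining edge connects \( N \) to its parent. It lies on the syntax-tree path from \( N \) to the root, so the second rule orients it toward the root, i.e.\ downward out of \( N \). (If \( N \) is the root, this is the outgoing conclusion edge.) This is the downward arrow in the diagram.

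\textbf{Consistency.} The only possible obstacle is a conflict, namely an edge that receives orientations from both rules. This cannot happen at \( N \). The path to the root consists of \( N \)'s parent edge together with edges of the ancestors of \( N \), and by the choice of \( N \) no ancestor lies on the cycle. So the parent edge is not a cycle edge, and the child edges are not on the root path. Each of the three edges at \( N \) therefore receives exactly one orientation, namely the one displayed, which proves the claim.
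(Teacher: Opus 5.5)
Your proposal is correct and takes the same approach as the paper. The paper's proof is just ``obvious from the definition,'' and you have written out that reasoning: the cycle rule orients the two child edges of \( N \), the root-path rule orients the parent edge, and the choice of \( N \) (no ancestor on the cycle) ensures that no edge at \( N \) is oriented by both rules.
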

\begin{proof}[Proof of the claim]\renewcommand{\qedsymbol}{$\blacksquare$}
    Obvious from the definition.
\end{proof}

\begin{claim*}
    The orientation of a \( \otimes \)-node other than \( N \) is one of the following 
    (modulo the swapping of left and right children):
    \begin{gather*}
        \xymatrix@C=3mm@R=4mm{
            & \tiny (1) & \\
            &
            {\otimes} \ar[ul] \ar@{--}[ur] &
            \\
            & \ar[u] &
        }
        \quad
        \xymatrix@C=3mm@R=4mm{
            \ar[dr] & \tiny (2) & \\
            &
            {\otimes} \ar@{--}[ur] \ar[d] &
            \\
            & &
        }
        \quad
        \xymatrix@C=3mm@R=4mm{
            & \tiny (3) & \ar[dl] \\
            &
            {\otimes} \ar[ul] \ar@{--}[d] &
            \\
            & &
        }
        \quad
        \xymatrix@C=3mm@R=4mm{
            \ar@{--}[dr] & \tiny (4) & \\
            &
            {\otimes} \ar@{--}[ur] &
            \\
            & \ar@{--}[u] &
        }
    \end{gather*}
\end{claim*}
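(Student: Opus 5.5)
We prove the claim by examining how a \( \otimes \)-node \( M \neq N \) can meet the two sources of orientation: the chosen cycle \( C \) (a simple cycle in the switching graph, traversed from \( N \) up its left edge and back down its right edge) and the syntax-tree path \( P \) from \( N \) to the root. The edges at \( M \) are the edge to its left child, the edge to its right child, and the edge to its parent. We orient each such edge only if it lies on \( C \) or on \( P \). Edges on neither remain undirected (drawn dashed).

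First we show that \( C \) and \( P \) share no edges and that \( M \) cannot lie on both. By the choice of \( N \), no proper ancestor of \( N \) lies on \( C \). Every edge of \( P \) joins a node on the path from \( N \) to the root to its parent, so its lower endpoint is \( N \) or one of its ancestors, and its upper endpoint is a proper ancestor of \( N \). Since the upper endpoint is not on \( C \), no edge of \( P \) is on \( C \). For the same reason, a node \( M \neq N \) lying on \( P \) is a proper ancestor of \( N \) and hence not on \( C \). So \( M \) falls into exactly one of three cases.

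If \( M \) is on neither \( C \) nor \( P \), all its edges are unoriented, giving picture (4). If \( M \) is on \( P \) but not on \( C \), then exactly two of its edges lie on \( P \): the edge from the child lying on the path from \( N \), and the edge to its parent. Both are oriented toward the root, so one child edge points into \( M \), the parent edge points out, and the other child edge is unoriented. This is picture (1), up to left/right swap.

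If \( M \) is on \( C \) (and so not on \( P \)), then, since \( C \) is a simple cycle and \( M \neq N \), \( C \) uses exactly two of the three edges at \( M \). Both are oriented consistently with the traversal: one into \( M \), one out. The third edge is on neither \( C \) nor \( P \), so it is unoriented. Because \( \otimes \)-nodes are not switched, every pair of \( M \)'s edges is available to the cycle, and we get three subcases. If \( C \) uses the parent edge and one child edge with the child edge incoming, we get picture (1) again. If it uses the parent edge and one child edge with the parent edge incoming, we get picture (2). If it uses both child edges, we get picture (3), one child edge in and the other out, up to swap.

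The only delicate point is the disjointness of \( C \) and \( P \). Without it, an edge could receive conflicting orientations, or a node could carry orientations coming from both sources. This disjointness is exactly what the minimality condition on \( N \) provides. The remaining step is a routine enumeration.
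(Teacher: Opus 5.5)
Your proof is correct and follows the paper's own reasoning: a $\otimes$-node other than $N$ is either not visited, giving (4), or simply passed through by the cycle or by the path to the root, giving (1)--(3). Your explicit check that the cycle and the root path share no edges and no node other than $N$ fills in what the paper leaves implicit. One slip is that you have swapped the labels of (1) and (2): in the diagrams, (2) is child $\to \otimes \to$ parent (the downward configuration every node on the root path receives) and (1) is parent $\to \otimes \to$ child, which is harmless here since both are allowed but matters for the subsequent induction, which checks (2) and not (1).
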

\begin{proof}[Proof of the claim]\renewcommand{\qedsymbol}{$\blacksquare$}
    Because \(\otimes\)-nodes other than \( N \) are either not visited or are simply passed through.
\end{proof}

\begin{claim*}
    The orientation of a \( \llpar \)-node is one of the following (modulo the swapping of left and right children):
    \begin{gather*}
        \xymatrix@C=3mm@R=4mm{
            & \tiny (5) & \\
            &
            {\llpar} \ar[ul] \ar@{--}[ur] &
            \\
            & \ar[u] &
        }
        \quad
        \xymatrix@C=3mm@R=4mm{
            \ar[dr] & \tiny (6) & \\
            &
            {\llpar} \ar[d] \ar@{--}[ur] &
            \\
            & &
        }
        \quad
        \xymatrix@C=3mm@R=4mm{
            & \tiny (7) & \\
            &
            {\llpar} \ar@{--}[ul] \ar@{--}[d] \ar@{--}[ur] &
            \\
            & &
        }
    \end{gather*}
\end{claim*}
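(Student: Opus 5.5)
The claim follows by locating each \( \llpar \)-node relative to the chosen cycle and to the root path; throughout, the switching is the one fixed above. A \( \llpar \)-node \( P \) falls into exactly one of three cases: \( P \) lies on the cycle, \( P \) lies on the path from \( N \) to the root, or \( P \) lies on neither. First I will check that these cases are disjoint and that no edge receives two orientations. The nodes of the root path other than \( N \) are proper ancestors of \( N \), and by the choice of \( N \) no ancestor of \( N \) is on the cycle. Hence the cycle and the root path share only the node \( N \), which is a \( \otimes \)-node. They also share no edge, since every root-path edge has an ancestor of \( N \) as an endpoint.

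Next I will show that the unchosen child edge of \( P \) is never oriented. Every cycle edge belongs to the associated graph of the switching. Every root-path edge is also present in that graph: this is exactly what the reachability assumption ensures, after the switches along the path from \( N \) to the root have been adjusted. The unchosen child edge of \( P \) is deleted from the associated graph, so it lies on neither the cycle nor the root path, and it stays unoriented in every case.

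Case \( P \) on the cycle. The cycle is simple, so exactly two of its edges meet \( P \). In the switched graph \( P \) has only two incident edges, namely its parent edge and its chosen child edge, so both are cycle edges. The orientation of the cycle, fixed from \( N \), is a consistent direction of traversal. Either the traversal arrives at \( P \) from its parent and leaves towards the chosen child, so both arrows point upward; this is configuration (5). Or it arrives from the chosen child and leaves towards the parent, so both arrows point downward; this is configuration (6). No other possibility exists. In particular, both edges cannot point into \( P \), or both out of it, because the cycle is traversed in a single direction.

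Case \( P \) on the root path, \( P \neq N \). The path reaches \( P \) through one child edge, necessarily the chosen one, and leaves through the parent edge. Both are oriented toward the root, hence downward, which is configuration (6). The root node itself has no parent edge and only the incoming child arrow; this is the degenerate form of (6) with an omitted output edge, like the bottom edge drawn in the figures.

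Case \( P \) on neither. None of its edges is oriented, giving configuration (7).

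The only real obstacle is the disjointness bookkeeping, and specifically the use of the reachability assumption. It guarantees that the root path uses only chosen \( \llpar \)-edges. This excludes a \( \llpar \)-node whose unchosen child edge carries an arrow, and a node that is simultaneously on the cycle and on the path, which could otherwise produce a configuration outside (5)--(7).
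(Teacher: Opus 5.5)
Your proof is correct and uses the same three-way case split as the paper. Unvisited \( \llpar \)-nodes give (7), nodes on the path from \( N \) to the root give (6), and nodes on the cycle give (5) or (6), because the cycle uses only edges kept by the Danos--Regnier switching and so cannot pass through both children of a \( \llpar \)-node. Your disjointness bookkeeping makes explicit what the paper leaves implicit, though the appeal to the adjusted switching along the root path is not actually needed: an ancestor of \( N \) is off the cycle, so its other child edge is unoriented in any case.
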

\begin{proof}[Proof of the claim]\renewcommand{\qedsymbol}{$\blacksquare$}
    An unvisited node is oriented as \( (7) \).
    A node in the path from \( N \) to the root is oriented as \( (6) \).
    Otherwise, a node is in the cycle.
    The claim says that
    \begin{equation*}
        \xymatrix@C=3mm@R=3mm{
            & & \ar[dl] \\
            &
            {\llpar} \ar[ul] \ar@{--}[d] &
            \\
            & &
        }
    \end{equation*}
    is prohibited.
    This is because all edges in the cycle are those selected by a Danos-Regnier switch.
\end{proof}
    
Let \( \Grp \) be a single-object groupoid and \( \creed \) be a creed on \( \Grp \) such that both \( \creed_a \) and \( \creed^{\bot}_a \) are non-trivial, where \( a \in \Grp \) is the unique object.
For example, \( \Grp(a,a) = \Int_2 \times \Int_3 \), \( \creed_a = \Int_2 \times \{ 0 \} \) and \( \creed^\bot_a = \{ 0 \} \times \Int_3 \).
We choose \( \alpha_+ \in (\creed_a \setminus \creed^\bot_a) \) and \( \alpha_- \in (\creed^{\bot}_a \setminus \creed_a) \).
Let \( \vec{\creed} = (\creed, \dots, \creed) \) and \( \vec{\Grp} = \Underlying(\vec{\creed}) \).

Consider an element experiment \( x \) for \( \pi \) in which all labels are \( \ident_a \).
For each occurrence of a literal, we assign an endo-morphism in \( \Grp(a,a) \) by the following rules:
\begin{itemize}
    \item For an axiom link oriented \( \PropVar \!\to\! \PropVar^{\bot} \), we assign \( \alpha_+ \) for \( \PropVar \) and \( \alpha_+^{-1} \) for \( \PropVar^{\bot} \).
    \item For an axiom link oriented \( \PropVar \!\leftarrow\! \PropVar^{\bot} \), we assign \( \alpha_- \) for \( \PropVar \) and \( \alpha_-^{-1} \) for \( \PropVar^{\bot} \).
    \item For an axiom link with no orientation \( \PropVar \)---\( \PropVar^{\bot} \), we assign \( \ident_a \) for both \( \PropVar \) and \( \PropVar^{\bot} \).
\end{itemize}
The assignment determines an endo-morphism \( \alpha \) in \( \sem{A} \).

\begin{example}
    Consider the partially directed graph in \cref{fig:criterion:partial-orientation}.
    The assignment is shown in \cref{fig:criterion:stabilizer-assignment}.
\qed
\end{example}
\begin{figure}[t]
    \begin{tikzpicture}[yscale=0.6]
	\begin{pgfonlayer}{nodelayer}
		\node [style=none] (00) at (0.5, -0.7) {};
		\node [style=none] (0) at (0.5,0) {$\llpar$};
		\node [style=none] (1) at (-1, 1) {$\otimes$};
		\node [style=none] (2) at (-2, 2) {$\llpar$};
		\node [style=none] (3) at (-2.5, 3) {$\PropVar_1$};
		\node [style=none] (4) at (-1.5, 3) {$\PropVar_2$};
		\node [style=none] (5) at (-0.5, 2) {$\PropVar_2^\bot$};
		\node [style=none] (6) at (2, 1) {$\llpar$};
		\node [style=none] (7) at (1, 2) {$\llpar$};
		\node [style=none] (8) at (0.5, 3) {$\PropVar_2$};
		\node [style=none] (9) at (1.5, 3) {$\PropVar_2^\bot$};
		\node [style=none] (10) at (2.5, 2) {$\PropVar_1^\bot$};
		\node [style=none] (11) at (-2.5, 4) {};
		\node [style=none] (12) at (-1.5, 3.5) {};
		\node [style=none] (13) at (-0.5, 3.5) {};
		\node [style=none] (14) at (0.5, 3.5) {};
		\node [style=none] (15) at (1.5, 3.5) {};
		\node [style=none] (16) at (2.5, 4) {};
		\node [style=none] (3s) at (-2.8, 3.4) {$\ident$};
		\node [style=none] (4s) at (-1.8, 3.4) {$\alpha_{+}$};
		\node [style=none] (5s) at (-0.8, 2.5) {$\alpha_{+}^{-1}$};
		\node [style=none] (8s) at (0.1, 3.4) {$\ident$};
		\node [style=none] (9s) at (1.8, 3.4) {$\ident$};
		\node [style=none] (10s) at (2.2, 2.4) {$\ident$};
	\end{pgfonlayer}
	\begin{pgfonlayer}{edgelayer}
		\draw[<-] (00) to (0);
		\draw[<-] (0) to (1);
		\draw[->] (1) to (2);
		\draw[<-] (1) to (5);
		\draw[dashed] (2) to (3);
		\draw[->] (2) to (4);
		\draw[dashed] (0) to (6);
		\draw[dashed] (6) to (7);
		\draw[dashed] (6) to (10);
		\draw[dashed] (7) to (8);
		\draw[dashed] (7) to (9);
		\draw[dashed] (3) to (11.center);
		\draw (4) to (12.center);
		\draw[<-] (5) to (13.center);
		\draw[dashed] (8) to (14.center);
		\draw[dashed] (9) to (15.center);
		\draw[dashed] (10) to (16.center);
		\draw[dashed] (11.center) to (16.center);
		\draw (12.center) to (13.center);
		\draw[dashed] (14.center) to (15.center);
	\end{pgfonlayer}
\end{tikzpicture}
     \caption{The assignment of morphisms for \cref{fig:criterion:stabilizer-assignment}.  The assignment determines the endo-morphism \( (((\ident, \alpha_-), \alpha_{-}^{-1}), ((\ident, \ident), \ident)) \) on \( (((a,a),a),((a,a),a)) \).  This is a stabilizer of the element experiment in which all labels on axiom links are identities.}
    \label{fig:criterion:stabilizer-assignment}
\end{figure}
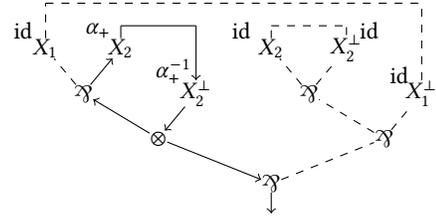

The assignment determines an endo-morphism on each node.
For example, the left-most \( \llpar \)-node in \cref{fig:criterion:stabilizer-assignment} is associated with \( (\ident, \alpha_+) \colon (a,a) \longrightarrow (a,a) \), which is a morphism in \( \sem{\PropVar_1 \llpar \PropVar_2}_{\vec{\creed}} \).
In particular, it determines an endo-morphism in the interpretation of the root formula \( \sem{A}_{\vec{\creed}} \).
By construction, this endo-morphism is a stabilizer of the element experiment \( x \), in which all axiom links are labelled by the identity.

It suffices to prove that this endo-morphism does not belong to the creed for \( \sem{A}_{\vec{\creed}} \).
We prove the following claim by induction on the structure of the formula.
We define the orientation of a node as the orientation of the edge extending downward from it.
For example, in \cref{fig:criterion:partial-orientation}, the left-most \( \llpar \)-node is upward, whereas the right-most \( \llpar \)-node is unoriented.
\begin{claim*}
    Let \( n \) be a node corresponding to a formula \( B \) and \( \creedb = \sem{B}_{\vec{\creed}} \).
    \begin{itemize}
        \item If \( n \) is directed downward, the endo-morphism assigned to \( n \) belongs to \( \creedb^\bot(b)\setminus\{\ident\} \) (where \( b \) is the unique object).
        \item If \( n \) is not oriented, the endo-morphism assigned to \( n \) belongs to \( \creedb^\bot(b) \) (where \( b \) is the unique object).
    \end{itemize}
\end{claim*}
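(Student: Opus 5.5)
The plan is to strengthen the statement slightly, so that the unoriented case becomes trivial, and then to run a case analysis over the local orientation patterns listed in the three preceding claims. \Cref{lem:creed:creed-sufficient-condition} supplies the creed membership in every inductive step.

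\emph{Step 1: unoriented subtrees are trivial.} Every oriented edge lies on the chosen cycle or on the path from \( N \) to the root. Hence the parent of an oriented node is itself on the cycle or on that path, so the set of oriented nodes is closed under taking parents. Contrapositively, if \( n \) is unoriented then every node above \( n \) in the syntax tree is unoriented. In particular, every axiom link reached from \( n \) is unoriented, so every literal below \( n \) carries \( \ident_a \), and the endo-morphism at \( n \) is the identity. The identity lies in every creed, in particular in \( \creedb^\bot \), which settles the second bullet outright. It also means that, in the inductive cases below, any dashed child contributes an identity component.

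\emph{Step 2: base case for downward literals.} Consider a literal occurrence whose lower edge points downward. If it is an occurrence of \( \PropVar \), its axiom link is oriented \( \PropVar \leftarrow \PropVar^\bot \), so it is labelled \( \alpha_- \in \creed^\bot_a \setminus \{\ident\} \), as required. If it is an occurrence of \( \PropVar^\bot \), its link is oriented \( \PropVar \to \PropVar^\bot \) and it is labelled \( \alpha_+^{-1} \). Read in \( \Grp^{\op} \), this is the element \( \alpha_+ \) of \( \creed_a \). Since \( \creed \) is Boolean and \( \sem{\PropVar^\bot}_{\vec\creed} = (\Grp^{\op},\creed^\bot) \), we have \( (\sem{\PropVar^\bot}_{\vec\creed})^\bot_a = \creed^{\bot\bot}_a = \creed_a \). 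Moreover \( \alpha_+ \neq \ident \), since \( \alpha_+ \notin \creed^\bot_a \ni \ident \). Creeds are closed under inverses, so the conversion between \( \alpha_+^{-1} \) and \( \alpha_+ \) is harmless.

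\emph{Step 3: inductive cases for downward compound nodes.} A downward \( \otimes \)-node other than \( N \) has pattern (2): one child is downward and the other is dashed. The start node \( N \) has pattern (0): the cycle leaves to the upper-left and returns from the upper-right, so the right child's lower edge points into \( N \), making that child downward. In both cases the induction hypothesis places the downward child's component in \( \creedb_i^\bot \setminus \{\ident\} \). The clause \( (\alpha_{--},\beta) \in (\creed \otimes \creedb)^\bot \) of \cref{lem:creed:creed-sufficient-condition}, used up to the left/right symmetry, then gives membership in \( \creedb^\bot \) with the other component arbitrary. The pair is not the identity because one component is not. A downward \( \llpar \)-node has pattern (6): one child is downward and the other is dashed. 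The downward child's component lies in \( \creedb_1^\bot \setminus\{\ident\} \subseteq \creedb_1^\bot \), and by Step 1 the dashed child's component is \( \ident \in \creedb_2^\bot \). The clause \( (\alpha_-,\beta_-) \in (\creed \llpar \creedb)^\bot \) yields membership, and nontriviality again comes from the first component. Upward nodes, with patterns (1), (3) and (5), never occur as children of downward nodes, so no claim about them is needed.

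The main obstacle I expect is making Step 1 rigorous. This means arguing carefully that oriented edges only ever attach to parents on the cycle or on the root path, and that the case list for children of downward nodes is exhaustive, especially at \( N \), where the orientation convention must force a downward child. Once these are in place, the root, being on the path from \( N \), is downward. Its endo-morphism therefore lies in \( \sem{A}^\bot_{\vec\creed} \setminus \{\ident\} \), and hence outside \( \sem{A}_{\vec\creed} \) because the two creeds are orthogonal.
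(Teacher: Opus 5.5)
Your downward cases (literals, (0), (2)) are handled correctly. The gap is Step 1, which is false. As a result the second bullet is not settled outright, and your case (6) rests on a wrong premise.

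\textbf{Why Step 1 fails.} The inference ``the parent of an oriented node lies on the cycle or on the root path, hence is oriented'' breaks at a \( \otimes \)-node \( M \neq N \) that the cycle passes through via \emph{both} of its premises. Such an \( M \) lies on the cycle. Its lower edge, however, is neither on the cycle nor on the path from \( N \) to the root, because no node strictly below \( N \) is on the cycle. This is pattern (3), which you listed among the upward patterns. Its lower edge is dashed, so the node is unoriented, yet its children are oriented and its endomorphism is non-trivial.

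\textbf{A concrete instance.} Take \( (\PropVar \otimes \PropVarb) \llpar (\PropVar^\bot \otimes \PropVarb^\bot) \) with the two axiom links, the switching keeping the left premise of the root, the cycle through both \( \otimes \)-nodes, and \( N = \PropVar \otimes \PropVarb \). The link on \( \PropVar \) is oriented \( \PropVar \to \PropVar^\bot \), and the link on \( \PropVarb \) is oriented \( \PropVarb \leftarrow \PropVarb^\bot \). So the unoriented node \( \PropVar^\bot \otimes \PropVarb^\bot \) carries \( (\alpha_+^{-1}, \alpha_-^{-1}) \neq \ident \). This node is exactly the dashed child of the root, which is a \( \llpar \)-node of pattern (6). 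Your claim that ``the dashed child contributes \( \ident \)'' therefore fails precisely where you use it.

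\textbf{How to repair it.} Drop Step 1 and prove both bullets simultaneously by induction from the leaves, treating the unoriented patterns as genuine cases.
\begin{itemize}
\item An unoriented literal has an unoriented axiom link, since a literal on the cycle uses both of its edges. So it carries \( \ident \).
\item In case (3), the downward child's component lies in \( \creedb_i^\bot \setminus \{\ident\} \) by induction. The upward child's component is uncontrolled but irrelevant, because the clause \( (\alpha_{--},\beta) \in (\creed \otimes \creedb)^\bot \) of \cref{lem:creed:creed-sufficient-condition} allows an arbitrary second component.
\item In case (7), use \( (\alpha_-,\beta_-) \in (\creed \llpar \creedb)^\bot \).
\item In case (4), if either component is non-identity, the \( (\alpha_{--},\beta) \) clause applies up to symmetry. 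Otherwise the endomorphism is \( \ident \).
\end{itemize}
With the second bullet available as an induction hypothesis, case (6) goes through with the dashed child's component merely in \( \creedb_2^\bot \).

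A smaller point: upward nodes do occur as children of downward nodes, namely the left child of \( N \). This is harmless for the same reason as in case (3).
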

\begin{proof}[Proof of the claim]\renewcommand{\qedsymbol}{$\blacksquare$}
    By easy induction (starting from the leaves of the syntax tree).
    We need to check the cases \( (0) \), \( (2) \), \( (3) \), \( (4) \), \( (6) \) and \( (7) \).
    It suffices to directly apply \cref{lem:creed:creed-sufficient-condition}.
\end{proof}

Since the edge below the root node is oriented downward, \( \gamma \) belongs to \( (\sem{\Formula}_{\vec{\creed}})^\bot_\star \) (where \( \star \) is the unique object of \( \sem{\Formula}_{\vec{\creed}} \))
and \( \gamma \neq \ident \).
So \( \gamma \notin (\sem{\Formula}_{\vec{\creed}})_\star \).
However, \( \gamma \in \Stabiliser(x) \), so \( \sem{\pi}_{\vec{\Grp}} \) violates the creed.

\begin{remark}
    We discuss the relationship between our proof and the proof of Retor{\'e}~\cite{Retore1997}.
    As already mentioned, the two proofs are the same at a high level.
    Our our proof assigns an endomorphism to each edge of a proof structure and check whether it is a morphism allowed by the creed.
    Retor{\'e}'s proof considers two experiments of the same proof structure and discusses whether the pair of elements assigned to each edge is coherent.
    Roughly speaking, the two proofs correspond by reading ``creed'' as ``coherence.''
    
    However, this correspondence is only superficial at present.
    We do not know a way to construct a groupoid equipped with a creed from a coherence space, nor a construction in the opposite direction.
    We beleave that understanding the mathematical background of the above-mentioned correspondence is an important topic for profunctorial models.
\end{remark}
 \section{Totality for Profunctors}\label{sec:totality}
Although the concepts of creeds and kits provide a correctness criterion,
these concepts are insufficient to characterize the profunctors definable by correct \( \mathsf{MLL+Mix} \) proofs.
For example, let \( \pi \) and \( \pi' \) be correct proof structures of the same sequent, say \( \Gamma \), and consider the profunctor \( \sem{\pi} + \sem{\pi'}\).
This is not definable but passes the creed-kit criterion: every \( x \in (\sem{\pi} + \sem{\pi'})(a) \) is either \( x \in \sem{\pi}(a) \) or \( x \in \sem{\pi'}(a) \), so the stabilizer \( \Stabiliser(x) \) is admitted by the creed \( \sem{\Gamma} \).

A property of definable profunctors that creeds and kits do not capture is a certain kind of ``orbit uniqueness.''
To describe this property, we give a profunctor version of totality spaces~\cite{Loader1994}.

\subsection{Preliminaries: Totality Spaces}
Let us briefly review Loader's \emph{totality space}~\cite{Loader1994}, which inspires the construction of our profunctorial model. 
Let \( A \) be a set, and \( U \subseteq \Rel(I, A) \) and \( X \subseteq \Rel(A, I) \).
We identify \( \Rel(I,A) \) and \( \Rel(A, I) \) with the powerset \(\Powerset(A)\) of \( A \).
We define \( U \Bot X \) as follows:
\begin{equation*}
    U \Bot X
    \quad\defiff\quad
    \forall u \in U. \forall x \in X. \#(u \cap x) = 1.
\end{equation*}
We write \( X^\Bot \defeq \{ u \subseteq A \mid \forall x \in X. \#(x \cap u) = 1 \} \).

The category \( \Tot \) of \emph{totality spaces}~\cite{Loader1994}
is given by:
\begin{itemize}
    \item An object is a triple \( (A, U, X) \), where \( A \) is a set and \( U,X \subseteq \Powerset(A) \) such that \( X = U^{\Bot} \), \( U = X^{\Bot} \) and \( \bigcup U = A = \bigcup X \).
    \item A morphism \( R \colon (A, U, X) \longrightarrow (B, V, Y) \) is a relation \( R \subseteq X \times Y \) such that \( \forall u \in U. (u;R) \in V \) and \( \forall y \in Y. (R;y) \in X \). 
\end{itemize}
The category \( \Tot \) is a model of \( \mathsf{MLL+Mix} \).
Loader~\cite{Loader1994} gave a fully definable model of \( \mathsf{MLL+Mix} \) based on \( \Tot \).

An interesting property of a totality space is that every element in \( U \) is maximal with respect to \( \subseteq \).
\begin{lemma}[Lemma~1 in~\cite{Loader1994}]\label{lem:total:loader-maximality}
    Let \( (A, \TotalElement(A), \CototalElement(A)) \) be a totality space.
    If \( x,y \in \TotalElement(A) \) and \( x \subseteq y \), then \( x = y \).
\qed
\end{lemma}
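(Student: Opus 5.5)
The plan is a direct argument from the definition \( \TotalElement(A) = \CototalElement(A)^{\Bot} \) and the covering condition \( \bigcup \CototalElement(A) = A \). Fix \( x, y \in \TotalElement(A) \) with \( x \subseteq y \). To show \( y \subseteq x \), I take an arbitrary \( a \in y \) and aim to prove \( a \in x \).

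The first step uses the covering condition \( \bigcup \CototalElement(A) = A \) to choose a cototal element \( c \in \CototalElement(A) \) with \( a \in c \). This is the only place where the side condition on totality spaces is needed. Without it an element of \( A \) might meet no cototal set, and the argument would break down.

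Next I apply orthogonality twice. Since \( y \in \CototalElement(A)^{\Bot} \), we have \( \#(y \cap c) = 1 \), and because \( a \in y \cap c \) this forces \( y \cap c = \{ a \} \). Likewise \( x \in \CototalElement(A)^{\Bot} \) gives \( \#(x \cap c) = 1 \), so \( x \cap c \) is a singleton \( \{ a' \} \). From \( x \subseteq y \) we get \( x \cap c \subseteq y \cap c = \{ a \} \), hence \( a' = a \) and therefore \( a \in x \). As \( a \in y \) was arbitrary, \( y \subseteq x \), and so \( x = y \).

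I do not expect a real obstacle. The only point needing care is that one must use the \emph{lower} bound \( \#(x \cap c) \ge 1 \) for \( x \), not merely the upper bound. This is what rules out a proper subset of a total set, and it is exactly why a covering cototal set \( c \) through each point \( a \) is required.
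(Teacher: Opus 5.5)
Your proof is correct and complete. You cover \( a \in y \) by a cototal set \( c \) (using \( \bigcup \CototalElement(A) = A \)) and compare \( \emptyset \neq x \cap c \subseteq y \cap c = \{ a \} \); this is the standard argument for Loader's Lemma~1, which the paper cites without reproving. It is also the relational counterpart of the paper's proof of \cref{lem:totality:basic-property-of-total-orthogonality-on-groups}: there, existence of a factorisation plays the role of your lower bound for the smaller set, and triviality of the intersection plays the role of your upper bound for the larger one.
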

A similar property holds in our model as well, and it plays a crucial role in our proof of full definability.

\newcommand{\TotalOrthogonal}{\Bot}
\subsection{Totality for Profunctors}
In this subsection, we present a profunctorial model corresponding to the totality space model \( \Tot \) and study its basic properties.
To this end, it suffices to equip \( \ProfCat \) with an appropriate orthogonality.
Our orthogonality is obtained by combining the orthogonality \( \Bot \) for the totality space model with the stable orthogonality \( \bot \) for profunctors.
\begin{definition}
    The \emph{stably total orthogonality} \( \TotalOrthogonal \) on \( \ProfCat \) is defined as follows:
    for \( \Profunctor \in \ProfCat(I, \Grp) \) and \( \Profunctorb \in \ProfCat(\Grp, I) \),
    \begin{align*}
\Profunctor \TotalOrthogonal \Profunctorb &\quad\defiff\quad \Profunctor \orthogonal \Profunctorb \mbox{ and } (\Profunctor ; \Profunctorb) \cong \ident_I,
    \end{align*}
    where \( \orthogonal \) is the stable orthogonality in \cref{def:creed:jointly-orthogonal}.
\qed
\end{definition}
\begin{remark}
    The stably total orthogonality is the combination of the orthogonality \( \bot \), which captures the stability, and \( \Profunctorb \circ \Profunctor \cong \ident_I \).
    The former is already incorporated in \( \SProf \), so the totality orthogonality can also be defined as an orthogonality on \( \SProf \) given by
    \begin{equation*}
        \Profunctor \TotalOrthogonal_{\SProf} \Profunctorb \quad\defiff\quad (\Profunctor ; \Profunctorb) \cong \ident_I,     
    \end{equation*}
    for \( \Profunctor \in \SProf(I, \Grp) \) and \( \Profunctorb \in \SProf(\Grp, I) \).
    This is a focused orthogonality, so it is well-behaved (cf.~\cite{Hyland2003}).
\qed
\end{remark}

It is easy to observe that stably total orthogonality corresponds, through the functor \( \Truncation{-} \colon \Prof \longrightarrow \Rel \), to the relational version of the total orthogonality (i.e., \( \Profunctor \TotalOrthogonal \Profunctorb \) implies \( \Truncation{\Profunctor} \Bot \Truncation{\Profunctorb} \)).
Hence, stably total orthogonality may be regarded as a refinement of the total orthogonality on relations that additionally captures the behaviour on morphisms.

To investigate how stably total orthogonality behaves on morphisms, let us first consider the simplest case, namely that of a single-object groupoid (i.e., a group).
Interestingly, the stably total orthogonality coincides with being a strict factorization system.
\begin{proposition}\label{prop:totality:strict-factorization-system}
    Let \( \Grp \) be a groupoid, \( a \in \Grp \), \( \Group \le \Grp(a,a) \) and \( \Groupb \le \Grp^{\op}(a,a) \).
    Then \( \InducedProf{\Group} \TotalOrthogonal \InducedProf{\Groupb} \) if and only if \( (\Group, \Groupb) \) forms a strict factorization system, i.e.~every \( \alpha \in \Grp(a,a) \) factors uniquely as \( \alpha = gh \) with \( g \in \Group \) and \( h \in \Groupb \).
\end{proposition}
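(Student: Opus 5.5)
The plan is to compute the composite \( \InducedProf{\Group} ; \InducedProf{\Groupb} \colon I \profarrow I \) explicitly as a set, and then to unfold the orthogonality condition \( \InducedProf{\Group} \orthogonal \InducedProf{\Groupb} \) with \cref{lem:creed:orthogonality-for-groups}. Both conditions will then be read as statements about the map \( \Group \times \Groupb \to \Grp(a,a) \), \( (g,h) \mapsto gh \).

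\emph{Step 1: the composite.} By definition, \( (\InducedProf{\Group};\InducedProf{\Groupb})(\star,\star) = \left(\coprod_{b} \InducedProf{\Group}(b) \times \InducedProf{\Groupb}(b)\right)/{\sim} \). Every element of \( \InducedProf{\Group}(b) \) has the form \( \Group \cdot \beta \) with \( \beta \in \Grp(a,b) \), so by the equivalence \( (x\cdot\beta, y) \sim (x, \beta\cdot y) \) every class has a representative with \( b = a \) and first component \( \Group \) itself. The class is then \( (\Group, \Groupb\gamma) \) with \( \gamma \in \Grp(a,a) \), up to the residual action of \( \Group \): for \( g \in \Group \) we have \( \Group\cdot g = \Group \), so \( (\Group, \Groupb\gamma) \sim (\Group, \Groupb\gamma g) \), with the composition order fixed consistently with the diagrammatic convention. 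This identifies the composite with the double-coset set \( \Groupb \backslash \Grp(a,a) / \Group \), or its mirror depending on conventions.

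\emph{Step 2: the totality condition.} Hence \( \InducedProf{\Group};\InducedProf{\Groupb} \cong \ident_I \) holds iff there is exactly one double coset, i.e.\ iff every \( \alpha \in \Grp(a,a) \) can be written as \( \alpha = gh \) with \( g \in \Group \) and \( h \in \Groupb \). So totality gives \emph{existence} of the factorization.

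\emph{Step 3: the stable orthogonality.} Uniqueness of factorizations \( \alpha = gh \) for all \( \alpha \) is equivalent to \( \Group \cap \Groupb = \{\ident\} \), since \( gh = g'h' \) iff \( g'^{-1}g = h'h^{-1} \in \Group\cap\Groupb \). By \cref{lem:creed:orthogonality-for-groups}, orthogonality asks for more: \( \Group' \cap \Groupb' = \{\ident\} \) for all conjugates \( \Group' \Conjugate \Group \) and \( \Groupb' \Conjugate \Groupb \).

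For the direction from the strict factorization system to \( \TotalOrthogonal \), let \( \gamma, \delta \in \Grp(a,a) \) and write \( \gamma\delta^{-1} = gh \). Then
\[
  \gamma^{-1}\Group\gamma \cap \delta^{-1}\Groupb\delta
  = \delta^{-1}\bigl( h^{-1}(\Group \cap \Groupb) h \bigr)\delta ,
\]
using \( \gamma = gh\delta \), \( g^{-1}\Group g = \Group \) and \( h\Groupb h^{-1} = \Groupb \). This intersection is trivial. So intersecting arbitrary conjugates reduces to intersecting \( \Group \) and \( \Groupb \) after a single common conjugation, and the factorization supplies exactly the element \( h \) that makes this reduction work.

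For the converse, orthogonality specializes to \( \Group\cap\Groupb = \{\ident\} \), which gives uniqueness, and totality gives existence.

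\emph{Main obstacle.} I expect the hardest step to be Step 1. It requires getting the left/right conventions of the actions right, including the \( \Grp^{\op} \) twist on \( \Groupb \), so that the double cosets come out as \( \Group\Groupb \) rather than \( \Groupb\Group \). It also requires checking that the coend quotient really is exactly this double-coset set. If the conventions force the order \( hg \), the same argument applies after inverting, since \( \alpha \mapsto \alpha^{-1} \) swaps the two factorization orders.
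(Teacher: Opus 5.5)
Your proposal is correct and is essentially the paper's proof. Existence of $\alpha = gh$ comes from the composite $\InducedProf{G};\InducedProf{H}$ having a single class, and uniqueness comes from $G \cap H = \{\ident\}$ via \cref{lem:creed:orthogonality-for-groups}. For the converse you factor the same relative conjugator as the paper ($\gamma\delta^{-1}$ in your notation, $\beta;\gamma$ in the paper's), and you conclude with a conjugation identity where the paper does an element chase using uniqueness. One convention note: in the paper, $\gamma\cdot H = \{\gamma;h \mid h \in H\}$ and $(G,\gamma\cdot H)\sim(G, g\gamma\cdot H)$ for $g \in G$, so the composite is directly the double-coset set $G\backslash\Grp(a,a)/H$ and the order $gh$ comes out without your inversion fallback.
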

\begin{proof}
    Assume that \( \InducedProf{\Group} \TotalOrthogonal \InducedProf{\Groupb} \).
    Since \( \InducedProf{\Group} ; \InducedProf{\Groupb} \cong \ident_I \), we have \( (\Group \cdot \alpha_1, \alpha_2 \cdot \Groupb) \sim (\Group, \Groupb) \) for every \( \alpha_1, \alpha_2 \in \Grp(a,a) \).
    In particular, \( (\Group, \alpha \cdot \Groupb) \sim (\Group, \Groupb) \), so there exists \( \alpha_0 \in \Grp(a,a) \) such that \( (\Group \cdot \alpha_0^{-1}, \alpha_0 \alpha \cdot \Groupb) = (\Group, \Groupb) \), i.e.~\( \alpha_0^{-1} \in \Group \) and \( \alpha_0 \alpha \in \Groupb \).
    So we can take \( g = \alpha_0^{-1} \) and \( h = (\alpha_0 \alpha) \).
This decomposition is unique.
    Assume \( g_1 h_1 = g_2 h_2 \) for some \( g_1, g_2 \in \Group\) and \( h_1, h_2 \in \Groupb \).
    Then \( g_2^{-1} g_1 = h_2 h_1^{-1} \in (\Group \cap \Groupb) \).
Since \( \InducedProf{\Group} \orthogonal \InducedProf{\Groupb} \) implies \( (\Group \cap \Groupb) = \{ \ident \} \) by Lemma~\ref{lem:creed:orthogonality-for-groups},
    we have \( g_2^{-1} g_1 = h_2 h_1^{-1} = \ident \), so \( g_1 = g_2 \) and \( h_1 = h_2 \).

    Conversely, assume that \( \Group \) and \( \Groupb \) form a strict factorization system.
    We first show that \( \InducedProf{\Group} \orthogonal \InducedProf{\Groupb} \) by using Lemma~\ref{lem:creed:orthogonality-for-groups}.
    Suppose \( \Group' \Conjugate \Group \) and \( \Groupb' \Conjugate \Groupb \), and let \( \alpha \in \Group' \cap \Groupb' \).
    Then \( \alpha = (\beta^{-1};g;\beta) = (\gamma;h;\gamma^{-1}) \) for some \( \beta, \gamma \in \Grp(a,a) \), \( g \in \Group \) and \( h \in \Groupb \).
    Let \( \beta' = (\beta;\gamma) \).
    Then \( (g;\beta) = (\beta;h) \).
    We factorize \( \beta \): \( \beta = (g';h') \).
    Then \( ((g;g');h') = (g';(h';h)) \).
    Since \( (\Group,\Groupb) \) is a strict factorization system, we have \( (g;g') = g' \) and \( h' = (h';h) \).
    Hence, \( g = h = \ident \), which implies \( \alpha = \ident \).

    To prove \( \InducedProf{\Group} ; \InducedProf{\Groupb} \cong \ident_I \), it suffices to show that \( (\Group \cdot \alpha_1, \alpha_2 \cdot \Groupb) \sim (\Group, \Groupb) \) for every \( \alpha_1, \alpha_2 \in \Grp(a,a) \).
    Let \( \alpha_1 \alpha_2 = g h \) be the decomposition with \( g \in \Group \) and \( h \in \Groupb \).
    Then \( (\Group \cdot \alpha_1, \alpha_2 \cdot \Groupb) \sim (\Group, \alpha_1 \alpha_2 \cdot \Groupb) = (\Group, gh \cdot \Groupb) \sim (\Group \cdot g, h \cdot \Groupb) = (\Group, \Groupb) \).
\end{proof}

\begin{remark}
    This is not only a key technical ingredient in our proof, but also has a conceptual significance.
    While strict factorization systems are a familiar theme in profunctor models~\cite{Tsukada2018,Olimpieri2021,Clairambault2023,Clairambault2024}, our route to them is quite different, suggesting their apparent fundamental importance from a new perspective.
\qed
\end{remark}

For \( \Groupb, \Groupc \le \Group \), we write \( \Groupb \Bot \Groupc \) if \( (\Groupb, \Groupc) \) is a strict factorization system.
This orthogonality satisfies a maximality property that is close to, but stronger than, \cref{lem:total:loader-maximality}.
\begin{lemma}\label{lem:totality:basic-property-of-total-orthogonality-on-groups}
If \( \Groupb \Bot \Groupc \), \( \Groupb' \Bot \Groupc \) and \( \Groupb \subseteq \Groupb' \), then \( \Groupb = \Groupb' \).
\end{lemma}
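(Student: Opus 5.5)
The plan is a direct group-theoretic argument using only the unique-factorisation description of \( \Bot \) from \cref{prop:totality:strict-factorization-system} (here \( \Groupb, \Groupb', \Groupc \le \Group \)). Fix an arbitrary \( b' \in \Groupb' \); we must show \( b' \in \Groupb \).

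Since \( \Groupb \Bot \Groupc \), every element of \( \Group \) factors uniquely as an element of \( \Groupb \) followed by an element of \( \Groupc \). Applying this to \( b' \) gives
\[
  b' = b c, \qquad b \in \Groupb,\ c \in \Groupc .
\]
Because \( \Groupb \subseteq \Groupb' \), we have \( b \in \Groupb' \), so this is also a factorisation of \( b' \) with respect to the pair \( (\Groupb', \Groupc) \). On the other hand, \( b' = b' \cdot \ident \) is a second such factorisation, since \( b' \in \Groupb' \) and \( \ident \in \Groupc \).

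Uniqueness of factorisation for \( \Groupb' \Bot \Groupc \) then forces \( b = b' \) and \( c = \ident \). Hence \( b' = b \in \Groupb \), which gives \( \Groupb' \subseteq \Groupb \) and therefore \( \Groupb = \Groupb' \).

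We do not even need the intersection property \( \Groupb' \cap \Groupc = \{\ident\} \) separately, since it is subsumed by uniqueness. The only real care point is the order convention: the factorisation must be \( gh \) with the first factor in the first component of the pair, as in \cref{prop:totality:strict-factorization-system}. Both factorisations above have the \( \Groupb' \)-factor on the left, so they are of the same shape and uniqueness applies. There is no substantial obstacle; the argument is a two-line consequence of existence for \( (\Groupb,\Groupc) \) together with uniqueness for \( (\Groupb',\Groupc) \).
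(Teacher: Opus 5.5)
Your proof is correct and is essentially the paper's argument: factor an element of \( \Groupb' \) through the system \( (\Groupb,\Groupc) \), then use \( \Groupb' \Bot \Groupc \) to force the \( \Groupc \)-factor to be \( \ident \). The only cosmetic difference is that the paper writes the last step as \( c = b^{-1}b' \in \Groupb' \cap \Groupc = \{\ident\} \), whereas you compare the two factorisations \( bc \) and \( b'\cdot\ident \) via uniqueness, which amounts to the same thing.
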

\begin{proof}
Let \( \Groupb, \Groupb', \Groupc \le \Group \) and assume \( \Groupb \Bot \Groupc \), \( \Groupb' \Bot \Groupc \) and \( \Groupb \subseteq \Groupb' \).
    Take any \( h' \in \Groupb' \).
    We factorize \( h' \) as \( h' = hk \), \( h \in \Groupb \) and \( k \in \Groupc \).
    Then \( k = h^{-1} hk = h^{-1} h' \in \Groupb' \), so \( k = \ident \) since \( \Groupb' \orthogonal \Groupc \).
    This implies \( h' = h \in \Groupb \).
\end{proof}

For \( U \subseteq \Prof(I, \Grp) \) and \( X \subseteq \Prof(\Grp, I) \),
\begin{align*}
    U^{\TotalOrthogonal} &\defeq \{ x \in \Prof(\Grp, I) \mid \forall u \in U.\: u \TotalOrthogonal x \} \\
    {}^{\TotalOrthogonal}X &\defeq \{ u \in \Prof(I, \Grp) \mid \forall x \in X.\: u \TotalOrthogonal x \}.
\end{align*}
Let \( T_{\TotalOrthogonal}(\ProfCat) \) be the \emph{tight orthogonality category} in the sense of Hyland and Schalk~\cite{Hyland2003}.\footnote{One reason for treating \( \ProfCat \) as a \( 1 \)-category in this paper (by passing to equivalence classes of morphisms) is that this allows us to directly apply the results of Hyland and Schalk~\cite{Hyland2003}.
    Ideally, one would instead work with \( \ProfCat \) as a bicategory, by developing a bicategorical version of the Hyland-Schalk construction (as mentioned in \cite[Section~7.3]{Fiore2024}).
    However, carrying this out would require a substantial amount of additional work that is largely orthogonal to the main technical contributions of the present paper.}
The category \( \TotProf \) of \emph{stably total profunctors} is defined as a full subcategory of \( T_{\TotalOrthogonal}(\ProfCat) \).
\begin{definition}
    The \emph{tight orthogonality category} \( T_{\TotalOrthogonal}(\ProfCat) \) is defined by the following data.
    \begin{itemize}
        \item Its object is a triple \( (\Grp, U, X) \) where \( U \subseteq \Prof(I, \Grp) \) and \( V \subseteq \Prof(\Grp, I) \) such that \( U^{\TotalOrthogonal} = V \) and \( {}^{\TotalOrthogonal}V = U \).
        \item Its morphism from \( (\Grp, U, X) \) to \( (\Grpb, V, Y) \) is a profunctor \( \Profunctor \colon \Grp \profarrow \Grpb \) such that \( (u;\Profunctor) \in V \) for every \( u \in U \) and \( (\Profunctor; y) \in X \) for every \( y \in Y \).
    \end{itemize}
    The identity and composition are inherited from \( \ProfCat \).
    The category \( \TotProf \) of \emph{stably total profunctors} is defined as a full subcategory of \( T_{\TotalOrthogonal}(\ProfCat) \) in which an object \( (\Grp, U, X) \) satisfies \( \forall a \in \Grp. (\exists u \in U. u(a) \neq \emptyset) \wedge (\exists x \in X. x(a) \neq \emptyset) \).
    An object \( \creed \) of \( \TotProf \) is called a \emph{totality groupoid}.
    For a totality groupoid \( \creed = (\Grp, U, X) \), we write \( \TotalElement(\creed) \) for its \( U \) component and \( \CototalElement(\creed) \) for its \( X \) component.
\qed
\end{definition}

\begin{remark}
    The difference between \( \TotProf \) and the tight orthogonality category \( T_{\TotalOrthogonal}(\ProfCat) \) is not new to the profunctorial setting.
    In the relational setting, the difference has already been observed in \cite[Example~66(2)]{Hyland2003}.
\qed
\end{remark}

There is the forgetful functor \( \Underlying \colon \TotProf \longrightarrow \Prof \), which strictly preserves the \( * \)-autonomous structure.

Recall the functor \( \Truncation{-} \colon \ProfCat \longrightarrow \Rel \).
This functor extends to a functor \( \Truncation{-} \colon \TotProf \longrightarrow \Tot \).
The point is that, for \( U \subseteq \Prof(I, \Grp) \), the direct application of \( \Truncation{-} \) yields \( \Truncation{U} = \{ \Truncation{u} \mid u \in U \} \subseteq \Rel(I, \Truncation{\Grp}) \).
\begin{proposition}
    \( \Truncation{-} \colon \TotProf \longrightarrow \Tot \) is a functor that strongly preserves the \( \mathsf{MLL} \) structures.
\qed
\end{proposition}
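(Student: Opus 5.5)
We need to show that \( \Truncation{-} \) sends totality groupoids to totality spaces, sends morphisms to morphisms, and preserves \( \otimes \), duality and units of the \( \mathsf{MLL} \) structure.

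\textbf{Objects.} Let \( \creed = (\Grp, U, X) \) be a totality groupoid. The totality space is \( (\Truncation{\Grp}, \Truncation{U}^{\Bot\Bot}, \Truncation{U}^{\Bot}) \), and we must show this triple equals \( (\Truncation{\Grp}, \Truncation{U}, \Truncation{X}) \).
\begin{enumerate}
\item First, \( \Truncation{U} \Bot \Truncation{X} \). For \( u \in U \), \( x \in X \) we have \( u;x \cong \ident_I \), so \( \Truncation{u;x} = \Truncation{u};\Truncation{x} \) is a singleton. Stability gives more: for each object \( a \), \( (u(a)\times x(a))/{\sim} \) is exactly the contribution of the class \( [a] \) to the coend. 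Since the coend is a single point, exactly one isomorphism class \( [a] \) has \( u(a)\ne\emptyset\ne x(a) \). Hence \( \#(\Truncation{u}\cap\Truncation{x}) = 1 \).
\item The covering conditions \( \bigcup\Truncation{U} = \Truncation{\Grp} = \bigcup\Truncation{X} \) follow directly from the extra axiom defining \( \TotProf \).
\item The substantive part is the converse inclusions \( \Truncation{X}^{\Bot} \subseteq \Truncation{U} \) and dually. Given \( s \subseteq \Truncation{\Grp} \) with \( \#(s\cap\Truncation{x})=1 \) for all \( x\in X \), we need a profunctor \( u \in {}^{\Bot}X = U \) with \( \Truncation{u} = s \). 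I expect this lifting step to be the main obstacle.
\end{enumerate}
For the lifting, I would build \( u \) as \( \coprod_{[a]\in s}\InducedProf{\Group_a} \), choosing each subgroup \( \Group_a \le \Grp(a,a) \) by local analysis. Since \( \#(s\cap\Truncation{x})=1 \), any \( x \in X \) meets \( s \) in a single class \( [a] \). The natural choice of \( \Group_a \) is a subgroup forming a strict factorization system with every stabiliser appearing in \( x(a) \) (\cref{prop:totality:strict-factorization-system}). Such subgroups should be extracted from existing total elements \( u' \in U \) with \( u'(a)\ne\emptyset \), using the maximality of \cref{lem:totality:basic-property-of-total-orthogonality-on-groups} to make the choice coherent across different \( x \).

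If this direct lifting fails in general, I would fall back on the Hyland--Schalk framework. Because \( \Truncation{-} \) commutes with composition and with \( I \), it is a functor \( \ProfCat\to\Rel \) mapping \( \TotalOrthogonal \) into \( \Bot \). I would then define the image object by biorthogonal closure and verify that \( \Truncation{U}^{\Bot\Bot} \) is what the tensor/dual structures require. This fallback may only yield a lax functor unless the lifting is established.

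\textbf{Morphisms.} Let \( \Profunctor\colon\creed\to\creedb \). For \( u\in U \) we have \( u;\Profunctor \in V \), so \( \Truncation{u};\Truncation{\Profunctor} = \Truncation{u;\Profunctor} \in \Truncation{V} \), and dually. Functoriality is then inherited from \( \Truncation{-}\colon\ProfCat\to\Rel \).

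\textbf{Structure preservation.} On carriers, duality is preserved because \( \Truncation{\Grp^{\op}} = \Truncation{\Grp} \) and the roles of \( U \) and \( X \) swap. For the tensor, \( \Truncation{\Grp\times\Grpb} = \Truncation{\Grp}\times\Truncation{\Grpb} \). It remains to show that the total elements of \( \creed\otimes\creedb \), namely \( \{u\times v\}^{\TotalOrthogonal\TotalOrthogonal} \), truncate to \( \{\Truncation{u}\times\Truncation{v}\}^{\Bot\Bot} \). The inclusion \( \subseteq \) comes from the orthogonality-preservation of step 1 applied to biorthogonals. The inclusion \( \supseteq \) again uses the lifting from step 3.

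Finally, the unit and the symmetric monoidal coherence isomorphisms come from functors of groupoids, and \( \Truncation{-} \) sends them to the corresponding bijections. Hence the preservation is strong.
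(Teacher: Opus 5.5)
Your outline is the intended one: direct image on objects, functoriality inherited from \( \Truncation{-} \colon \ProfCat \to \Rel \), and the carriers of \( \otimes \) and \( ({-})^{\bot} \) computed directly. Step 1 is correct, though it needs no stability: a coend over a groupoid always splits as a disjoint union over isomorphism classes.

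There is, however, a genuine gap exactly where you anticipate it. The lifting \( \Truncation{X}^{\Bot} \subseteq \Truncation{U} \) (and dually) is never proved. Without it, the direct image \( (\Truncation{\Grp}, \Truncation{U}, \Truncation{X}) \) is not known to be a totality space. Your morphism argument only checks \( \Truncation{u};\Truncation{\Profunctor} \) for \( u \in U \), so it does not cover the closure \( \Truncation{U}^{\Bot\Bot} \) used in your fallback. And, as you concede, the fallback does not give strong preservation.

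Your sketch also looks in the wrong place. No strict-factorisation argument or maximality lemma is needed, because coherence across different \( x \) is automatic. The missing idea is restriction and gluing:
\begin{enumerate}
\item Given \( s \in \Truncation{X}^{\Bot} \), for each \( [a] \in s \) use the covering axiom of \( \TotProf \) to pick \( u_a \in U \) with \( u_a(a) \neq \emptyset \).
\item Let \( u \) be the coproduct, over \( [a] \in s \), of the restriction of \( u_a \) to the connected component of \( a \).
\item For \( x \in X \) we have \( s \cap \Truncation{x} = \{ [a] \} \) for a unique \( [a] \). Since \( u_a \TotalOrthogonal x \) and \( [a] \in \Truncation{u_a} \cap \Truncation{x} \), the coend \( u_a ; x \) is concentrated on \( [a] \). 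Hence \( u ; x \cong u_a ; x \cong \ident_I \).
\item The creed condition holds componentwise. On every other component one of \( u, x \) is empty. On the component of \( a \), the creed of \( u \) is that of \( u_a \), which meets that of \( x \) trivially because \( u_a \orthogonal x \).
\end{enumerate}
Hence \( u \in {}^{\TotalOrthogonal}X = U \) and \( \Truncation{u} = s \).

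The tensor step contains a related error. Let \( S = \{ u \times v \mid u \in \TotalElement(\creed), v \in \TotalElement(\creedb) \} \). Preservation of orthogonality gives only \( \Truncation{S^{\TotalOrthogonal}} \subseteq \Truncation{S}^{\Bot} \). Taking \( ({-})^{\Bot} \) reverses this into \( \Truncation{S}^{\Bot\Bot} \subseteq \Truncation{S^{\TotalOrthogonal}}^{\Bot} = \Truncation{S^{\TotalOrthogonal\TotalOrthogonal}} \), where the equality is the object-level lifting for \( \creed \otimes \creedb \). That is your \( \supseteq \), not your \( \subseteq \).

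Your \( \subseteq \) needs a lifting relative to the generators, \( \Truncation{S}^{\Bot} \subseteq \Truncation{S^{\TotalOrthogonal}} \). This is not an instance of step 3, since applying step 3 to \( \creed \otimes \creedb \) would presuppose it. It follows from the same gluing. For each \( ([a],[b]) \) in a relation \( R \in \Truncation{S}^{\Bot} \), take \( x_a \in \CototalElement(\creed) \) and \( y_b \in \CototalElement(\creedb) \) nonempty at \( a \) and \( b \), and use the restriction of \( x_a \times y_b \) to the component of \( (a,b) \). Orthogonality to each \( u \times v \) then holds factor by factor.

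The paper states the proposition without proof, remarking only that objects are mapped by direct image. These two liftings are precisely the content that remark leaves implicit.
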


Every stably total profunctor enjoys the single orbit property.
In particular, the profunctor \( \sem{\pi} + \sem{\pi'} \) mentioned at the beginning of this section is not total.
\begin{lemma}[Single orbit property]\label{lem:total:single-orbit-property}
    Let \( \TGrp \) and \( \TGrpb \) be totality groupoids, \( \Profunctor \colon \TGrp \profarrow \TGrpb \) be a total profunctor,
    \( x \in \Profunctor(a,b) \) and \( x' \in \Profunctor(a', b') \).
    If \( a \cong a' \) and \( b \cong b' \), there exist \( \alpha \in \TGrp(a',a) \) and \( \beta \in \TGrpb(b,b') \) such that \( x' = \alpha \cdot x \cdot \beta \).
\end{lemma}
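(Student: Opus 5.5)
First I reduce to the case \( a = a' \), \( b = b' \). Pick isomorphisms \( \alpha_0 \colon a \to a' \) and \( \beta_0 \colon b' \to b \). Then \( \alpha_0 \cdot x' \cdot \beta_0 \in \Profunctor(a,b) \), and it is enough to relate two elements \( x, x'' \in \Profunctor(a,b) \). Next I transport the problem to the unit. Since \( \TGrp \) is a totality groupoid, there is a total element \( u \in \TotalElement(\TGrp) \) with \( u(a) \neq \emptyset \). By the object condition of \( \TotProf \) applied to \( \TGrpb \), there is also a cototal \( y \in \CototalElement(\TGrpb) \) with \( y(b) \neq \emptyset \). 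Because \( \Profunctor \) is a morphism, \( u;\Profunctor \in \TotalElement(\TGrpb) \), and so \( (u;\Profunctor) \TotalOrthogonal y \). In particular \( (u;\Profunctor;y) \cong \ident_I \), so the coend \( \int^{a_1,b_1} u(a_1)\times\Profunctor(a_1,b_1)\times y(b_1) \) is a singleton.

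The main step is to read off the orbit relation from this singleton. Choose \( p \in u(a) \) and \( q \in y(b) \). Then \( [p,x,q] = [p,x'',q] \) in the coend. The equivalence relation generating the coend is the one induced by the actions of \( \TGrp \) and \( \TGrpb \). Because the categories are groupoids, every zig-zag collapses to a single step. So there are \( \alpha \in \TGrp(a,a) \) and \( \beta \in \TGrpb(b,b) \) with
\[ (p\cdot\alpha^{-1},\ \alpha\cdot x\cdot\beta,\ \beta^{-1}\cdot q) = (p, x'', q). \]
In particular \( x'' = \alpha\cdot x\cdot\beta \), which is the required statement. Composing with \( \alpha_0, \beta_0 \) then gives the statement for \( a', b' \).

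I expect the main obstacle to be the single-step collapse of the composite coend. This is the claim that two triples identified in the iterated coend \( (u;\Profunctor);y \) are related by one pair \( (\alpha,\beta) \). I would prove it directly: the coend of a groupoid action is the orbit set under the product groupoid \( \TGrp\times\TGrpb \) acting on \( \coprod u(a_1)\times\Profunctor(a_1,b_1)\times y(b_1) \). Orbits of a groupoid action are exactly single-step images, because actions compose and invert. The iterated coend agrees with this joint coend by the standard Fubini isomorphism.

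A notable point is that the stability part of \( \TotalOrthogonal \) is not needed here. Only the condition \( u;\Profunctor;y \cong \ident_I \) and the nonemptiness conditions defining \( \TotProf \) are used. Those nonemptiness conditions are needed to guarantee that the chosen \( p \) and \( q \) exist.
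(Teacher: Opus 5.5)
Your proof is correct and follows the paper's argument essentially step for step: reduce to \( a = a' \), \( b = b' \) by transporting along chosen isomorphisms, pick a total element and a cototal element that are nonempty at \( a \) and \( b \), and use \( (u;\Profunctor;y) \cong \ident_I \) to identify the classes of \( (p,x,q) \) and \( (p,x'',q) \), from which \( x'' = \alpha \cdot x \cdot \beta \) follows. Your explicit justification that the iterated coend identifies elements in a single step (via the product-groupoid action and Fubini) fills in a point the paper only asserts, and your remark that the stability half of the orthogonality is never used holds for the paper's proof as well.
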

\begin{proof}
    Take any \( \alpha_0 \in \TGrp(a', a) \) and \( \beta_0 \in \TGrp(b,b') \).
    By considering \( \alpha_0 \cdot x \cdot \beta_0 \) instead of \( x \), we can assume without loss of generality that \( a = a' \) and \( b = b' \).

    Let \( x, y \in \Profunctor(a,b) \).
There exist \( T \in \TotalElement(\TGrp) \) and \( C \in \CototalElement(\TGrpb) \) such that \( T(a) \neq \emptyset \) and \( C(b) \neq \emptyset \).
    Let \( z \in T(a) \) and \( u \in C(b) \).
    By the totality of \( \Profunctor \), we have \( (T; \Profunctor) \in \TotalElement(\TGrpb) \), so \( (T; \Profunctor) \TotalOrthogonal C \) and hence \( (T ; \Profunctor ; C) \cong \ident_I \).
    The set \( (T ; \Profunctor ; C)(\star, \star) \) contains the equivalence classes of \( (z, x, u) \) and \( (z, y, u) \), which must coincide since \( (T ; \Profunctor ; C) \cong \ident_I \).
    They belong to the same equivalence class if and only if \( (z, y, u) = (z \cdot \alpha^{-1}, \alpha \cdot x \cdot \beta, \beta^{-1} \cdot u) \) for some \( \alpha \) and \( \beta \).
    In particular, \( y = \alpha \cdot x \cdot \beta \).
\end{proof}
So a stably total profunctor \( \Profunctor \in \TotProf(\TGrp, \TGrpb) \) is determined by
\begin{itemize}
    \item the totality map \( \Truncation{\Profunctor} \in \Tot(\Truncation{\TGrp}, \Truncation{\TGrpb}) \) that describes which pair \( (a,b) \) of objects are related, and
    \item for each pair \( (a,b) \) with \( \Profunctor(a,b) \neq \emptyset \), the conjugation class of the stabiliser group \( \Stabiliser(x) \) of an element \( x \in \Profunctorb(a,b) \). 
\end{itemize}

A maximality property also holds for stably total profunctors.
We consider the one-sided case for simplicity.
We write \( \TUnit \) for the tensor unit in \( \TotProf \), which is the trivial groupoid \( I \) equipped with the unique totality structure.
\begin{lemma}\label{lem:total:equivalence-principle}
    Let \( \TGrp \) be a totality groupoid, \( \Profunctor \colon I \profarrow \Underlying(\TGrp) \) be a profunctor, and \( \Profunctorb \colon \TUnit \profarrow \TGrp \) be a stably total profunctor.
    Assume the following conditions.
    \begin{itemize}
        \item The underlying relation of\/ \( \Profunctor \) is total, i.e., \( \Truncation{\Profunctor} \in \Tot(I, \Truncation{\TGrp}) \).
        \item \( \Profunctor \) is \emph{weakly total} in the sense that, for every \( a \in \Grp \) with \( \Profunctor(\star, a) \neq \emptyset \), there exists \( C \in \CototalElement(\TGrp) \) such that \( C(a, \star) \neq \emptyset \) and \( \Profunctor \TotalOrthogonal C \).
        \item For every \( a \in \Grp \) with \( \Profunctor(\star, a) \neq \emptyset \), there exist \( a' \cong a \), \( x \in \Profunctor(\star,a') \) and \( y \in \Profunctorb(\star,a') \) such that \( \Stabiliser(x) \subseteq \Stabiliser(y) \).
\end{itemize}
    Then \( \Profunctor \cong \Profunctorb \).
\end{lemma}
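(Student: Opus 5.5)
Throughout, write \( \Grp = \Underlying(\TGrp) \). The plan is to compare \( \Profunctor \) and \( \Profunctorb \) orbit by orbit, using the decomposition of \cref{prop:prof-decomposition}. By the single orbit property (\cref{lem:total:single-orbit-property}), for each isomorphism class \( [a] \) on which \( \Profunctorb \) is nonempty, \( \Profunctorb \) restricted to that class is a single atom \( \InducedProf{\Stabiliser(y)} \).

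\textbf{Step 1: \( \Truncation{\Profunctor} = \Truncation{\Profunctorb} \).} Both \( \Truncation{\Profunctor} \) and \( \Truncation{\Profunctorb} \) are total elements of \( \Truncation{\TGrp} \). Hypothesis (3) gives \( \Truncation{\Profunctor} \subseteq \Truncation{\Profunctorb} \), since \( y \in \Profunctorb(\star,a') \) with \( a' \cong a \). Loader's maximality (\cref{lem:total:loader-maximality}) then yields equality.

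\textbf{Step 2: \( \Profunctor \) has a single orbit over each class.} Fix \( a \) with \( \Profunctor(\star,a) \neq \emptyset \). By weak totality, pick \( C \in \CototalElement(\TGrp) \) with \( C(a,\star) \neq \emptyset \) and \( \Profunctor \TotalOrthogonal C \). Then \( \Profunctor ; C \cong \ident_I \). Take \( x, x' \in \Profunctor(\star,a) \) and \( u \in C(a,\star) \). The classes of \( (x,u) \) and \( (x',u) \) coincide in the coend, so \( x' = x \cdot \alpha \) for some \( \alpha \), exactly as in the proof of \cref{lem:total:single-orbit-property}. Hence over \( [a] \), \( \Profunctor \cong \InducedProf{\Stabiliser(x)} \), and it suffices to prove \( \Stabiliser(x) \Conjugate \Stabiliser(y) \) for the \( x, y \) from hypothesis (3).

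\textbf{Step 3: equality of stabilisers via \cref{lem:totality:basic-property-of-total-orthogonality-on-groups}.} Transport \( a \) to \( a' \) so that \( x \in \Profunctor(\star,a) \), \( y \in \Profunctorb(\star,a) \), and \( G := \Stabiliser(x) \subseteq H := \Stabiliser(y) \). Reusing \( C \) from Step 2, pick \( u \in C(a,\star) \) (transported along the iso) and set \( K := \Stabiliser(u) \le \Grp^{\op}(a,a) \). Restrict everything to the connected component of \( a \) and to the orbits of \( x \), \( y \), \( u \). I then want \( G \Bot K \) and \( H \Bot K \) in the sense of \cref{prop:totality:strict-factorization-system}.

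For \( G \): \( \Profunctor \TotalOrthogonal C \) gives stable orthogonality of the atoms, and \( \Profunctor ; C \cong \ident_I \) restricted to the atoms \( \InducedProf{G} \) and \( \InducedProf{K} \) gives \( \InducedProf{G} ; \InducedProf{K} \cong \ident_I \). The composite of sums is a sum of composites, and a sum being \( \ident_I \) forces each summand that is nonempty to be \( \ident_I \). \cref{prop:totality:strict-factorization-system} then applies.

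For \( H \): \( \Profunctorb \) is stably total, so \( \Profunctorb \TotalOrthogonal C \), and the same argument gives \( H \Bot K \).

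\cref{lem:totality:basic-property-of-total-orthogonality-on-groups} then yields \( G = H \). So the atoms agree over every class, and summing over classes (using Step 1) gives \( \Profunctor \cong \Profunctorb \).

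The main obstacle is Step 3. I must check carefully that the restriction of \( \Profunctor ; C \cong \ident_I \) to a single pair of orbits is again \( \cong \ident_I \), including the bookkeeping that \( \InducedProf{G} \), defined via an object \( a \), really corresponds to the orbit of \( x \) when the groupoid has several objects. I also need that \( G \) and \( H \) are compared with the same representative \( u \), which requires choosing the transports consistently so that the conjugation ambiguity disappears.
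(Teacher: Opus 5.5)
Your proposal is correct and follows essentially the same route as the paper. Both arguments use Loader maximality to match the supports, the single-orbit property on each isomorphism class, and the equality \( \Stabiliser(x) = \Stabiliser(y) \) obtained from a common strict-factorization complement \( \Stabiliser(u) \) together with \cref{lem:totality:basic-property-of-total-orthogonality-on-groups}; the paper packages the conclusion as an explicit natural isomorphism \( \psi \) rather than your atom-by-atom comparison. Your Step 2 and your sum-of-composites restriction argument also fill in two points the paper passes over quickly: Step 2 derives the single-orbit property of \( \Profunctor \) from weak totality, whereas the paper cites \cref{lem:total:single-orbit-property}, which is stated only for total profunctors, and the restriction argument justifies why \( \Profunctor ; C \cong \ident_I \) holds atom by atom.
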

\begin{proof}
We choose a representative for each isomorphism class of objects in \( \Grpb \).
For each representative \( a \in \Grp \) with \( \Profunctor(\star,a) \neq \emptyset \), we find a pair \( (x_{a}, y_{a}) \in \Profunctor(\star,a) \times \Profunctorb(\star,a) \) such that \( \Stabiliser(x_{a}) = \Stabiliser(y_{a}) \) as follows.
    Let \( a' \), \( x \) and \( y \) be those in the third condition.
    By the second condition, there is a cototal profunctor \( C \in \CototalElement(\TGrp) \) such that \( C(a', \star) \neq \emptyset \) and \( \Profunctor \TotalOrthogonal C \).
    We also have \( \Profunctorb \TotalOrthogonal C \) since \( \Profunctorb \) is stably total.
    Take any \( z \in C(a', \star) \).
    Since \( \Profunctor \TotalOrthogonal C \), it must be the case that \( \Stabiliser(x) \TotalOrthogonal \Stabiliser(z) \), and similarly \( \Stabiliser(y) \TotalOrthogonal \Stabiliser(z) \).
    Hence, the assumption \( \Stabiliser(x) \subseteq \Stabiliser(y) \) implies \( \Stabiliser(x) = \Stabiliser(y) \) by \cref{lem:totality:basic-property-of-total-orthogonality-on-groups}.
    Take an arbitrary morphism \( \beta \in \Grpb(b', b) \) and let \( x_b \defeq x \cdot \beta \) and \( y_b \defeq y \cdot \beta \).

    We give a natural transformation \( \psi \colon \Profunctor \Rightarrow \Profunctorb \) as follows.
    Let \( x' \in \Profunctor(\star, b') \) and \( b \) be the representative of the isomorphism class of \( b' \).
    By \cref{lem:total:single-orbit-property}, there exists \( \beta \in \Grpb(b,b') \) such that \( x' = x_{b} \cdot \beta \).
    Then we define \( \psi_{\star,b'}(x') \defeq (y_{b} \cdot \beta) \).
    The choice of \( \beta \) is not unique, but this is well-defined since \( \Stabiliser(x_{b}) \subseteq \Stabiliser(y_{b}) \).
    This is natural by construction and injective since \( \Stabiliser(x_{b}) \supseteq \Stabiliser(y_{b}) \).

    We prove that \( \psi_{\star,b'} \) is surjective.
    Note that the condition implies the containment \( \Truncation{\Profunctor} \subseteq \Truncation{\Profunctorb} \) between the associated relations.
    By the maximality property for totality spaces (\cref{lem:total:loader-maximality}), we have \( \Truncation{\Profunctor} = \Truncation{\Profunctorb} \) (note that \( \Truncation{\Profunctor} \) is total by the first condition).
    Assume \( y' \in \Profunctorb(\star,b') \).
    The above argument shows that \( \Profunctor(\star,b') \neq \emptyset \), so \( x_{b} \) and \( y_{b} \) are defined where \( b \) is the representative of the isomorphism class of \( b' \).
    By the single orbit property (\cref{lem:total:single-orbit-property}), \( y' = y_{b} \cdot \beta \) for some \( \beta \).
    So \( y' = \psi_{\star,b'}(x_{b} \cdot \beta) \).
\end{proof}

 \section{Full Definability}
Every proof structure induces a family of profunctors parameterized by a groupoid assignment to propositional variables.
This section provides a complete characterization of families definable by \( \mathsf{MLL+Mix} \) proofs based on the category \( \TotProf \) of stably total profunctors.

\subsection{Logical Characterization of Definability}
\newcommand{\TwoCell}{a}
The interpretation \( \sem{\pi}_{\vec{\Grp}} \) of a proof \( \pi \vdash \Formula \) is parameterized by the assignment \( \vec{\Grp} \) of groupoids to propositional variables.
The family \( (\sem{\pi}_{\vec{\Grp}})_{\vec{\Grp}} \) is expected to exhibit similar structures across different components, and the technical challenge lies in how to capture the relationships between components.
A particular obstacle is the presence of negative occurrences of propositional variables in \( \Formula \).
Our approach is to use functors to describe the relationships between components, leveraging the fact that the opposition \( ({-})^{\op} \), the negation in \( \Prof \), is a \emph{covariant} functor.

To accommodate both profunctors and functors within a single framework, we use the \emph{double category of profunctors}.
This paper does not rely on any deep results from double category theory; we merely introduce and use the following notation.
For profunctors \( \Profunctor_1 \colon \Grp_1 \profarrow \Grpb_1 \) and \( \Profunctor_2 \colon \Grp_2 \profarrow \Grpb_2 \) and functors \( \Functor \colon \Grp_1 \longrightarrow \Grp_2 \) and \( \Functorb \colon \Grpb \longrightarrow \Grpb_2 \), we write
\begin{equation*}
    \xymatrix{
        \Grp_1 \ar[d]^{\Functor} \ar[rr]|{|}^{\Profunctor_1} \ar@{}[drr]|{\Downarrow} & & \Grpb_1 \ar[d]_{\Functorb} \\
        \Grp_2 \ar[rr]|{|}_{\Profunctor_2} & & \Grpb_2
    }
\end{equation*}
to mean that there exists a natural transformation \( \Profunctor_1({-}, {+}) \Longrightarrow \Profunctor_2(\Functor({-}), \Functorb({+})) \).
(Note that the existence of a natural transformation is well-defined on isomorphisms classes of profunctors.)

An \( \mathsf{MLL} \) formula \( \Formula \) with free variables \( \PropVar_1,\dots,\PropVar_N \) defines a mapping from sequences \( \vec{\Grp} = (\Grp_1,\dots,\Grp_n) \) to groupoids \( \sem{\Formula}_{\vec{\Grp}} \), which we shall simply write as \( \Formula(\vec{\Grp}) \).
Similarly, it determines a totality groupoid \( \Formula(\vec{\TGrp}) \defeq \sem{\Formula}_{\vec{\TGrp}} \) for each \( n \)-tuple \( \vec{\TGrp} = (\TGrp_1,\dots,\TGrp_n) \) of totality groupoids, and we have \( \Underlying(\Formula(\vec{\TGrp})) = \Formula(\Underlying(\vec{\TGrp})) \).

Importantly, it also determines a functor \( \Formula(\Phi) \colon \Formula(\vec{\Grp}) \longrightarrow \Formula(\vec{\Grpb}) \) for each \( n \)-tuple \( \Phi = (\varphi_1,\dots,\varphi_n) \) of functors \( \varphi_i \colon \Grp_i \longrightarrow \Grpb_i \).
It is simply defined by \( (\Formula \otimes \Formulab)(\Phi) = (\Formula \llpar \Formulab)(\Phi) = \Formula(\Phi) \times \Formulab(\Phi) \) and \( \Formula^\bot(\Phi) = (\Formula(\Phi))^{\op} \).
\begin{definition}[Logical family of profunctors]\label{def:logical-family}
    A family \( \varpi = (\varpi_{\vec{\Grp}})_{\vec{\Grp}} \) of profunctors \( \varpi_{\vec{\Grp}} \colon I \profarrow \Formula(\vec{\Grp}) \) is \emph{logical} if,
    for every \( \Phi \colon \vec{\Grp} \longrightarrow \vec{\Grpb} \),
    \begin{equation*}
        \xymatrix{
            I \ar@{=}[d] \ar[rr]|{|}^{\varpi_{\vec{\Grp}}} \ar@{}[drr]|{\Downarrow} & & \Formula(\vec{\Grp}) \ar[d]^{\Formula(\Phi)} \\
            I \ar[rr]|{|}_{\varpi_{\vec{\Grpb}}} & & \Formula(\vec{\Grpb})
        }
    \end{equation*}
    holds.
    It is \emph{stably total} if every component is stably total, i.e.~\( \varpi_{\vec{\TGrp}} \defeq \varpi_{\Underlying(\vec{\TGrp})} \in \TotProf(I, \Formula(\vec{\TGrp})) \) for every tuple \( \vec{\TGrp} \) of totality groupoids.

    In some cases, we consider a family in which the parameter \( \vec{\Grp} \) takes values in (tuples of) a subclass of groupoids, such as groups (i.e.~single-object groupoids) or sets (i.e.~groupoids in which all morphisms are identities).
    The logicality in these cases is defined similarly, considering only \( \Phi \colon \vec{\Grp} \longrightarrow \vec{\Grpb} \) between the subclass.
\qed
\end{definition}

The next theorem is the main result of this paper.
The proof will be presented in the following subsection.
\begin{theorem}\label{thm:definability}
    Let \( \Formula \) be an MLL formula and \( \varpi = (\varpi_{\vec{\Grp}} \colon I \profarrow \Formula(\vec{\Grp}))_{\vec{\Grp}} \) be a family of profunctors.
    It is definable (up to natural isomorphism) by an \( \mathsf{MLL}+\mathsf{Mix} \) proof if and only if it is logical and stably total.
\qed
\end{theorem}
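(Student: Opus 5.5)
The easy direction (definable $\Rightarrow$ logical and stably total) comes first. Stable totality follows because $\TotProf$ is a $*$-autonomous category with Mix and $\Underlying$ strictly preserves the structure, so $\sem{\pi}_{\Underlying(\vec\TGrp)}=\Underlying(\sem{\pi}_{\vec\TGrp})$. Logicality is read off the experiment description $\sem{\pi}\cong\mathit{Exp}(\pi)$. Given $\Phi$, apply the functors $\varphi_i$ to every axiom-link label $\alpha\colon a'\to a$. This sends an element experiment over $\vec\Grp$ to one over $\vec\Grpb$ whose associated object experiment is $\Formula(\Phi)$ of the original, and it commutes with the action by propagation. It therefore gives the required natural transformation.

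For the converse, let $\varpi$ be logical and stably total. The plan follows Loader's argument and has three steps.

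\textbf{Step 1: extract a proof structure.} Restrict $\varpi$ to sets (discrete groupoids). Take the set assignment in which each occurrence of a literal gets a fresh distinct point, using a set large enough (e.g.\ $\mathbb{N}$ with a totality structure making singletons total). Logicality along collapsing maps of sets, together with the relational totality of $\Truncation{\varpi}$ and Loader's maximality (\cref{lem:total:loader-maximality}), should force every related tuple to pair each positive occurrence of $X_i$ with exactly one negative occurrence carrying the same label. This determines an involution, that is, a proof structure $\pi$. Concretely, I would mimic Loader's proof that totality for the relational family on sets yields a unique axiom linking, with logicality replacing his parametricity.

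\textbf{Step 2: show $\pi$ is correct.} I want $\pi$ to satisfy the creed-kit criterion and then apply \cref{thm:creed-kit-correctness-without-cut}. Given Boolean creeds $\vec\creed$, I would build totality groupoids on the same groupoids whose total elements satisfy the creed; the stable component of $\TotalOrthogonal$ gives $\bot$. Stable totality of $\varpi$ then gives $\varpi_{\vec\Grp}\models\sem{\Formula}_{\vec\creed}$. It remains to show that the stabilisers of $\varpi$ contain those of $\sem{\pi}$ (as below), so that $\sem{\pi}$ inherits the creed condition. Alternatively, I would rerun the contrapositive proof of that theorem directly on $\varpi$ using the single-group creed $\Int_2\times\Int_3$, since the offending stabiliser only needs to lie in $\mathrm{fix}(x)$ for a suitable $x\in\varpi$.

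\textbf{Step 3: identify $\varpi\cong\sem{\pi}$.} Apply \cref{lem:total:equivalence-principle} with $\Profunctor=\sem{\pi}_{\vec\TGrp}$ and $\Profunctorb=\varpi_{\vec\TGrp}$; by Step 2 and soundness, $\sem{\pi}$ is stably total, which supplies the first two hypotheses. For the third hypothesis, take the element experiment $x$ with all axiom labels identities; its stabiliser consists of those assignments in which linked literals carry mutually inverse morphisms. I need $y\in\varpi(a)$ with $\mathrm{fix}(x)\subseteq\mathrm{fix}(y)$. To get it, I would use logicality with a ``generic'' assignment, where each literal occurrence is interpreted in a free groupoid with distinct objects per occurrence, and transport the resulting element along the functor that identifies linked objects. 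The naturality square then maps stabilisers into stabilisers. Natural isomorphism in $\vec\Grp$ follows by the same construction.

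\textbf{Main obstacle.} I expect the main difficulty to be Step 3's stabiliser inclusion. It requires choosing the generic groupoids so that the element produced by logicality in $\varpi$ is invariant under exactly the ``linked inverses'' automorphisms. It also requires checking that the element does not land in a larger orbit, which is where the single orbit property (\cref{lem:total:single-orbit-property}) should be used.
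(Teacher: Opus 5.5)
Your architecture is the paper's. You extract a Loader-style linking $\pi$ with $\Truncation{\varpi}=\Truncation{\sem{\pi}}$, apply \cref{lem:total:equivalence-principle} with $\Profunctor=\sem{\pi}$, $\Profunctorb=\varpi$ and the all-identity experiment $x$, and get correctness from \cref{thm:creed-kit-correctness-without-cut}. The easy direction is fine. The paper orders things more simply than you do. It checks the first two hypotheses of the equivalence principle directly, using the discrete totality and $\Truncation{\varpi}=\Truncation{\sem{\pi}}$. It then obtains $\varpi\cong\sem{\pi}$, and only afterwards reads off correctness from the stability of $\varpi$; in your version, Steps 2 and 3 lean on each other.

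The genuine gap is the step that carries the content of the hard direction: producing $y\in\varpi$ with $\Stabiliser(x)\subseteq\Stabiliser(y)$. Both your Step 2 (either route) and your Step 3 rest on it, and your proposed mechanism does not deliver it. A natural transformation supplied by logicality only pushes stabilisers forward: $g\in\Stabiliser(z)$ implies $\Formula(\Phi)(g)\in\Stabiliser(\psi(z))$. So transporting from a ``generic'' assignment yields the linked-inverse pairs only if you already know the generic element is fixed by them. Nothing in your proposal determines that stabiliser. If the generic groupoids have trivial automorphism groups, there is nothing to transport; otherwise, their stabiliser is exactly the unknown.

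Nor do the obvious totality tests at a single group pin it down. For any automorphism $\vartheta$ of a group $\Group$, outer or not, the subgroup $\{(\alpha^{-1},\vartheta(\alpha))\mid\alpha\in\Group\}$ forms a strict factorization system with both $\{\ident\}\times\Group$ and $\Group^{\op}\times\{\ident\}$. Yet it is conjugate to the diagonal only when $\vartheta$ is inner.

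The paper supplies two ideas you are missing.
\begin{enumerate}
\item In \cref{lem:definability:stabiliser-contains-innor-automorphism}, it isolates one link via $\mathbb{D}_{\Group}=\Group+I+\dots+I$. It tests $\varpi$ against the two totality structures on $\Group$ whose total part is the whole group, resp.\ the trivial group. By \cref{prop:totality:strict-factorization-system}, the resulting $\chi_{\Group}$ is a single orbit whose stabiliser is the graph of an automorphism $\vartheta_{\Group,x}$.
\item In \cref{lem:definability:inner-automorphism}, logicality along arbitrary group embeddings $\Group\hookrightarrow\Group'$ shows that $\vartheta_{\Group,x}$ extends to every supergroup. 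It is therefore inner by Schupp's theorem (\cref{thm:definability:characterization-of-inner}), and translating $x$ makes the stabiliser exactly the diagonal.
\end{enumerate}
Only after this does your transport work. You push forward along a functor $\mathbb{D}_{\Group}\to\Grp_{\xi_j}$ with $\Group=\Grp_{\xi_j}(b_j,b_j)$, and use the single orbit property to reach the given element, which costs a conjugation. You then combine the links by closure of the stabiliser under composition (\cref{cor:definability:identity-stabilizer}). Without such a cross-group argument ruling out outer twists, your Step 3, and with it Step 2, does not go through.
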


\subsection{Proof of Full Definability}
This subsection proves \cref{thm:definability}.
One direction is easy.
\begin{lemma}\label{lem:definability:easy-direction}
    For an \( \mathsf{MLL+Mix} \) proof \( \pi \vdash \Formula \), the family \( (\sem{\pi}_{\vec{\Grp}} \colon I \profarrow \Formula(\vec{\Grp}))_{\vec{\Grp}} \) is logical and stably total.
\end{lemma}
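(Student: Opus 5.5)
We treat the two properties separately. Stable totality is essentially free from the general theory; logicality is checked directly on profunctorial experiments.

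For stable totality, we use that \( \TotProf \) is a full subcategory of the tight orthogonality category \( T_{\TotalOrthogonal}(\ProfCat) \), and that \( \TotalOrthogonal \) is a focused orthogonality on \( \SProf \). By Hyland and Schalk~\cite{Hyland2003}, the tight orthogonality category is therefore \( * \)-autonomous and supports \( \mathsf{Mix} \). We must also check that \( \TotProf \) is closed under \( \otimes \), \( \llpar \) and duality. This holds because the nonemptiness condition at every object is preserved by products: if \( u_1(a_1) \neq \emptyset \) and \( u_2(a_2) \neq \emptyset \), then \( (u_1 \times u_2)(a_1,a_2) \neq \emptyset \), and \( u_1 \times u_2 \) lies in the tensor. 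The forgetful functor \( \Underlying \colon \TotProf \to \Prof \) strictly preserves the structure. So for every tuple \( \vec{\TGrp} \) of totality groupoids, the interpretation of \( \pi \) in \( \TotProf \) is a morphism \( \TUnit \profarrow \Formula(\vec{\TGrp}) \) whose underlying profunctor is \( \sem{\pi}_{\Underlying(\vec{\TGrp})} \). By coherence, this does not depend on which sequent proof is chosen for the proof structure. This gives \( \sem{\pi}_{\vec{\TGrp}} \in \TotProf(I, \Formula(\vec{\TGrp})) \). The one genuinely checkable point is that the axiom, the unit \( \eta \), is stably total. This is immediate because it is the identity of \( \TotProf \) transposed.

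For logicality, we use the lemma \( \sem{\pi} \cong \mathit{Exp}(\pi) \) and construct the natural transformation directly. Given \( \Phi = (\varphi_1,\dots,\varphi_n) \), send an element experiment labelling each axiom link \( \PropVar_i^\bot \)---\( \PropVar_i \) with \( \alpha \colon a' \to a \) in \( \Grp_i \) to the experiment labelling the same link with \( \varphi_i(\alpha) \colon \varphi_i(a') \to \varphi_i(a) \). The associated object experiment is then exactly \( \Formula(\Phi) \) applied to the original one; for negative occurrences this holds since \( \Formula^\bot(\Phi) = \Formula(\Phi)^{\op} \) acts by \( \varphi_i \) on objects. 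So the map sends \( \mathit{Exp}(\pi)_{\vec{\Grp}}(\vec a) \) to \( \mathit{Exp}(\pi)_{\vec{\Grpb}}(\Formula(\Phi)(\vec a)) \).

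Naturality reduces to functoriality of each \( \varphi_i \). The action of a morphism of \( \Formula(\vec{\Grp}) \) propagates to the leaves and composes with the label, giving \( f \alpha g \). Its image \( \varphi_i(f\alpha g) = \varphi_i(f)\varphi_i(\alpha)\varphi_i(g) \) is the action of \( \Formula(\Phi) \) of that morphism on the image experiment. Alternatively, one can argue inductively on the sequent proof, using that the unit \( \Grp({-},{+}) \) admits the 2-cell \( \Grp(a',a) \to \Grpb(\varphi a', \varphi a) \), and that symmetry and associativity are induced by functors commuting with \( \Formula(\Phi) \).

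The main obstacle is bookkeeping rather than content. One must make sure the compact-closed coherence used to define \( \sem{\pi} \) yields the same profunctor as the \( \TotProf \)-interpretation. Concretely, the isomorphism \( \sem{\pi}\cong\mathit{Exp}(\pi) \) should be natural in \( \vec{\Grp} \) with respect to the \( \Phi \)-maps; it is, since it is built from structural isomorphisms induced by functors. Note also that the existence of a natural transformation in the logicality square is invariant under isomorphism, so working with \( \mathit{Exp}(\pi) \) is legitimate.
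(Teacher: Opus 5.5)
Your proposal is correct and follows the paper's route: stable totality is obtained exactly as in the paper, from \( \TotProf \) being a model of \( \mathsf{MLL+Mix} \) whose structure is strictly preserved by \( \Underlying \), and logicality is a routine check. The only difference is that the paper dispatches logicality by induction on proofs, whereas you build the 2-cell directly on element experiments by relabelling each axiom link \( \alpha \mapsto \varphi_i(\alpha) \). That direct construction works, and as you observe yourself, only the componentwise isomorphisms \( \sem{\pi}_{\vec{\Grp}} \cong \mathit{Exp}(\pi)_{\vec{\Grp}} \) are needed, not their naturality in \( \vec{\Grp} \).
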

\begin{proof}
    \( \sem{\pi}_{\vec{\Grp}} \) is stably total since \( \TotProf \) is a model of \( \mathsf{MLL+Mix} \) and \( \Underlying \) strictly preserves the \( \mathsf{MLL}+\mathsf{Mix} \) structure.
    The logicality of the family can be easily proved by induction on proofs.
\end{proof}

\begin{remark}
    The logicality of the interpretation in the above proof corresponds to the fact that each logical connective is a (double) functor on the double category.
    In this sense, the structure of the double category is used indirectly in the paper.
    Such a use of double categories subsumes the classical approach based on embedding-projection, but we will not go further into the details here.
\end{remark}

We prove the converse.
Let \( \Formula \) be an \( \mathsf{MLL} \) formula and \( \varpi = (\varpi_{\vec{\Grp}} \colon I \profarrow \Formula(\vec{\Grp}))_{\vec{\Grp}} \) be a logical and stably total family.
Identifying \( \varpi_{\vec{\Grp}} \colon I \profarrow \Formula(\vec{\Grp}) \) with the covariant presheaf \( \varpi_{\vec{\Grp}} \colon \Formula(\vec{\Grp}) \longrightarrow \Set \), we write \( \varpi_{\vec{\Grp}}(a_1,\dots,a_n) \) for \( \varpi_{\vec{\Grp}}(\star, (a_1,\dots,a_n)) \).

The proof proceeds as follows:
\begin{enumerate}
    \item Prove that the family \( \varpi \) is an interpretation of a proof structure \( \pi \).
    \item Prove that the proof structure \( \pi \) is correct.
\end{enumerate}
The latter has already been addressed: since the interpretation \( \varpi \) of \( \pi \) is stable by the assumption, \cref{thm:creed-kit-correctness-without-cut} shows that \( \pi \) is correct.
So this section mainly focuses on the former.

The proof of (1) consists of two parts:
\begin{enumerate}
    \item[(1a)] Solve the relational version, i.e., construct a proof structure \( \pi \) whose relational interpretation coincides with \( \Truncation{\varpi} \).
    \item[(1b)] Then extend (1a) to the profunctorial setting, i.e., prove that \( \varpi \) is the profunctorial interpretation of \( \pi \).
\end{enumerate}
The step (1a) has essentially been given by Loader~\cite{Loader1994}, which proves the full definability of the totality space model, because the category \( \Tot \) of totality spaces is a full subcategory of \( \TotProf \) (consisting of totality groupoids whose underlying groupoids are setoids).
So the family \( \varpi \) has a logical family parameterized by totality spaces, and we can utilize (a variant of) Loader's result on totality spaces~\cite{Loader1994} to extract a proof structure.
\begin{lemma}\label{lem:definability:loader-linking}
    There exists a proof structure \( \pi \) of \( \Formula \) such that
    \begin{equation*}
        \Truncation{\varpi_{\vec{\Grp}}} = \Truncation{\sem{\pi}_{\vec{\Grp}}}.
    \end{equation*}
    for every \( \vec{\Grp} \).
\end{lemma}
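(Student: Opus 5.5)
The plan is to restrict \( \varpi \) to discrete groupoids, apply Loader's argument there, and then transport the result to arbitrary groupoids using logicality.

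\textbf{Step 1: restriction to sets.} A discrete groupoid \( A \) (all morphisms identities) equipped with totality structure is the same thing as a totality space. For discrete groupoids a profunctor \( I \profarrow \Formula(\vec{A}) \) is determined up to isomorphism by a multiplicity function. Since the stabilisers are trivial, the single orbit property (\cref{lem:total:single-orbit-property}) forces every nonempty component to be a singleton. Hence \( \varpi_{\vec A} \) is, up to isomorphism, the relation \( \Truncation{\varpi_{\vec A}} \). Logicality restricted to functions \( \Phi \) between sets says that \( \Formula(\Phi) \) maps \( \Truncation{\varpi_{\vec A}} \) into \( \Truncation{\varpi_{\vec B}} \). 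Thus \( \Truncation{\varpi} \) restricted to sets is a logical family of total relations in Loader's sense.

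\textbf{Step 2: Loader's extraction.} Following \cite{Loader1994}, I take each occurrence of a variable \( \PropVar_i \) to be a distinct "colour". Choose a set \( A_i \) large enough that every occurrence of \( \PropVar_i \) or \( \PropVar_i^\bot \) in \( \Formula \) gets its own element, with totality structure chosen so that total elements are arbitrary and there is a single cototal point, or the dual as needed. Then pick an element of \( \varpi \) whose object labels are pairwise distinct. In Loader's argument totality forces such an element to exist and forces each positive occurrence to carry the same label as exactly one negative occurrence; this defines the axiom links of \( \pi \). I will reuse this argument essentially verbatim. Logicality along the functions that collapse labels, together with the maximality lemma (\cref{lem:total:loader-maximality}), then gives \( \Truncation{\varpi_{\vec A}} = \Truncation{\sem{\pi}_{\vec A}} \) for all sets.

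\textbf{Step 3: extension to arbitrary groupoids.} For a general \( \vec{\Grp} \), let \( \vec{\Grp}_0 \) be the discrete groupoid on the objects and let \( \iota\colon \vec{\Grp}_0 \to \vec{\Grp} \) be the inclusion. Logicality along \( \iota \) gives \( \Truncation{\varpi_{\vec{\Grp}_0}} \subseteq \Truncation{\varpi_{\vec\Grp}} \) on representatives, and Step 2 identifies the left-hand side with \( \Truncation{\sem\pi_{\vec{\Grp}_0}} \). This relation is exactly the set of tuples whose linked occurrences carry equal objects.

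For the reverse inclusion, take a nonempty component \( \varpi_{\vec\Grp}(a_1,\dots) \). Choose a skeleton functor \( \vec\Grp\to\vec{\Grp}' \), where \( \vec{\Grp}' \) is the discrete groupoid of isomorphism classes \( \Truncation{\Grp_i} \). Applying logicality along it lands in \( \varpi_{\vec{\Grp}'} \), which equals \( \sem\pi_{\vec{\Grp}'} \) by Step 2. Hence linked occurrences carry isomorphic objects, and this is exactly the condition for \( \sem\pi_{\vec\Grp} \) to be nonempty.

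\textbf{Expected obstacle.} The main difficulty is Step 2, specifically adapting Loader's proof. His totality spaces are parameterised by arbitrary totality structures, whereas here only totality groupoids satisfying the extra covering condition of \( \TotProf \) are allowed. I would first check that Loader's test spaces satisfy this covering condition, and add dummy points if they do not.
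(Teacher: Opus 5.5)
Your Steps 1 and 3 are fine; Step 3 is the paper's reflection argument, with both inclusions made explicit. The gap is at the end of Step 2. Logicality along label-collapsing functions gives only \( \Truncation{\sem{\pi}_{\vec{A}}} \subseteq \Truncation{\varpi_{\vec{A}}} \). \cref{lem:total:loader-maximality} cannot turn this into an equality, because it needs \emph{both} relations to be total, and nothing tells you that \( \Truncation{\sem{\pi}_{\vec{A}}} \) is.

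At this point \( \pi \) is just some linking, and linkings that are not proof-nets typically have non-total interpretations. For example, the axiom link in \( \PropVar \otimes \PropVar^{\bot} \) gives the diagonal. Over the two-point space whose total elements are the singletons, the graph of the swap is cototal, and the diagonal misses it. In the paper, correctness of \( \pi \) is obtained only later, through \cref{lem:definability:interpretation-of-linking}, which itself uses the present lemma. Given the inclusion you have, totality of \( \Truncation{\sem{\pi}_{\vec{A}}} \) is equivalent to the reverse inclusion you are trying to prove, so the step is circular.

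What is missing is the \emph{uniqueness} half of totality, used locally as in \cref{lem:appx:totality:equality-condition}. Since \( \varpi \) is indexed by bare sets and must be total for every totality structure on them, you may give the carriers the discrete structure. Then the product of \( \{x_i\} \) at each negative occurrence and the whole carrier at each positive occurrence is cototal, so it meets \( \Truncation{\varpi_{\vec{A}}} \) in exactly one point. Hence an element \( x \) of \( \Truncation{\varpi_{\vec{A}}} \) is determined by its negative labels. Copying each negative label of \( x \) to its linked positive occurrence gives a tuple in \( \Truncation{\sem{\pi}_{\vec{A}}} \subseteq \Truncation{\varpi_{\vec{A}}} \) with the same negative labels, so \( x \) equals that tuple.

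The same device is needed for a point you gloss over. ``Each positive occurrence carries the same label as exactly one negative one'' only yields a function from positive to negative occurrences, whereas a proof structure is an involution. The paper builds maps in both directions and uses the two resulting descriptions of \( \Truncation{\varpi} \) to show they are mutually inverse.

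Finally, Step 1 overstates matters. The restricted family is not uniform in Loader's sense: his uniformity also requires a converse that logicality does not provide. So Loader's Lemma~8 cannot be used as a black box; you must check, as the paper's appendix does, that only forward transport along functions is needed.
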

\begin{proof}
    Since \( \Tot \hookrightarrow \TotProf \), the family \( \varpi \) contains a logical family parameterized by totality spaces, so the result is closely related to \cite[Lemma~8]{Loader1994}.
    A subtlety here is that the logicality of this paper is slightly weaker than the uniformity by Loader~\cite[Definition~2]{Loader1994}.
    For \( \Phi \colon \vec{\Grp} \longrightarrow \vec{\Grpb} \), our logicality requires only \( \varpi_{\vec{\Grp}}(a) \neq \emptyset \Longrightarrow \varpi_{\vec{\Grpb}}(\Formula(\Phi)(a)) \neq \emptyset \), but Loader's uniformity also requires a certain form of the converse.
    However, careful examination reveals that the additional requirement is not needed to prove \cite[Lemma~8]{Loader1994} (see \cref{sec:appx:loader} for a complete proof).

    Loader's lemma gives \( \pi \) and proves the above equality when \( \vec{\Grp} \) are sets (or setoids), say \( \vec{\Grp} = \Truncation{\vec{\Grpb}} \).
    To prove the equality for all groupoids, note that \( \Set \hookrightarrow \GroupoidCat \) is reflective and the unit \( \Grp \longrightarrow \Truncation{\Grp} \) of the reflection is essentially bijective on objects.
    So \( \Truncation{\varpi_{\vec{\Grp}}} = \Truncation{\varpi_{\Truncation{\vec{\Grp}}}} = \Truncation{\sem{\pi}_{\Truncation{\vec{\Grp}}}} = \Truncation{\sem{\pi}_{\vec{\Grp}}} \) by the logicality of the family.
\end{proof}

Let \( \pi \) be the proof structure in \cref{lem:definability:loader-linking}, fixed in the sequel.
Let \( \dot{\Formula}(\PropVarc_1, \PropVarb_1,\dots,\PropVarc_n, \PropVarb_n) \) be a formula obtained by replacing each occurrence of a propositional variable with a fresh variable (where \( \PropVarb_i \) and \( \PropVarc_i \) are positive and negative occurrences).
We can assume without loss of generality that \( \PropVarb_i \) and \( \PropVarc_i \) are connected by an axiom link in \( \pi \).
We have \( \Formula(\PropVar_1,\dots,\PropVar_N) = \dot{\Formula}(\PropVar_{\xi_1}, \PropVar_{\xi_1}, \dots, \PropVar_{\xi_n}, \PropVar_{\xi_n}) \) for some \( \xi_1,\dots,\xi_n \in \{ 1, \dots, N \} \).

\begin{example}
    For \( \Formula = ((\PropVar_1 \llpar \PropVar_2) \otimes \PropVar_2^{\bot}) \llpar ((\PropVar_2 \llpar \PropVar_2^\bot) \llpar \PropVar_1^{\bot}) \), we can choose \( \dot{\Formula}(\PropVarc_1, \PropVarb_1, \PropVarc_2, \PropVarb_2, \PropVarc_3, \PropVarb_3) = ((\PropVarb_2 \llpar \PropVarb_1) \otimes \PropVarc_3^{\bot}) \llpar ((\PropVarb_3 \llpar \PropVarc_1^\bot) \llpar \PropVarc_2^{\bot}) \).
    Then \( \Formula = \dot{\Formula}(\PropVar_2, \PropVar_2, \PropVar_1, \PropVar_1, \PropVar_2, \PropVar_2) \), so \( \xi_1 = 2 \), \( \xi_2 = 1 \) and \( \xi_3 = 2 \).
\qed
\end{example}

As discussed in \cref{sec:pre:interpretation}, there exists a family of groupoid isomorphisms
\begin{align*}
    A(\vec{\Grp}) \:\stackrel{\Functor_{\vec{\Grp}}^{-1}}{\cong}\: \Grp_{\xi_1}^{\op} \times \Grp_{\xi_1} \times \dots \times \Grp_{\xi_n}^{\op} \times \Grp_{\xi_n}
\end{align*}
parameterized by \( \vec{\Grp} \).
Let \( \varpi' \colon I \profarrow \Grp_{\xi_1}^{\op} \times \Grp_{\xi_1} \times \dots \times \Grp_{\xi_n}^{\op} \times \Grp_{\xi_n} \) be a profunctor given by
\begin{equation*}
    \varpi'_{\vec{\Grp}}(\star, (c_1, b_1, \dots, c_n, b_n)) \defeq \varpi_{\vec{\Grp}}(\star, \Functor_{\vec{\Grp}}(c_1, b_1, \dots, c_n, b_n)).
\end{equation*}
Since
\begin{equation*}
    \sem{\pi}_{\vec{\Grp}}(\star, \Functor_{\vec{\Grp}}(c_1, b_1, \dots, c_n, b_n)) \cong \Grp_{\xi_1}(c_1, b_1) \times \dots \times \Grp_{\xi_n}(c_n, b_n),
\end{equation*}
The step (1b) is to show that
\begin{equation*}
    \varpi'_{\vec{\Grp}}(\star, (c_1, b_1, \dots, c_n, b_n)) \cong \Grp_{\xi_1}(c_1, b_1) \times \dots \times \Grp_{\xi_n}(c_n, b_n)
\end{equation*}
natural in \( c_1, b_1, \dots, c_n, b_n \) for every \( \vec{\Grp} \).
We shall often omit \( \star \) and simply write \( \varpi'_{\vec{\Grp}}(c_1, b_1, \dots, c_n, b_n) \) for \( \varpi'_{\vec{\Grp}}(\star, (c_1, b_1, \dots, c_n, b_n)) \).

We first consider a simple case where \( n = 1 \) and \( \vec{\Grp} \) are groups.
Even this simple case needs a non-trivial result from group theory.
A group isomorphism \( \varphi \colon G \stackrel{\cong}{\longrightarrow} G \) on a group \( G \) is called an \emph{automorphism}.
For example, an element \( g \in G \) induces an automorphism \( h \mapsto g h g^{-1} \).
An automorphism of this form is called an \emph{inner automorphism}; an automorphism is called an \emph{outer automorphism} if it is not an inner automorphism.
We use the following characterization of inner automorphisms.
\begin{theorem}[Schupp~\cite{Schupp1987}]\label{thm:definability:characterization-of-inner}
    Let \( G \) be a group and \( \varphi \colon G \stackrel{\cong}{\longrightarrow} G \) is a group automorphism.
    Then \( \varphi \) is an inner automorphism if and only if \( \varphi \) can be extended to an automorphism via any group injection, i.e.,
    \begin{quote}
        for every injective group homomorphism \( \iota \colon G \hookrightarrow H \), there exists an automorphism \( \psi \colon H \stackrel{\cong}{\longrightarrow} H \) such that \( \psi(\iota(g)) = \iota(\varphi(g)) \) for every \( g \in G \).
    \end{quote}
\end{theorem}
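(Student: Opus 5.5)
The statement is an equivalence, and I would prove the two directions separately. The \emph{only if} direction is immediate. If \( \varphi(g) = k g k^{-1} \) for some \( k \in G \) and \( \iota \colon G \hookrightarrow H \) is injective, take \( \psi \) to be conjugation by \( \iota(k) \) on \( H \). Then \( \psi(\iota(g)) = \iota(k)\iota(g)\iota(k)^{-1} = \iota(k g k^{-1}) = \iota(\varphi(g)) \).

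For the \emph{if} direction I argue by contraposition. Given an outer automorphism \( \varphi \) of \( G \), I would construct a single injection \( \iota \colon G \hookrightarrow H \) along which \( \varphi \) does not extend. The target \( H \) should satisfy two properties:
\begin{enumerate}
    \item[(i)] \( H \) is complete, or at least every automorphism of \( H \) is inner;
    \item[(ii)] \( \iota(G) \) is self-normalising in a strong sense: if \( h \in H \) satisfies \( h \iota(G) h^{-1} = \iota(G) \), then \( h \in \iota(G) \cdot C_H(\iota(G)) \).
\end{enumerate}
Suppose such \( H \) exists and \( \psi \) extends \( \varphi \). By (i), \( \psi \) is conjugation by some \( h \in H \). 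Since \( \varphi \) is surjective, this \( h \) normalises \( \iota(G) \). By (ii), \( h = \iota(k) c \) with \( c \) centralising \( \iota(G) \). Then \( \iota(\varphi(g)) = \iota(k g k^{-1}) \), and injectivity of \( \iota \) gives that \( \varphi \) is inner, a contradiction.

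The construction of \( H \) is the main step. Note first that the obvious choice \( H = \mathrm{Sym}(G) \) with the regular representation fails: its normaliser of \( G \) is the holomorph \( G \rtimes \mathrm{Aut}(G) \), into which \( \varphi \) itself extends. So \( H \) must rigidify \( G \) more strongly, and I would follow Schupp's small-cancellation route. Start from the free product \( G * F \), where \( F \) is free on two generators \( x, y \). Add relators that express each element of \( G \) (and each new generator) as a long, highly non-symmetric word mixing \( x \), \( y \) and elements of \( G \), so that the relator set satisfies \( C'(1/6) \) over the free product. Small-cancellation theory over free products then yields three facts:
\begin{enumerate}
    \item[(a)] \( G \) and \( F \) embed in the quotient \( H \);
    \item[(b)] the chosen relators force every automorphism of \( H \) to be inner, because the images of \( x \) and \( y \) are pinned down up to a common conjugation by the rigid relator pattern, which gives (i);
    \item[(c)] by the usual conjugacy and normaliser analysis in \( C'(1/6) \) quotients, any element conjugating \( G \) into itself already lies in \( G \) times its centraliser, which gives (ii).
\end{enumerate}
If \( G \) is uncountable, the same construction should work with a transfinite or large free factor.

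I expect the main obstacle to be verifying (b) and (c) via the small-cancellation machinery, namely Greendlinger-type lemmas over free products. Since the paper invokes this as Schupp's theorem, I would cite \cite{Schupp1987} for these combinatorial facts rather than reprove them, and present only the reduction above together with a description of the embedding.
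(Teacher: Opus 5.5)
Your easy direction and your reduction of the converse are both correct: the converse reduces to the existence of an overgroup \( H \) in which every automorphism is inner and \( N_H(\iota(G)) = \iota(G)\,C_H(\iota(G)) \), and since the paper gives no proof of this statement but simply invokes \cite{Schupp1987}, your proposal rests on exactly the same black box. One caution: your sketch of (b) pins the images of \( x \) and \( y \), which only forces every automorphism to be inner when \( \iota(G) \subseteq \langle x, y \rangle \), i.e.\ for countable \( G \); the paper applies the theorem to arbitrary groups \( \Grp_{\xi_j}(b_j,b_j) \), so you should cite Schupp's theorem for arbitrary \( G \) rather than your particular construction.
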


\begin{lemma}\label{lem:definability:inner-automorphism}
    Let \( \chi = (\chi_{\Group} \colon I \profarrow \Group^{\op} \times \Group)_{\Group} \) be a logical family of profunctors parameterized by groups \( \Group \) (regarded as single-object groupoids).
    Assume that
    \begin{align*}
        \chi_{\Group} \TotalOrthogonal \big(\InducedProf{\{\ident\}} \times \InducedProf{\Group}\big)
        \quad\mbox{and}\quad
        \chi_{\Group} \TotalOrthogonal \big(\InducedProf{\Group^{\op}} \times \InducedProf{\{\ident\}}\big)
\end{align*}
    for every \( \Group \).
    Then \( \chi_{\Group}(\star, ({-}, {+})) \cong \Group({-}, {+}) \).
\end{lemma}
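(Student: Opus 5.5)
The plan is to identify \( \chi_{\Group} \) with a \( \Group \)-bitorsor twisted by an automorphism, and then use logicality together with Schupp's characterization (\cref{thm:definability:characterization-of-inner}) to show that the twisting automorphism is inner.

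\emph{Step 1: the right action is free and transitive.} Write \( X \defeq \chi_{\Group}(\star, (\ast,\ast)) \), where \( \ast \) is the unique object. It carries commuting left and right actions of \( \Group \): \( \alpha \cdot x \cdot \beta \), coming from the two components \( \Group^{\op} \) and \( \Group \). Consider the orthogonality \( \chi_{\Group} \TotalOrthogonal (\InducedProf{\{\ident\}} \times \InducedProf{\Group}) \), i.e. \( \chi_{\Group} \TotalOrthogonal \InducedProf{\Groupb} \) with \( \Groupb = \{\ident\} \times \Group \). The coend defining \( \chi_{\Group} ; \InducedProf{\Groupb} \) is the quotient of \( X \) by the subgroup \( \Groupb \), i.e. the set of orbits of the right \( \Group \)-action. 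So \( (\chi_{\Group} ; \InducedProf{\Groupb}) \cong \ident_I \) says exactly that the right action is transitive.

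The stable part \( \chi_{\Group} \orthogonal \InducedProf{\Groupb} \) says that every stabiliser \( \Stabiliser(x) \) meets every conjugate of \( \{\ident\}\times\Group \) trivially. That group is normal in \( \Group \times \Group \), so this is just \( \Stabiliser(x) \cap (\{\ident\}\times\Group) = \{\ident\} \). Hence the right action is free.

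\emph{Step 2: the left action and the twisting automorphism.} Symmetrically, the second hypothesis shows that the left action is free and transitive. Fix \( x_0 \in X \). For each \( \alpha \in \Group \), freeness and transitivity of the right action give a unique \( \varphi(\alpha) \in \Group \) with \( \alpha \cdot x_0 = x_0 \cdot \varphi(\alpha) \). Because the two actions commute, \( \varphi \) is a homomorphism, and it is bijective because the left action is also free and transitive.

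Thus \( X \) is isomorphic to \( \Group \) with right multiplication and with the left action twisted by \( \varphi \). Changing \( x_0 \) changes \( \varphi \) only by an inner automorphism. Moreover, if \( \varphi \) is the inner automorphism given by conjugation by \( c \), then the map \( x_0 \cdot g \mapsto c g \) (with the conventions of the paper) gives a natural isomorphism \( X \cong \Group({-},{+}) \) as a bi-set. So it remains to show that \( \varphi =: \varphi_{\Group} \) is inner.

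\emph{Step 3: \( \varphi_{\Group} \) is inner, via logicality and Schupp.} Take an injective homomorphism \( \iota \colon \Group \hookrightarrow \Groupb \), viewed as a functor between one-object groupoids. Logicality gives a map \( X_{\Group} \to X_{\Groupb} \) that is equivariant along \( \iota \) on both sides. Let \( y_0 \) be the image of \( x_0 \), and let \( \varphi_{\Groupb} \) be computed from the base point \( y_0 \). Applying the map to \( \alpha \cdot x_0 = x_0 \cdot \varphi_{\Group}(\alpha) \) yields
\[
\iota(\alpha) \cdot y_0 = y_0 \cdot \iota(\varphi_{\Group}(\alpha)).
\]
By freeness of the right action on \( X_{\Groupb} \), this gives \( \varphi_{\Groupb}(\iota(\alpha)) = \iota(\varphi_{\Group}(\alpha)) \). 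So the automorphism \( \varphi_{\Groupb} \) of \( \Groupb \) extends \( \varphi_{\Group} \) along \( \iota \). Since \( \iota \) was an arbitrary injection, \cref{thm:definability:characterization-of-inner} shows that \( \varphi_{\Group} \) is inner, and Step 2 concludes the proof.

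I expect the main obstacle to be convention bookkeeping rather than a conceptual difficulty. One must check that the orthogonality with \( \{\ident\}\times\Group \) really controls the action coming from the \( \Group \)-component (not the \( \Group^{\op} \)-component), given the diagrammatic composition order and the op. One must also check that the extension found in Step 3 has exactly the direction Schupp's theorem requires, i.e. \( \psi \circ \iota = \iota \circ \varphi \) rather than its inverse; if it comes out inverted, apply the argument to \( \varphi^{-1} \), which is inner exactly when \( \varphi \) is.
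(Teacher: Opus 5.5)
Your proposal is correct and follows the paper's own proof. Your free and transitive right action is the paper's strict factorization system \( (\Stabiliser(x), \{\ident\}\times\Group) \). Your \( \varphi \), defined by \( \alpha\cdot x_0 = x_0\cdot\varphi(\alpha) \), is the paper's \( \vartheta \), defined by \( (\alpha^{-1},\vartheta(\alpha))\in\Stabiliser(x) \). The logicality-plus-Schupp step is the same argument.

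The differences are only in packaging. You compute \( \chi_{\Group};\InducedProf{\{\ident\}\times\Group} \) directly as the orbit set \( X/(\{\ident\}\times\Group) \), rather than going through the single-orbit lemma and \cref{prop:totality:strict-factorization-system}. You also get bijectivity of \( \varphi \) from regularity of the left action, instead of constructing the inverse \( \vartheta' \) explicitly.
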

\begin{proof}
    By \( \chi_{\Group} \TotalOrthogonal \big(\InducedProf{\{\ident\}} \times \InducedProf{\Group}\big) \), we know that \( \chi_{\Group} \) has a single orbit by \cref{lem:total:single-orbit-property}.
    Let \( x \in \chi_{\Group}(\star, (\star,\star)) \) and \( \Groupb = \Stabiliser(x) \le (\Group^{\op} \times \Group) \).
    Then \( \chi_{\Group} \cong \InducedProf{\Groupb} \) (because \( \chi_{\Group} \) has a single orbit).

    Since \( \InducedProf{\{\ident\}} \times \InducedProf{\Group} \cong \big(\InducedProf{\{\ident\}\times \Group}\big) \), the stably total orthogonality means that \( (\Groupb, \{\ident\}\times \Group) \) forms a strict factorization system (\cref{prop:totality:strict-factorization-system}).
    So \( (\alpha, \ident) \in \Group^{\op} \times \Group \) can be uniquely factorized as \( (\alpha, \ident) = (\alpha_1, \alpha_2) (\ident, \alpha'_2) \) with \( (\alpha_1,\alpha_2) \in \Groupb \).
    By the definition of the composition in \( \Group^{\op} \times \Group \), we know \( \alpha_1 = \alpha \), so \( (\alpha, \alpha_2) \in \Groupb \).
    Furthermore, this \( \alpha_2 \) is a unique element related to \( \alpha \): if \( (\alpha, \alpha_2), (\alpha, \alpha_2') \in \Groupb \), then \( (\alpha, \alpha_2)^{-1} (\alpha, \alpha_2') = (\ident, \alpha_2^{-1} \alpha_2') \in \Groupb \), so \( \alpha_2^{-1} \alpha_2' = \ident \) since \( (\ident, \alpha_2^{-1} \alpha_2') \in (\Groupb \cap (\{\ident\} \times \Group)) \).
    This observation holds for every \( \alpha \), so \( \forall \alpha. \exists! \alpha_2. (\alpha, \alpha_2) \in \Groupb \).
    Let \( \vartheta \colon \Group \longrightarrow \Group \) be the function defined by \( (\alpha^{-1}, \vartheta(\alpha)) \in \Groupb \).

    The function \( \vartheta \) is a group homomorphism.
    Actually, since \( \Groupb \) is a subgroup of \( \Group^{\op} \times \Group \), we know that \( \Groupb \) contains
    \begin{align*}
        (\alpha_1^{-1}, \vartheta(\alpha_1)) (\alpha_2^{-1}, \vartheta(\alpha_2))
        &= (\alpha_2^{-1} \alpha_1^{-1}, \vartheta(\alpha_1)\vartheta(\alpha_2)),
        \\
        &= ((\alpha_1\alpha_2)^{-1}, \vartheta(\alpha_1)\vartheta(\alpha_2)),
    \end{align*}
    so \( \vartheta(\alpha_1\alpha_2) = \vartheta(\alpha_1)\vartheta(\alpha_2) \).

    The function \( \vartheta \) is a group automorphism.
    We prove this claim by constructing the inverse homomorphism.
    A similar argument using \( \chi_{\Group} \TotalOrthogonal \big(\InducedProf{\Group^{\op}} \times \InducedProf{\{\ident\}}\big) \) gives a group homomorphism \( \vartheta' \colon \Group \longrightarrow \Group \) satisfying \( (\vartheta'(\beta), \beta^{-1}) \in \Groupb \) for every \( \beta \in \Group \).
    Then
    \begin{equation*}
         (\alpha^{-1}, \vartheta(\alpha)) (\vartheta'(\beta), \beta^{-1})
         \;=\; (\vartheta'(\beta)\,\alpha^{-1}, \vartheta(\alpha)\,\beta^{-1})
         \;\in\; \Groupb 
    \end{equation*}
    for every \( \alpha, \beta \in \Group \).
    By instantiating \( \beta = \vartheta(\alpha) \), we have
    \begin{equation*}
        (\vartheta'(\vartheta(\alpha))\,\alpha^{-1} , \ident) \;\in\; \Groupb,
    \end{equation*}
    so \( \alpha = \vartheta'(\vartheta(\alpha)) \) since \( (\vartheta'(\vartheta(\alpha))\,\alpha^{-1}, \ident) \in (\Groupb \cap (\Group^{\op} \times \{\ident\})) \).
    A similar argument shows that \( \beta = \vartheta(\vartheta'(\beta)) \).

    Note that the construction of the group automorphism \( \vartheta \) is parameterized by group \( \Group \) and element \( x \in \chi_G(\star, \star) \).
    We shall write \( \vartheta_{\Group, x} \) for the above-given automorphism to make the parameters \( \Group \) and \( x \) explicit.

    We prove that \( \vartheta_{\Group, x} \) is an inner automorphism, appealing to \cref{thm:definability:characterization-of-inner}.
    Let \( \iota \colon \Group \hookrightarrow \Group' \) be an arbitrary group embedding.
    Since \( \chi \) is logical, we have a function \( \psi \colon \chi_{\Group}(\star,(\star,\star)) \longrightarrow \chi_{\Group'}(\star,(\star,\star)) \) such that \( \psi(y \cdot (\alpha_1^{-1}, \alpha_2)) = \psi(y) \cdot (\iota(\alpha_1)^{-1}, \iota(\alpha_2)) \) for every \( y \in \chi_{\Group}(\star,(\star,\star)) \) and  \( \alpha_1, \alpha_2 \in \Group \).
    For \( \alpha \in \Group \), since \( x \cdot (\alpha^{-1}, \vartheta_{\Group, x}(\alpha)) = x \), we have \( \psi(x) \cdot (\iota(\alpha)^{-1}, \iota(\vartheta_{\Group, x}(\alpha))) = \psi(x) \), so \( (\iota(\alpha)^{-1}, \iota(\vartheta_{\Group, x}(\alpha))) \in \Stabiliser(\psi(x)) \).
    This means that \( \vartheta_{\Group', \psi(x)}(\iota(\alpha)) = \iota(\vartheta_{\Group, x}(\alpha)) \).
    Since \( \alpha \in \Group \) is arbitrary, this means that the group automorphism \( \vartheta_{\Group, x} \) on \( \Group \) can be extended to a group automorphism \( \vartheta_{\Group', \psi(x)} \) on \( \Group' \) via the embedding \( \iota \).
    Since \( \iota \) is arbitrary, \( \vartheta_{\Group, x} \) satisfies the condition in \cref{thm:definability:characterization-of-inner}.
    Therefore, \( \vartheta_{\Group, x} \) is an inner automorphism.

    Hence, \( \vartheta(\alpha) = \beta \alpha \beta^{-1} \) for some \( \beta \in \Group \).
    Let \( x' = x \cdot (\ident, \beta^{-1}) \).
Then the automorphism induced by \( \Stabiliser(x') \) is the identity, i.e.~\( \Stabiliser(x') = \{ (\alpha^{-1}, \alpha) \mid \alpha \in \Group \} \).
    So \( \chi_{\Group} \) is isomorphic to the transpose of the identity profunctor.
\end{proof}

We then prove the general case by using the above result and \cref{lem:appx:totality:equality-condition}.
The next lemma gives a lower bound of the stabiliser.
\begin{lemma}\label{lem:definability:stabiliser-contains-innor-automorphism}
    Assume \( x \in \varpi'_{\vec{\Grp}}(b_1, b_1, \dots, b_n, b_n) \).
    For each \( 1 \le j \le n \), there exists an inner automorphism \( \varphi_j \) on \( \Grp_{\xi_j}(b_j, b_j) \) such that, for every \( \alpha \in \Grp_{\xi_j}(b_j,b_j) \),
    \begin{equation*}
        (\overbrace{\ident, \dots, \ident}^{2(j-1)}, \alpha^{-1}, \varphi_j(\alpha), \overbrace{\ident, \dots, \ident}^{2(N-j)}) \quad\in\quad \Stabiliser(x).
    \end{equation*}
\end{lemma}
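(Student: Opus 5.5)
The plan is to reduce to \cref{lem:definability:inner-automorphism} by restricting \( \varpi' \) to the \( j \)-th axiom link. Fix \( j \), and write \( \Group \defeq \Grp_{\xi_j}(b_j,b_j) \), viewed as a one-object groupoid. Consider the full-subgroupoid inclusion \( \iota \colon \Group \hookrightarrow \Grp_{\xi_j} \) at \( b_j \). I want a logical family \( \chi = (\chi_{\Groupb})_{\Groupb} \), parameterized by groups \( \Groupb \), with \( \chi_{\Groupb} \colon I \profarrow \Groupb^{\op}\times\Groupb \). It should be obtained from \( \varpi' \) by substituting \( \Groupb \) for the \( j \)-th link's variable, substituting the trivial groupoid \( I \) for every other link, and reading off the \( j \)-th factor. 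To make this substitution available, I first split variables: the family \( \varpi \) is logical in \( \vec{\Grp} \), but the links sharing \( \PropVar_{\xi_j} \) must be decoupled. So I will pass to the family \( \dot\varpi \) indexed by \( \dot{\Formula} \), with one variable per link. The idea is to define it through \( \varpi \) composed with the diagonal/inclusion functors, but the better route is to apply logicality to the functors \( \Grp_{\xi_j}\to \Grp_{\xi_j} \) that are identity on \( \Group \) and collapse elsewhere. Concretely, the strategy is to use logicality along the functor \( \Phi \) sending \( \Grp_{\xi_j} \) to the group \( \Group \)-induced quotient, which exists only for the relevant component; if this fails I will instead push forward along \( \iota \) and pull back by choosing elements.

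With \( \chi \) in hand, I must verify the two orthogonality hypotheses of \cref{lem:definability:inner-automorphism}. These come from stable totality. Equip \( \Group \) with the two Boolean totality structures whose total elements are \( \InducedProf{\{\ident\}} \) and \( \InducedProf{\Group} \), respectively. These are totality groupoids because \( (\{\ident\},\Group) \) is a strict factorization system (\cref{prop:totality:strict-factorization-system}). Equip the other links with the trivial totality groupoid \( \TUnit \). Stable totality of \( \varpi \) at these totality groupoids, composed with the cototal elements of the remaining links (all trivial), gives \( \chi_{\Group}\TotalOrthogonal \InducedProf{\{\ident\}}\times\InducedProf{\Group} \) and symmetrically the other one. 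The lemma then gives \( \chi_{\Group}\cong\Group(-,+) \). More importantly, its proof shows that the stabiliser of the image of \( x \) is the graph \( \{(\alpha^{-1},\vartheta(\alpha))\} \) of an inner automorphism \( \vartheta \), where \( \vartheta \) is inner by \cref{thm:definability:characterization-of-inner}.

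Finally I transport this back to \( x \). Logicality gives a map \( \psi \) from the original component to the collapsed one, and it carries \( \Stabiliser(x) \) into the stabiliser of \( \psi(x) \). To get membership in \( \Stabiliser(x) \) itself, rather than just mapping into it, I will apply the inner-automorphism argument of \cref{lem:definability:inner-automorphism} directly to \( \Stabiliser(x) \). Stable totality of \( \varpi' \), tested against cototal elements that are trivial on the other factors and \( \InducedProf{\{\ident\}}\times\InducedProf{\Group} \) on the \( j \)-th pair, gives a strict factorization. That factorization yields, for each \( \alpha \), a unique element of \( \Stabiliser(x) \) of the form \( (\ldots,\alpha^{-1},\varphi_j(\alpha),\ldots) \) with identities elsewhere. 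Uniqueness and the group structure make \( \varphi_j \) an automorphism, and logicality across embeddings makes it inner.

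The main obstacle is precisely this step. One must ensure that the factorization places identities in all other coordinates, which requires choosing totality structures on the other variables that force trivial components, while the variables \( \PropVar_{\xi_j} \) are shared among several links.
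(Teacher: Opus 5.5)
Your proposal has a genuine gap. You correctly aim to build an auxiliary family \( \chi \) isolating the \( j \)-th axiom link, test it against \( \InducedProf{\{\ident\}} \) and \( \InducedProf{\Group} \), and invoke \cref{lem:definability:inner-automorphism}, but the two steps that carry the proof are left open.

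The first gap is the construction of \( \chi \) itself. Links sharing \( \PropVar_{\xi_j} \) must receive the same groupoid, so ``substituting \( I \) for every other link'' is not available. Neither a collapsing functor nor restriction along \( \iota \) yields a component whose automorphism group is only \( \Group^{\op}\times\Group \), because all links of that variable would sit at the same object. The missing device is to assign to every variable the coproduct groupoid \( \mathbb{D}_{\Group} = \Group + I + \dots + I \). Put link \( j \) at the \( \Group \)-summand and each other link \( k \) at its own object \( \star_k \), distinct from the \( \Group \)-summand and from each other, with trivial automorphism group. Then set \( \chi_{\Group}({-},{+}) \defeq \varpi_{\vec{\mathbb{D}}_{\Group}}(\dots,\mathsf{inj}({-}),\mathsf{inj}({+}),\dots,\star_k,\star_k,\dots) \). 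The other coordinates then disappear from the automorphism group, and \( \chi \) is logical in \( \Group \) because \( \varpi \) is. The totality structures \( \mathcal{G}^{\pm} + I + \dots + I \), where \( \mathcal{G}^{\pm} \) are your two structures on \( \Group \), deliver both orthogonality hypotheses. The distinctness of the \( \star_k \) is also what later lets you send each \( \star_k \) to an arbitrary \( b_k \).

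The second gap is the transfer back to \( x \), and your fallback of running the factorization argument directly on \( \Stabiliser(x) \) at the given \( \vec{\Grp} \) cannot work. Take \( \varpi = \sem{\pi} \) with finite groups. Then \( \Stabiliser(x) \) has order \( \prod_k |\Grp_{\xi_k}(b_k,b_k)| \). A cototal element with stabiliser \( \{\ident\}\times\Group \) on the \( j \)-th pair and trivial elsewhere would have to be a strict-factorization complement of order \( |\Grp_{\xi_j}(b_j,b_j)| \). That is impossible unless every other link has trivial automorphism group, so no such cototal element exists. Moreover, at a fixed \( \vec{\Grp} \) there is no family indexed by groups to which \cref{thm:definability:characterization-of-inner} could be applied.

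The argument must go forward from the auxiliary family:
\begin{enumerate}
\item With \( \Group = \Grp_{\xi_j}(b_j,b_j) \), \cref{lem:definability:inner-automorphism} gives \( x_{\Group} \in \chi_{\Group} \) with \( \Stabiliser(x_{\Group}) = \{ (\alpha^{-1},\alpha) \mid \alpha \in \Group \} \).
\item Use functors \( \mathbb{D}_{\Group} \to \Grp_i \) that embed the \( \Group \)-summand as \( \Grp_{\xi_j}(b_j,b_j) \) and send \( \star_k \mapsto b_k \). Logicality along them maps \( x_{\Group} \) to some \( x'_{\Group} \in \varpi'_{\vec{\Grp}}(b_1,b_1,\dots,b_n,b_n) \) whose stabiliser contains \( (\ident,\dots,\ident,\alpha^{-1},\alpha,\ident,\dots,\ident) \). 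The identities elsewhere are automatic because functors preserve identities.
\item The single orbit property (\cref{lem:total:single-orbit-property}) gives \( x = x'_{\Group}\cdot\beta \), hence \( \Stabiliser(x) = \beta^{-1}\Stabiliser(x'_{\Group})\beta \). Conjugating by the \( j \)-th components of \( \beta \) produces the inner automorphism \( \varphi_j \).
\end{enumerate}
The single-orbit step, which is what makes the statement hold for an arbitrary \( x \), is absent from your proposal.
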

\begin{proof}
    We can assume without loss of generality that \( j = 1 \) and \( i_j = 1 \).
    For a group \( \Group \), let \( \mathbb{D}_{\Group} = \Group + \overbrace{I + \dots + I}^{n-1} \) (the coproduct as groupoids) and \( \vec{\mathbb{D}}_{\Group} = (\mathbb{D}_{\Group},\dots,\mathbb{D}_{\Group}) \).
    We write \( \star_j \) for the \( j \)-th object in \( \mathbb{D}_{\Group} \) and define a profunctor \( \chi_{\Group} \colon I \profarrow \Group^{\op} \times \Group \) by
    \begin{equation*}
        \chi_{\Group}({-}, {+})
        \quad\defeq\quad
        \varpi_{\vec{\mathbb{D}}_{\Group}}(\mathsf{inj}_1({-}),\mathsf{inj}_1({+}),\star_2,\star_2,\dots,\star_n,\star_n)
    \end{equation*}
    where \( \mathsf{inj}_1 \colon G \longrightarrow \mathbb{D}_{\Group} \).
    We write the unique object of \( G \) as \( \ast \) to distinguish \( \star \in I \).
    Let \( \mathcal{G}^{+} \) and \( \mathcal{G}^{-} \) be the totality structure on \( \Group \) such that \( \mathcal{G}^+_{\ast} = \{ \Group \} \) and \( \mathcal{G}^{-}_\ast = \{ \{ \ident \} \} \),
    \( \mathcal{D}^{\pm}_\Group = \mathcal{G}^{\pm} + I + \dots + I \),
    and \( \vec{\mathcal{D}}^{\pm}_{\Group} = (\mathcal{D}^{\pm}_\Group, \dots, \mathcal{D}^{\pm}_\Group) \).
    Then \( \Group \times \{ \ident \} \times 1 \times \dots \times 1 \) belongs to \( \Formula(\vec{\mathcal{D}}^+)^{\bot} \) and \( \{ \ident \} \times \Group \times 1 \times \dots \times 1 \) belongs to \( \Formula(\vec{\mathcal{D}}^-)^{\bot} \) (where \( 1 = I(\star,\star) \) is the trivial group).
    Since \( \varpi \) is total, we have
    \begin{align*}
        \chi_{\Group} \TotalOrthogonal \big(\InducedProf{\Group} \times \InducedProf{\{ \ident \}}\big)
        \\
        \chi_{\Group} \TotalOrthogonal \big(\InducedProf{\{ \ident \}} \times \InducedProf{\Group}\big)
    \end{align*}

    Since \( \Group \) is arbitrary, we obtain a family \( \chi = (\chi_{\Group})_{\Group} \) parameterized by group \( \Group \).
    This family is logical since \( \varpi \) is.
    By \cref{lem:definability:inner-automorphism}, \( \chi_{\Group} \) is isomorphic to the transpose of the identity profunctor.
    So there exists \( x_{\Group} \in \chi_{\Group}(\ast,\ast) \) such that \( \Stabiliser(x_{\Group}) = \{ (\alpha^{-1}, \alpha) \mid \alpha \in \Group \} \).

    Let \( \Group = \Grp_1(b_1,b_1) \) and consider functors \( \varphi_i \colon \mathbb{D}_{\Group} \longrightarrow \Grp_i \) that satisfy \( \varphi_{i_j}(\star_j) = b_j \) for every \( j = 1,\dots,n \) and \( \varphi_1(\mathsf{inj}_1(\alpha)) = \alpha \) for every \( \alpha \in \Grp_1(b_1,b_1) \).
    Let \( \Phi = (\varphi_1,\dots,\varphi_m) \).
    By the logicality, there exists a natural transformation \( \varpi_{\vec{\mathbb{D}}_{\Group}} \Longrightarrow (\Formula(\Phi);\varpi_{\vec{\Grp}}) \).
    Let \( x'_{\Group} \) be the image of \( x_\Group \) by the natural transformation.
    Then \( (\alpha^{-1}, \alpha, \ident, \dots, \ident) \in \Stabiliser(x'_{\Group}) \) for every \( \alpha \in \Grp_1(b_1,b_1) \).
    Since \( \varpi_{\vec{\Grp}} \) is total, \( \varpi_{\vec{\Grp}}(b_1,b_1,\dots,b_n,b_n) \) has a single orbit, so \( x = x'_{\Group} \cdot \beta \) for some \( \beta = (\beta_1, \beta_1', \dots, \beta_n, \beta_n') \in \Formula(\vec{\Grp})((b_1,b_1,\dots,b_n,b_n), (b_1,b_1,\dots,b_n,b_n)) \).
    Then \( \varphi_1(\alpha) = (\beta_1' \beta_1) \alpha (\beta_1' \beta_1)^{-1} \) satisfies the condition.
\end{proof}

\begin{corollary}\label{cor:definability:identity-stabilizer}
    Assume that \( \varpi'_{\vec{\Grp}}(b_1,b_1,\dots,b_n,b_n) \neq \emptyset \).
    There exists \( x \in \varpi'_{\vec{\Grp}}(b_1,b_1,\dots,b_n,b_n) \) such that
    \begin{equation*}
        (\alpha_1^{-1}, \alpha_1, \dots, \alpha_n^{-1}, \alpha_n) \in \Stabiliser(x)
    \end{equation*}
    for every \( (\alpha_1,\dots,\alpha_n) \in \prod_{i = 1}^n \Grp_{\xi_i}(b_i, b_i)\).
\end{corollary}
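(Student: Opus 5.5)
Start from an arbitrary \( x_0 \in \varpi'_{\vec{\Grp}}(b_1,b_1,\dots,b_n,b_n) \), which exists by assumption. By \cref{lem:definability:stabiliser-contains-innor-automorphism}, for each \( j \) there is an inner automorphism \( \varphi_j \) of \( \Grp_{\xi_j}(b_j,b_j) \) such that the element with \( (\alpha^{-1}, \varphi_j(\alpha)) \) in the \( j \)-th pair of coordinates and identities elsewhere lies in \( \Stabiliser(x_0) \). Write \( \varphi_j(\alpha) = \gamma_j \alpha \gamma_j^{-1} \) for some \( \gamma_j \in \Grp_{\xi_j}(b_j,b_j) \).

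Remove all the conjugations at once by acting only on the positive coordinates. Put
\begin{equation*}
    x \defeq x_0 \cdot (\ident, \gamma_1, \ident, \gamma_2, \dots, \ident, \gamma_n).
\end{equation*}
Here the exact placement of \( \gamma_j \) versus \( \gamma_j^{-1} \) must be fixed by the diagrammatic composition convention, exactly as in the last step of \cref{lem:definability:inner-automorphism}, where \( x' = x \cdot (\ident, \beta^{-1}) \). Stabilisers transform by conjugation: \( \Stabiliser(x_0 \cdot \delta) = \delta^{-1}\, \Stabiliser(x_0)\, \delta \). Conjugating the \( j \)-th generator by \( \delta \) leaves the negative coordinate \( \alpha^{-1} \) unchanged and turns the positive coordinate \( \gamma_j \alpha \gamma_j^{-1} \) into \( \alpha \). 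Hence \( (\ident,\dots,\alpha^{-1},\alpha,\dots,\ident) \in \Stabiliser(x) \) for every \( j \) and every \( \alpha \in \Grp_{\xi_j}(b_j,b_j) \).

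Since \( \Stabiliser(x) \) is a subgroup, and the elements for different \( j \) are supported on disjoint coordinates, their product
\begin{equation*}
    (\alpha_1^{-1},\alpha_1,\dots,\alpha_n^{-1},\alpha_n)
\end{equation*}
lies in \( \Stabiliser(x) \). This gives the corollary.

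The only delicate point is bookkeeping. First, the convention for composition in \( \Grp^{\op} \) and the direction of the right action have to be handled consistently, so that the chosen \( \gamma_j \) cancels \( \varphi_j \) rather than doubling it. Second, the change of element happens in the positive coordinates, where the \( \gamma_j \) are independent endomorphisms, so a single \( \delta \) works for all \( j \) simultaneously. Note that \cref{lem:definability:stabiliser-contains-innor-automorphism} is stated for a fixed \( x \), so all the \( \varphi_j \) refer to the same \( x_0 \), which is what makes this simultaneous correction legitimate.
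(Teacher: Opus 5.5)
Your proof is correct and is essentially the paper's argument: apply \cref{lem:definability:stabiliser-contains-innor-automorphism} to a single element, write each inner automorphism $\varphi_j$ as conjugation by some $\gamma_j$, and translate that element by $\gamma_j$ in the positive coordinates only, using that stabilisers are subgroups and transform by conjugation. The only difference is order. The paper first multiplies the coordinatewise stabiliser elements and then translates, whereas you translate first and then multiply, and this makes no difference.
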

\begin{proof}
    Suppose \( y \in \varpi'_{\vec{\Grp}}(b_1,b_1,\dots,b_n,b_n) \).
    By \cref{lem:definability:stabiliser-contains-innor-automorphism}, for each \( j \), there is an inner automorphism \( \varphi_j \) on \( \Grp_{\xi_j}(b_j, b_j) \) such that
    \begin{equation*}
        (\ident, \dots, \ident, \alpha_j^{-1}, \varphi_j(\alpha_j), \ident, \dots, \ident) \in \Stabiliser(y)
    \end{equation*}
    for every \( \alpha_j \in \Grp_{\xi_j}(b_j,b_j) \).
    Since \( \Stabiliser(x) \) is closed under composition,
    \begin{equation*}
        \{\, (\alpha_1^{-1}, \varphi_1(\alpha_1), \dots, \alpha_n^{-1}, \varphi_n(\alpha_n)) \mid \vec{\alpha} \,\} \:\subseteq\: \Stabiliser(y)
    \end{equation*}
    where \( \vec{\alpha} = (\alpha_1,\dots,\alpha_n) \) ranges over \( \Grp_{\xi_1}(b_1,b_1) \times \dots \times \Grp_{\xi_j}(b_j,b_j) \).
    Since \( \varphi_i \) is an inner automorphism, there exists \( \beta_i \) such that \( \varphi_i(\alpha) = \beta_i \alpha \beta_i^{-1} \).
    Hence
    \begin{equation*}
        \{\, (\alpha_1^{-1}, \beta_1\alpha_1\beta_1^{-1}, \dots, \alpha_n^{-1}, \beta_n\alpha_n\beta_n^{-1}) \mid \vec{\alpha} \,\} \:\subseteq\: \Stabiliser(y).
    \end{equation*}
    Let \( x \defeq y \cdot (\ident, \beta_1, \dots, \ident, \beta_n) \).
    Then \( x \cdot (\alpha_1^{-1}, \alpha_1, \dots, \alpha_n^{-1}, \alpha_n) = x \).
\end{proof}

The desired result \( \varpi \cong \sem{\pi} \) is obtained by using the maximality property (\cref{lem:total:equivalence-principle}) to the lower bound given in \cref{cor:definability:identity-stabilizer}.
\begin{lemma}\label{lem:definability:interpretation-of-linking}
    \( \varpi_{\vec{\Grp}} \cong \sem{\pi}_{\vec{\Grp}} \) for every \( \vec{\Grp} \).
\qed
\end{lemma}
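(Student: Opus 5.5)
We apply \cref{lem:total:equivalence-principle} to \( \Profunctor \defeq \varpi_{\vec{\Grp}} \) and \( \Profunctorb \defeq \sem{\pi}_{\vec{\Grp}} \), working in a totality groupoid \( \Formula(\vec{\TGrp}) \) whose underlying groupoid is \( \Formula(\vec{\Grp}) \). Two preliminary choices are needed.

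First, fix \( \vec{\Grp} \) and choose totality structures \( \vec{\TGrp} \) with \( \Underlying(\vec{\TGrp}) = \vec{\Grp} \) making every object of each \( \Grp_i \) occur in some total and some cototal element. For example, we can take the structures induced by single-object pieces as in the proof of \cref{lem:definability:stabiliser-contains-innor-automorphism}, extended componentwise. If this is delicate for arbitrary groupoids, an alternative is to first prove the isomorphism over such \( \vec{\TGrp} \) and then note that \( \varpi_{\vec{\Grp}} \) and \( \sem{\pi}_{\vec{\Grp}} \) depend only on \( \vec{\Grp} \).

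Second, \( \sem{\pi}_{\vec{\TGrp}} \) must be stably total. The proof structure \( \pi \) is correct by \cref{thm:creed-kit-correctness-without-cut}, because its relational shadow is fixed and \( \varpi \) is stable. Strictly, we need stability of \( \sem{\pi} \) rather than of \( \varpi \), and these are not yet known to coincide. I would instead argue via the stabiliser lower bound below, which shows that the stabilisers of \( \varpi \) contain those of \( \sem{\pi} \), so creed membership transfers. Once \( \pi \) is correct, \cref{lem:definability:easy-direction} gives that \( \sem{\pi} \) is a stably total element.

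We then check the three hypotheses of \cref{lem:total:equivalence-principle} for \( \Profunctor = \varpi_{\vec{\TGrp}} \).
\begin{itemize}
\item \emph{Relational totality.} \( \Truncation{\varpi_{\vec{\Grp}}} \) is total because \( \varpi_{\vec{\TGrp}} \) is stably total and \( \Truncation{-} \) maps \( \TotProf \) to \( \Tot \).
\item \emph{Weak totality.} This holds trivially: \( \varpi_{\vec{\TGrp}} \) is orthogonal to every cototal \( C \), and for every object \( a \) some cototal \( C \) has \( C(a) \neq \emptyset \) by the defining condition of \( \TotProf \).
\item \emph{Stabiliser containment.} For \( a \) with \( \varpi_{\vec{\Grp}}(a) \neq \emptyset \), transport along \( \Functor_{\vec{\Grp}} \) to write \( a \cong \Functor(c_1,b_1,\dots,c_n,b_n) \). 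By \cref{lem:definability:loader-linking}, \( \sem{\pi}_{\vec{\Grp}}(a) \neq \emptyset \), so \( c_j \cong b_j \) for each \( j \). Hence we may pick \( a' \cong a \) of the form \( \Functor(b_1,b_1,\dots,b_n,b_n) \). \Cref{cor:definability:identity-stabilizer} then gives \( x \in \varpi'(b_1,b_1,\dots,b_n,b_n) \) whose stabiliser contains \( D \defeq \{(\alpha_1^{-1},\alpha_1,\dots,\alpha_n^{-1},\alpha_n)\} \). The element \( y = (\ident_{b_1},\dots,\ident_{b_n}) \) of \( \prod_j \Grp_{\xi_j}(b_j,b_j) \cong \sem{\pi}'(b_1,b_1,\dots) \) has stabiliser exactly \( D \). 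So \( \Stabiliser(y) \subseteq \Stabiliser(x) \).
\end{itemize}

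The containment in the third item runs in the wrong direction: \cref{lem:total:equivalence-principle} asks for \( \Stabiliser(x) \subseteq \Stabiliser(y) \) with \( x \in \Profunctor \). I would fix this by swapping roles, taking \( \Profunctor \defeq \sem{\pi}_{\vec{\TGrp}} \) and \( \Profunctorb \defeq \varpi_{\vec{\TGrp}} \). With this swap, the first two hypotheses must be verified for \( \sem{\pi} \): its relational totality follows from \cref{lem:definability:loader-linking} together with the totality of \( \Truncation{\varpi} \), and weak totality follows from stable totality of \( \sem{\pi} \) once correctness is established. The swap therefore makes correctness of \( \pi \) load-bearing.

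This is the main obstacle. To obtain correctness without circularity, I would note that the argument of \cref{thm:creed-kit-correctness-without-cut} only exhibits a non-creed stabiliser of the all-identity experiment \( y \). Since \( \Stabiliser(y) \subseteq \Stabiliser(x) \) for some \( x \in \varpi \) (after conjugation), that stabiliser would also violate the creed for \( \varpi \), contradicting the stability of \( \varpi \). Hence \( \pi \) is correct. The lemma then yields \( \varpi_{\vec{\Grp}} \cong \sem{\pi}_{\vec{\Grp}} \), and the isomorphism is transported back through \( \Functor_{\vec{\Grp}} \).
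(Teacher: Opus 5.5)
Your final argument is correct. After the swap, it instantiates \cref{lem:total:equivalence-principle} exactly as the paper does, with \( \Profunctor = \sem{\pi}_{\vec{\Grp}} \) and \( \Profunctorb = \varpi_{\vec{\TGrp}} \). Conditions 1 and 3 are checked the same way, via \cref{lem:definability:loader-linking} and via \cref{cor:definability:identity-stabilizer} applied against the all-identity element.

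The genuine difference is in the weak-totality condition.
\begin{itemize}
\item \emph{The paper's route.} It fixes the discrete totality on each \( \Grp_i \): total elements are the representables \( \Grp_i(a,{-}) \), and the terminal presheaf is the only cototal element. Then \( \CototalElement(\Formula(\vec{\TGrp})) \) contains the explicit profunctors \( C_{\vec{a}} \), and the paper checks \( \sem{\pi}_{\vec{\Grp}} \TotalOrthogonal C_{\vec{a}} \) by direct computation. That computation holds for every proof structure, correct or not. So correctness of \( \pi \) is deduced only afterwards, in the proof of \cref{thm:definability}, from \( \sem{\pi} \cong \varpi \), using \cref{thm:creed-kit-correctness-without-cut} as a black box.
\item \emph{Your route.} You obtain correctness first, before the isomorphism. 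The inclusion \( \Stabiliser(y) \subseteq \Stabiliser(x) \), where \( y \) is the all-identity element, transfers the creed violation exhibited in the creed-kit proof from \( \sem{\pi} \) to \( \varpi \). More generally, since every element of \( \sem{\pi} \) is conjugate to an all-identity one and creeds are conjugation-closed, it gives \( \CreedOf{\sem{\pi}} \subseteq \CreedOf{\varpi} \). You then invoke soundness (\cref{lem:definability:easy-direction}) to make \( \sem{\pi}_{\vec{\TGrp}} \) stably total, so weak totality is automatic.
\end{itemize}

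What each buys:
\begin{itemize}
\item Your approach works for any choice of totality structure. It also shows that correctness of \( \pi \) already follows from the stabiliser lower bound alone.
\item The paper's approach is more elementary and keeps this lemma independent of soundness of the stably total model. It also avoids the (glossed) passage from stable totality to creed-stability; with your ordering, both of these become load-bearing inside the lemma.
\end{itemize}

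One small repair: your first preliminary should simply name a totality structure on an arbitrary groupoid, for instance the discrete one above. The ``alternative'' you offer there is not an argument, since existence of some such structure is exactly what is needed.
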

\begin{proof}
    Let \( \TGrp_i \) be the \emph{discrete totality} on \( \Grp_i \), that is, \( \TotalElement(\TGrp_i) = \{\, \Grp_i(a, {-}) \mid a \in \Grp_i \,\} \) and \( \CototalElement(\TGrp_i) = \{\, T \,\} \) where \( T(a) = \{ \ast \} \) for every \( a \in \Grp_i \).
    Then \( \CototalElement(A(\vec{\TGrp})) \) contains
    \begin{equation*}
        C_{\vec{a}}(c_1,b_1,\dots,c_n,b_n) \quad=\quad \Grp_{\xi_1}(a_1,c_1) \times \dots \times \Grp_{\xi_n}(a_n,c_n)
    \end{equation*}
    for every sequence \( \vec{a} = (a_1,\dots,a_n) \) of objects \( a_i \in \Grp_{\xi_i} \) (for \( i = 1,\dots,n \)).
    We use \cref{lem:total:equivalence-principle} for a stably total profunctor \( \varpi_{\vec{\TGrp}} \colon \TUnit \profarrow A(\vec{\TGrp}) \) with a (not necessarily stably total) profunctor \( \sem{\pi}_{\vec{\Grp}} \colon I \profarrow A(\vec{\Grp}) \).

    The first condition follows from \( \Truncation{\varpi_{\vec{\Grp}}} = \Truncation{\sem{\pi}_{\vec{\Grp}}} \) and the totality assumption on \( \varpi \).
    The second is a consequence of the fact that \( \sem{\pi}_{\vec{\Grp}}(c_1,b_1,\dots,c_n,b_n) = \Grp_{\xi_1}(c_1,b_1) \times \dots \times \Grp_{\xi_n}(c_n,b_n) \) is orthogonal to \( C_{\vec{a}} \) for every \( \vec{a} \).

    We prove the third condition.
    Assume \( \sem{\pi}_{\vec{\Grp}}(c_1,b_1,\dots,c_n,b_n) \neq \emptyset \).
    Then \( b_j \cong c_j \) for every \( j = 1, \dots, n \).
    We choose \( (b_1,b_1,\dots,b_n,b_n) \) as the representative of the isomorphism class.
    Then
    \begin{equation*}
        x \defeq (\ident_{b_1}, \dots, \ident_{b_n}) \in \sem{\pi}_{\vec{\Grp}}(b_1,b_1,\dots,b_n,b_n),
    \end{equation*}
    and
    \begin{equation*}
        \Stabiliser(x) \:=\: \{\, (\alpha_1^{-1}, \alpha_1, \dots, \alpha_n^{-1}, \alpha_n) \mid \vec{\alpha} \,\}. 
    \end{equation*}
    By \cref{cor:definability:identity-stabilizer}, there exists \( y \in \varpi_{\vec{\Grp}}(b_1,b_1,\dots,b_n,b_n) \) such that \( \Stabiliser(x) \subseteq \Stabiliser(y) \).

    Hence, \cref{lem:total:equivalence-principle} shows the claim.
\end{proof}

\begin{proof}[Proof of \cref{thm:definability}]
    We have seen that the family definable by an \( \mathsf{MLL}+\mathsf{Mix} \) proof is logical and stably total (\cref{lem:definability:easy-direction}).
    We prove the converse.
    Assume a logical and stably total family \( \varpi \).
Let \( \pi \) be the linking in \cref{lem:definability:loader-linking}.
    By \cref{lem:definability:interpretation-of-linking}, the family \( \varpi \) is definable by the proof structure \( \pi \).
    Since every stably total profunctor is stable, \cref{thm:creed-kit-correctness-without-cut} ensures that \( \pi \) is correct.
\end{proof}

\bibliographystyle{ACM-Reference-Format}
\bibliography{jabref}

\appendix
\clearpage

\section{Proof of \cref{prop:creed:equivalence}}
We first recall the claim.
\begin{claim*}
Let \( \Grp \) be a groupoid.
    \begin{enumerate}
        \item A family \( C = (C_a \subseteq \Grp(a,a))_a \) is a creed if and only if
        \begin{enumerate}
\item \( \alpha \in C_a \) implies \( \alpha^k \in C_a \) for every \( k \in \Int \), and
            \item \( \alpha \in C_a \) and \( \beta \in \Grp(a,b) \) implies \( \beta^{-1} ; \alpha ; \beta \in C_b \).
        \end{enumerate}
        \item A family \( K = (K_a \subseteq \{ \Group \mid \Group \le \Grp(a,a) \})_{a} \) is a kit if and only if \( K \) is closed under the conjugation in the sense that, if \( \Group \in K_a \) and \( \beta \in \Grp(a,b) \), then \( \beta^{-1} \cdot \Group \cdot \beta \in K_b \), where \( \beta^{-1} \cdot \Group \cdot \beta = \{ \beta^{-1} ; \alpha ; \beta \mid \alpha \in \Group \} \).
    \end{enumerate}    
\end{claim*}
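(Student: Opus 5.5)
We prove the two items separately, in each case checking necessity by a direct computation on stabilisers and sufficiency by building an explicit witness profunctor as a coproduct of induced profunctors \( \InducedProf{\Group} \).

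\emph{Necessity.} Let \( \Profunctor \colon I \profarrow \Grp \).
\begin{itemize}
  \item Each \( \Stabiliser(x) \) is a subgroup. Hence if \( \alpha \in (\CreedOf\Profunctor)_a \), say \( \alpha \in \Stabiliser(x) \), then every power \( \alpha^k \) also lies in \( \Stabiliser(x) \). This gives (a).
  \item For \( x \in \Profunctor(\star,a) \) and \( \beta \in \Grp(a,b) \), the element \( x \cdot \beta \in \Profunctor(\star,b) \) satisfies
  \begin{equation*}
    \Stabiliser(x\cdot\beta) = \beta^{-1}\cdot\Stabiliser(x)\cdot\beta .
  \end{equation*}
  Indeed, \( x\cdot\beta\cdot\gamma = x\cdot\beta \) iff \( x\cdot(\beta\gamma\beta^{-1}) = x \). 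This gives (b) for creeds, and at the same time the conjugation closure of kits.
\end{itemize}

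\emph{Sufficiency for kits.} Given a conjugation-closed \( K \), set
\begin{equation*}
  \Profunctor \defeq \coprod_{a}\coprod_{\Group\in K_a}\InducedProf{\Group},
\end{equation*}
where each \( \InducedProf{\Group} \colon I \profarrow \Grp \) is induced by \( \Group \le \Grp(a,a) \), i.e.\ \( \InducedProf{\Group}(b) = \{ \Group\cdot\beta \mid \beta\in\Grp(a,b) \} \). The kit of a coproduct is the union of the kits of its summands, since stabilisers are computed componentwise. By \cref{prop:creed:induced-prof}, \( (\KitOf\InducedProf{\Group})_b = \{\beta^{-1}\cdot\Group\cdot\beta\mid\beta\in\Grp(a,b)\} \). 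This contains \( \Group \) itself at \( b=a \) (take \( \beta=\ident \)), and it is contained in \( K_b \) by conjugation closure. Hence \( \KitOf\Profunctor = K \).

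The only point needing care is the stabiliser of the coset \( \Group \) in \( \InducedProf{\Group}(a) \): we need it to be exactly \( \Group \). This holds because \( \Group\cdot\gamma = \Group \) iff \( \gamma\in\Group \). This is essentially the content of \cref{prop:creed:induced-prof}.

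\emph{Sufficiency for creeds.} This is the one step needing an idea: we must recover a creed from cyclic subgroups. Given \( C \) satisfying (a) and (b), let \( K_a \defeq \{\langle\alpha\rangle \mid \alpha\in C_a\} \), the set of cyclic subgroups generated by elements of \( C_a \). Then:
\begin{itemize}
  \item By (b), \( K \) is closed under conjugation, since \( \beta^{-1}\langle\alpha\rangle\beta = \langle\beta^{-1}\alpha\beta\rangle \). So \( K \) is a kit, realised by the \( \Profunctor \) constructed above.
  \item \( (\CreedOf\Profunctor)_a = \bigcup K_a = \bigcup_{\alpha\in C_a}\langle\alpha\rangle \). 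This contains \( C_a \) trivially, and is contained in \( C_a \) by (a).
\end{itemize}
Hence \( C = \CreedOf\Profunctor \).

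We expect no genuine obstacle. The main subtleties are choosing cyclic subgroups, so that condition (a) alone suffices without any closure of \( C_a \) under products, and the size issue of taking a coproduct indexed by all objects and subgroups. The latter is harmless since \( \Grp \) is small.
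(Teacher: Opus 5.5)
Your proof is correct and follows essentially the paper's argument: the same stabiliser computations (powers, and $\Stabiliser(x\cdot\beta)=\beta^{-1}\cdot\Stabiliser(x)\cdot\beta$) for necessity, and the same witnesses for sufficiency, namely coproducts of induced profunctors $\InducedProf{\Group}$, using cyclic subgroups $\langle\alpha\rangle$ for creeds. The only difference is organisational: you derive the creed case from the kit case via conjugation-closure of $\{\langle\alpha\rangle \mid \alpha\in C_a\}$, whereas the paper directly checks $(\CreedOf\InducedProf{\Group_{a,\alpha}})_b \subseteq C_b$ using (a) and (b).
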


\quad

(1, \(\Rightarrow\))

Assume that \( C \) be a creed.
So \( C = \CreedOf{\Profunctor} \) for some \( \Profunctor \colon I \profarrow \Grp \).

If \( \alpha \in C_a \), then \( x \cdot \alpha = x \) for some \( x \in \Profunctor(\star, a) \).
Then \( x \cdot \alpha^k = x \), so \( \alpha^k \in C_a \).

Let \( \alpha \in C_a \) and \( \beta \in \Grp(a,b) \).
By \( \alpha \in C_a \), there exists \( x \in \Profunctor(\star,a) \) such that \( x \cdot \alpha = x \).
Let \( y = x \cdot \beta \).
Then
\begin{align*} 
  y \cdot (\beta^{-1};\alpha;\beta) 
  &= x \cdot (\beta; \beta^{-1}; \alpha; \beta)
  \\
  &= (x \cdot \alpha) \cdot \beta
  \\
  &= x \cdot \beta
  \\
  &= y.
\end{align*}
So \( (\beta^{-1};\alpha;\beta) \in C_b \).

\quad

(1, \(\Leftarrow\))

Assume that \( C \) satisfies the conditions (a) and (b).
For each \( a \in \Grp \) and \( \alpha \in C_a \), let \( \Group_{a,\alpha} = \{\, \alpha^k \mid k \in \Int \,\} \) be the subgroup of \( \Grp(a,a) \) generated by \( \alpha \).
We show that \( \InducedProf{\Group_{a,\alpha}} \colon I \profarrow \Grp \) satisfies \( \alpha \in (\CreedOf{\InducedProf{\Group_{a,\alpha}}})_a \) and \( (\CreedOf{\InducedProf{\Group_{a,\alpha}}})_b \subseteq C_b \) for every \( b \in \Grp \).
Then \( \CreedOf{(\coprod_{a \in \Grp, \alpha \in C_a} \InducedProf{\Group_{a,\alpha}})} = C \).

Trivially, \( \alpha \in (\CreedOf{\InducedProf{\Group_{a,\alpha}}})_a \) holds.

We show that \( (\CreedOf{\InducedProf{\Group_{a,\alpha}}})_b \subseteq C_b \) for every \( b \in \Grp \).
Assume that \( \beta \in (\CreedOf{\InducedProf{\Group_{a,\alpha}}})_b \).
By definition, there exists \( x \in \InducedProf{\Group_{a,\alpha}}(\star,b) \) such that \( x \cdot \beta = x \).
By the definition of the induced profunctor, \( x = \Group_{a,\alpha} \cdot \gamma \) for some \( \gamma \).
So \( \Group_{a, \alpha} \cdot \gamma \cdot \beta = \Group_{a, \alpha} \cdot \gamma \).
Hence, \( \Group_{a, \alpha} \cdot \gamma \cdot \beta \cdot \gamma^{-1} = \Group_{a, \alpha} \).
In particular, \( (\gamma;\beta;\gamma^{-1}) \in \Group_{a, \alpha} \), which implies \( (\gamma;\beta;\gamma^{-1}) = \alpha^k \) for some \( k \in \Int \).
Therefore, \( \beta = \gamma^{-1};\alpha^k;\gamma \).
Since \( \alpha \in C_a \), we have \( \alpha^k \in C_a \) by (a) and \( \gamma^{-1};\alpha^k;\gamma \in C_b \) by (b).

\quad

(2, \(\Rightarrow\))

Assume that \( K \) is a kit.
By definition, \( K = \KitOf{\Profunctor} \) for some \( \Profunctor \colon I \profarrow \Grp \).
Assume that \( \Group \in K_a \), i.e., \( \Group = \Stabiliser(x) \) for some \( x \in \Profunctor(\star, a) \).
Let \( \beta \in \Grp(a,b) \).
Then \( \beta^{-1} \cdot \Group \cdot \beta = \Stabiliser(x \cdot \beta) \).

\quad

(2, \(\Leftarrow\))

Assume that \( K \) satisfies the condition.
For each \( a \in \Grp \) and \( \Group \in K_a \), we show that \( \Group \in (\KitOf{\InducedProf{\Group}})_a \) and \( (\KitOf{\InducedProf{\Group}})_b \subseteq K_b \) for every \( b \in \Grp \).
Then \( \KitOf{(\coprod_{a, \Group \in K_a} \InducedProf{\Group})} = K \).

Trivially, \( \Group \in (\KitOf{\InducedProf{\Group}})_a \) holds.

Assume that \( \Groupb \in (\KitOf{\InducedProf{\Group}})_b \).
Then \( \Groupb = \Stabiliser(x) \) for some \( x \in \InducedProf{\Group}(\star, b) \).
By definition, \( x = \Group \cdot \beta \) for some \( \beta \in \Grp(a,b) \), so
\begin{align*}
    \Groupb
    &= \{\, \gamma \in \Grp(b,b) \mid \Group \cdot \beta = \Group \cdot \beta \cdot \gamma \,\}
    \\
    &= \{\, \gamma \in \Grp(b,b) \mid \Group = \Group \cdot \beta \cdot \gamma \cdot \beta^{-1} \,\}
    \\
    &= \{\, \gamma \in \Grp(b,b) \mid (\beta;\gamma;\beta^{-1}) \in \Group \,\}
    \\
    &= \beta^{-1} \cdot \Group \cdot \beta.
\end{align*}
Hence, by the condition on \( K \), we have \( \Groupb \in K_b \).

 \section{Proof of \cref{lem:definability:loader-linking}}\label{sec:appx:loader}
Since \( \Tot \hookrightarrow \TotProf \), the family \( \varpi \) contains a logical family parameterized by totality spaces, so the claim is essentially proved by Loader~\cite[Lemma~8]{Loader1994}.
A subtlety here is that the logicality of this paper is slightly weaker than the uniformity by Loader (see \cite[Definition~2]{Loader1994} for the definition).
However, careful examination reveals that the right-to-left direction is not needed to prove \cite[Lemma~8]{Loader1994}.
Here, we give a self-contained proof.

In order to distinguish occurrences of the same propositional variable in \( \Formula \), we introduce another formula \( \dot{\Formula} \).
The formula \( \dot{\Formula} \) is essentially the same as \( \Formula \) except that all propositional variables are renamed and have exactly one occurrence.
For example, for \( \Formula = (\PropVar_2 \otimes \PropVar_1^{\bot}) \llpar \PropVar_2^{\bot} \), we can choose \( \dot{\Formula}(\PropVarb_1,\PropVarb_2,\PropVarb_3) = (\PropVarb_1 \otimes \PropVarb_2^{\bot}) \llpar \PropVarb_3^{\bot} \).
Then \( \Formula = \dot{\Formula}(\PropVar_2, \PropVar_1, \PropVar_2) \).

Let \( \PropVar_1,\dots,\PropVar_k \) be the list of propositional variables in \( \Formula \) and \( \PropVarb_1,\dots,\PropVarb_n \) be the list of propositional variables in \( \dot{\Formula} \).
Without loss of generality, we may assume that the variables \( \PropVarb_1,\dots,\PropVarb_n \) occur in the formula \( \dot{\Formula} = \dot{\Formula}(\PropVarb_1,\dots,\PropVarb_n) \) in this order.
Let \( (\xi_i)_{i = 1,\dots,n} \) be the sequence of numbers such that \( \Formula = \dot{\Formula}(X_{\xi_1}, \dots, X_{\xi_n}) \).

An \emph{occurrence} is a number \( j \in \{ 1,\dots,n \} \).
It is negative when \( \PropVarb_j^{\bot} \) appears in \( \dot{\Formula} \) and positive if it is not negative. 

\begin{lemma}
    Let \( \vec{B} \) be a sequence of totality spaces and \( (x_1,\dots,x_n) \in \Truncation{\varpi_{\vec{B}}} \) be an element.
    For every positive occurrence \( a \),
there exists a negative occurrence \( b \)
such that \( \xi_a = \xi_b \) (i.e., \( a \) and \( b \) are occurrences of the same propositional variable in \( \Formula \)) and \( x_a = x_b \).
    The same holds with the roles of positive and negative occurrences interchanged.
\end{lemma}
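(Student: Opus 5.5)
This is Loader's key lemma, and I would prove it by contradiction using logicality to move from \( \vec{B} \) to a carefully chosen tuple of totality spaces. Suppose \( (x_1,\dots,x_n) \in \Truncation{\varpi_{\vec{B}}} \) and \( a \) is a positive occurrence such that no negative occurrence \( b \) with \( \xi_b = \xi_a \) satisfies \( x_b = x_a \). Write \( i = \xi_a \). I would build a new tuple \( \vec{B}' \) that agrees with \( \vec{B} \) except at the \( i \)-th component, together with two maps \( \Phi, \Phi' \colon \vec{B} \longrightarrow \vec{B}' \). They agree everywhere except that they send \( x_a \) to two distinct points \( p \ne p' \). Logicality (which for setoids says \( \varpi_{\vec{B}}(\vec{x}) \neq \emptyset \) implies \( \varpi_{\vec{B}'}(\Formula(\Phi)(\vec{x})) \neq \emptyset \)) then gives two elements of \( \Truncation{\varpi_{\vec{B}'}} \). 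These agree at every occurrence except at the positive occurrences of \( \PropVar_i \) labelled \( x_a \), where one has \( p \) and the other \( p' \). The negative occurrences of \( \PropVar_i \) carry no \( x_a \), so they are unaffected.

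The concrete choice is \( B'_i \defeq B_i \) with the point \( x_a \) split into two copies \( p, p' \). Its total sets are the total sets of \( B_i \) with \( x_a \) replaced by either \( p \) or \( p' \); equivalently, its cototal sets are those of \( B_i \) in which \( x_a \) is replaced by both \( \{p,p'\} \). One checks that this is again a totality space and that the two inclusions \( B_i \to B'_i \) (sending \( x_a \mapsto p \), respectively \( x_a \mapsto p' \)) are legitimate maps in the setoid subclass, i.e.\ plain functions. Since logicality only requires functors, no totality preservation is needed here. The two tuples \( \vec{y}, \vec{y}' \in \Truncation{\varpi_{\vec{B}'}} \) then differ only at positive \( \PropVar_i \)-positions.

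To reach a contradiction, I would find a cototal element \( c \) of \( \Formula(\vec{B}') \) containing both \( \vec{y} \) and \( \vec{y}' \), contradicting \( \#(\Truncation{\varpi_{\vec{B}'}} \cap c) = 1 \). This uses the totality of \( \varpi \), via the functor \( \Truncation{-} \colon \TotProf \to \Tot \). The construction of \( c \) goes by induction on \( \Formula \), using the description of (co)total sets of \( \otimes \) and \( \llpar \) in \( \Tot \): a cototal set of \( \Formulab \otimes \Formulac \) may be formed from a cototal set of \( \Formulab \) and, for each point there, a cototal set of \( \Formulac \), and dually for \( \llpar \). At positive \( \PropVar_i \)-leaves one uses a cototal set of \( B'_i \) containing both \( p \) and \( p' \), which exists by the splitting construction. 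At all other leaves one uses any cototal set containing the relevant point, which the object condition of \( \TotProf \) guarantees.

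The main obstacle is exactly this induction. At \( \otimes \)-nodes the two tuples must be kept simultaneously inside one cototal set, which requires the choice for the second component to depend on the first component's value. I would handle this by choosing, at each node, a cototal set of the subformula containing the restrictions of both \( \vec{y} \) and \( \vec{y}' \), strengthening the inductive hypothesis accordingly. If splitting a single point proves too weak, I would fall back on Loader's original device of doubling the whole space \( B_i \) on positive versus negative sides. The dual statement, with the roles of positive and negative occurrences interchanged, follows by the same argument applied to negative occurrences, or by replacing \( B_i \) with its dual.
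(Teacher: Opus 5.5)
Your proposal is correct and is essentially the paper's proof. The paper also splits \( x_a \) into two copies (it keeps \( x_a \) and adds a fresh point \( \ast \), so that total sets contain one copy and cototal sets contain both). It then applies logicality to the two resulting functions to obtain two distinct points of \( \Truncation{\varpi_{\vec{C}}} \) lying in one cototal set, which contradicts totality.

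The one simplification the paper makes is that your induction over \( \Formula \), with dependent choices at \( \otimes \)-nodes, is unnecessary. The paper takes the plain product \( t_1 \times \dots \times t_n \) of leaf-level sets: cototal sets of \( C_{\xi_i} \) at positive occurrences and total sets at negative ones. This product is already cototal in \( \Formula(\vec{C}) \), because products of cototal sets are cototal for both \( \otimes \) and \( \llpar \).

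For the interchanged statement, note that the split must be dualised: total sets contain both copies and cototal sets contain one. Your ``replace \( B_i \) with its dual'' handles this correctly.
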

\begin{proof}
    We prove by contradiction.
    Assume that there is no negative occurrence that satisfies the requirement.
    We can assume without loss of generality that
\( i = 1 \).

    We define totality spaces \( C_1,C_2,\dots,C_k \) and functions \( f_i, f'_i \colon \Truncation{B_i} \to \Truncation{C_i} \) as follows (where \( C_i = (\Truncation{C_i}, \TotalElement(C_i), \CototalElement(C_i)) \)).
    \begin{itemize}
        \item \( \Truncation{C_1} \defeq \Truncation{B_1} \uplus \{ \ast \} \).  Here \( \ast \) is an element not in \( \Truncation{B_1} \).
        \item \( C_i \defeq B_i \) for every \( 2 \le i \le k \).
        \item \( f_i(x) \defeq x \) for every \( 1 \le i \le k \) and \( x \in \Truncation{B_i} \).
        \item \( f'_i(x) \defeq x \) if \( 1 < i \le k \) or \( x \neq x_a \).  \( f'_1(x_a) \defeq \ast \).
        \item \( \TotalElement(C_1) \defeq \{ f_i(s) \mid s \in \TotalElement(B_i) \} \cup \{ f'_i(s) \mid s \in \TotalElement(B_i) \} \).
        \item \( \CototalElement(C_1) \defeq \{ f_i(s) \cup f'_i(s) \mid s \in \CototalElement(B_i) \} \).
    \end{itemize}
    Since \( \varpi \) is a logical family, \( f(x_1,\dots,x_n) \defeq (f_{\xi_1}(x_1), \dots, f_{\xi_n}(x_n)) \) and \( f'(x_1,\dots,x_n) \defeq (f'_{\xi_1}(x_1), \dots, f'_{\xi_n}(x_n)) \) belong to \( \Truncation{\varpi_{\vec{B}}} \).

    For a negative occurrence \( i \), let \( t_i \) be a total subset \( t_i \in \TotalElement(C_{\xi_i}) \) such that \( x_i \in t_i \).
    Since \( x_i \neq x_a \) for every negative occurrence \( i \) with \( \xi_i = 1 \), we have \( f'_{\xi_i}(x_i) = x_i \in t_i \) for every negative occurrence \( i \).

    For a positive occurrence \( i \), we define a cototal subset \( t_i \in \CototalElement(C_{\xi_i}) \) let \( t_i \) as follows.
    \begin{itemize}
        \item If \( i \) is an occurrence of \( B_j \) (i.e., \( \xi_i = j \)) with \( j \neq 1 \), let \( t_i \in \CototalElement(C_{\xi_i}) \) be any cototal subset such that \( x_i \in t_i \).
        \item If \( i \) is an occurrence of \( B_1 \) (i.e., \( \xi_i = 1 \)), let \( t_i = f_0(s) \cup f'_0(s) \) for some cototal subset \( s \in \CototalElement(B_1) \) with \( x_i \in s_i \).
    \end{itemize}
    Note that \( f_i(x_i), f'_i(x_i) \in t_i \) for every positive occurrence \( i \).

    Now \( t \defeq t_1 \times \dots \times t_n \in \CototalElement(A(\vec{B})) \).
    We have \( f(\vec{x}), f'(\vec{x}) \in \Truncation{\varpi_{\vec{C}}} \cap t \).
    Since \( f(\vec{x}) \neq f'(\vec{x}) \), we have \( \varpi_{\vec{C}} \notin \TotalElement(A(\vec{B})) \).
\end{proof}

\begin{lemma}\label{lem:appx:totality:inclusion}
    There exists pair of surjections \( \phi_+ \) (from negative occurrences to positive occurrences) and \( \phi_- \) (from positive occurrences to negative occurrences) such that \( i \) and \( \phi_+(i) \) (resp.~\( i \) and \( \phi_-(i) \)) are occurrences of the same formula and both
    \begin{equation*}
        \{ (x_1,\dots,x_n) \mid x_{\phi_+(i)} = x_i \mbox{ for every positive \( i \)} \}
    \end{equation*}
    and
    \begin{equation*}
        \{ (x_1,\dots,x_n) \mid x_{\phi_-(i)} = x_i \mbox{ for every negative \( i \)} \}
    \end{equation*}
    are contained in \( \varpi_{\vec{B}} \) for every \( \vec{B} \).
\end{lemma}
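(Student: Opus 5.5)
We follow the strategy of Loader's Lemma~8. We take the ``generic'' instance of the family: the parameters are chosen so that different occurrences cannot be identified by accident, and we read the linking off that instance. Concretely, we choose totality spaces \( \vec{B} \) on which every occurrence can be given a distinct point. For each propositional variable \( \PropVar_m \), let \( B_m \) be the discrete totality space on a large finite set \( S_m \), with \( \TotalElement(B_m) = \{ \{s\} \mid s \in S_m \} \) and \( \CototalElement(B_m) = \{ S_m \} \). Here \( |S_m| \) is at least the number of occurrences of \( \PropVar_m \) in \( \Formula \). The space \( \Formula(\vec{B}) \) is total, so \( \Truncation{\varpi_{\vec{B}}} \) is nonempty. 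We pick an element \( \vec{x} = (x_1,\dots,x_n) \in \Truncation{\varpi_{\vec{B}}} \). By the preceding lemma, each positive occurrence \( i \) has a negative occurrence \( j \) of the same variable with \( x_j = x_i \), and symmetrically. This defines candidate maps \( \phi_- \) (positive to negative) and \( \phi_+ \) (negative to positive).

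The main obstacle is making the coincidence pattern of \( \vec{x} \) rigid enough. First, we need the pattern to be the same for every element; second, we need to transport it to arbitrary \( \vec{B} \). For rigidity we use the totality argument again: the element \( \vec{x} \) together with a suitable cototal element determines a unique point of \( \Truncation{\varpi_{\vec{B}}} \), by \cref{lem:total:loader-maximality} and the counting condition \( \#(u \cap x) = 1 \). We choose \( \vec{x} \) so that the number of distinct values is maximal, for instance by first refining along injections \( S_m \hookrightarrow S'_m \). Any nontrivial further identification could then be separated by the doubling construction from the preceding lemma, which would give two distinct points in one cototal set. We then define \( \phi_\pm \) from the equalities \( x_{\phi_+(i)} = x_i \). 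Surjectivity holds because the preceding lemma applies in both directions: every occurrence is hit by some occurrence of the opposite polarity.

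For arbitrary \( \vec{B} \) and an arbitrary tuple \( \vec{y} \) satisfying \( y_{\phi_+(i)} = y_i \) for all positive \( i \), we define functions \( g_m \colon S_m \to \Truncation{B_m} \) sending each point of \( \vec{x} \) to the corresponding \( y \)-value. This is well defined because \( \vec{y} \) respects the equalities that \( \phi_+ \) records among the points of \( \vec{x} \). Logicality gives \( g(\vec{x}) = \vec{y} \in \Truncation{\varpi_{\vec{B}}} \), as in the weak direction of logicality used in \cref{lem:definability:loader-linking}. The same argument with \( \phi_- \) proves the second containment. The step we expect to need the most care is well-definedness of \( g_m \): if \( \vec{x} \) identifies more occurrences than \( \phi_+ \) records, the construction fails. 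That is why the maximal-distinctness choice of \( \vec{x} \) is made first.
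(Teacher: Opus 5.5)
\paragraph{A gap in the construction of the generic element.} Your outer structure agrees with the paper's. You read the linking off a generic instance, then transport to an arbitrary \(\vec{B}\) by pushing forward along functions \(g_m\), using the weak direction of logicality. The gap is the generic instance itself. Well-definedness of \(g_m\) needs an element \(\vec{x} \in \Truncation{\varpi_{\vec{B}}}\) whose coordinates at the negative occurrences of each variable are pairwise distinct. For the other containment it needs one whose positive coordinates are pairwise distinct. Nothing in your proposal produces such an element.

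\begin{itemize}
\item Pushing forward along functions, injective or not, only preserves or merges coincidences. So ``refining along injections'' cannot increase the number of distinct values.
\item The doubling construction of the preceding lemma gives a contradiction only when some positive value is matched by no negative value. A coincidence \(x_j = x_{j'}\) of two negative coordinates is not contradictory; such elements exist, e.g.\ images under non-injective maps. Doubling the shared value also moves both coordinates together, so it cannot separate them.
\item \cref{lem:total:loader-maximality} plays no role here.
\end{itemize}

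Hence ``take \(\vec{x}\) with maximally many distinct values'' has nothing behind it: you never show that the maximum equals the number of negative occurrences. You gesture at ``a suitable cototal element'' but never say which one, or use it to force distinctness.

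\paragraph{The missing idea.} The paper's proof consists of exactly this step: obtain the generic element directly from totality against a well-chosen cototal element. In your discrete \(B_m\) the singletons are total, hence cototal for \(B_m^{\bot}\). Let \(t = t_1 \times \dots \times t_n\), where \(t_i = \{s_i\}\) at negative \(i\) (with the \(s_i \in S_m\) pairwise distinct) and \(t_i = S_m\) at positive \(i\). Then \(t \in \CototalElement(\Formula(\vec{B}))\), so \(\Truncation{\varpi_{\vec{B}}} \cap t\) is a singleton. Its element has the prescribed, distinct negative coordinates. The paper takes \(S_m\) to be the set of negative occurrences of \(\PropVar_m\) and \(s_i = i\), so \(\phi_-(i)\) is simply the \(i\)-th coordinate of that unique element.

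For the other map you must use the dual structure (total \(=\) whole set, cototal \(=\) singletons). In your discrete \(B_m\) a positive coordinate can only be constrained by \(S_m\). Switching costs nothing, because \(\varpi_{\vec{B}}\) depends only on the underlying sets, not on the totality structure.

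With these generic elements, the rest of your argument goes through as written: the definition of \(g_m\), its well-definedness, the logicality step, and your surjectivity argument via the preceding lemma. The maximal-distinctness detour then becomes unnecessary.
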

\begin{proof}
    We construct \( \phi_- \).  The construction of \( \phi_+ \) is similar.
    Assume that \( A \) has \( k \) propositional variables.

    For each propositional variable \( \alpha \), let \( N_\alpha \) be the discrete totality space over the negative occurrences of \( \alpha \).

    For a positive \( i \), let \( t_i \defeq N_\alpha \in \CototalElement(N_\alpha) \).
    For a negative \( i \), let \( t_i \defeq \{ i \} \in \TotalElement(N_{\alpha}) \).
    Then \( t_1 \times \dots \times t_n \in \CototalElement(A(N_{\alpha_1}, \dots, N_{\alpha_k})) \).
    So there exists a unique element in the intersection between \( t_1 \times \dots \times t_n \) and \( \varpi_{\vec{N}} \).
    Assume that \( (a_1,\dots,a_n) \) is the unique element.
    Then \( \phi_- \) is defined by \( \phi_-(i) \defeq a_i \).

    Take any totality spaces \( \vec{B} = (B_1,\dots,B_k) \) and \( x_i \) be an element of \( B_{\xi_i} \).
    Assume that \( x_i = x_{\phi_i(i)} \) for every positive \( i \).
    Let \( f_j \colon \Truncation{N_j} \to |B_j| \) be a function defined by \( f_j(i) = x_i \) (for every \( i \in \Truncation{N_j} \)).
    Then, for positive \( i \),
    \begin{equation*}
        f_j(a_i) = f(\phi_-(i)) = x_{\psi_-(i)} = x_i
    \end{equation*}
    and, for negative \( i \) (since \( a_i = i \) for a negative \( i \)),
    \begin{equation*}
        f_j(a_i) = f(i) = x_i.
    \end{equation*}
    This means that \( f(\vec{a}) = (x_1,\dots,x_n) \).
    By the logicality of \( \varpi \), we have \( (x_1,\dots,x_n) \in \varpi_{\vec{B}} \).
\end{proof}

\begin{lemma}\label{lem:appx:totality:equality-condition}
    Let \( \vec{A} \) be totality spaces.
    \begin{enumerate}
        \item Suppose \( (x_1,\dots,x_m), (y_1,\dots,y_m) \in \Truncation{\varpi_{\vec{A}}} \).  Then \( x_i = y_i \) for every positive \( i \) if and only if \( x_i = y_i \) for every negative \( i \).
        \item If \( (x_1,\dots,x_m) \in \Truncation{\varpi_{\vec{A}}} \), then \( x_i = x_{\phi_{+}(i)} \) for every negative \( i \) and \( x_i = x_{\phi_{-}(i)} \) for every positive \( i \).
    \end{enumerate}
\end{lemma}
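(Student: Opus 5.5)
The plan is to move the whole question, via logicality, into totality spaces whose total and cototal sets are as coarse or as fine as possible. In such spaces the ``exactly one point'' condition of totality becomes a uniqueness statement about tuples.

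For a set \( S \), the \emph{codiscrete} totality space on \( S \) has \( \TotalElement = \{ S \} \) and \( \CototalElement = \{ \{s\} \mid s \in S \} \). The \emph{discrete} one is its dual, with \( \TotalElement \) the singletons and \( \CototalElement = \{ S \} \). Both are totality spaces when \( S \neq \emptyset \), and they are the duals of each other.

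Given \( \vec{A} \), let \( \vec{A}^{c} \) be the codiscrete spaces on the carriers \( \Truncation{A_j} \). The identity functions \( \Truncation{A_j} \to \Truncation{A^c_j} \) are functors between setoids. So logicality of \( \varpi \), which only needs \( \varpi_{\vec{\Grp}}(a) \neq \emptyset \Rightarrow \varpi_{\vec{\Grpb}}(\Formula(\Phi)(a)) \neq \emptyset \), gives \( \Truncation{\varpi_{\vec{A}}} \subseteq \Truncation{\varpi_{\vec{A}^c}} \). Totality is needed only for \( \vec{A}^c \).

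As in the proofs above, any product \( t_1 \times \dots \times t_n \) with \( t_i \in \CototalElement \) at positive occurrences and \( t_i \in \TotalElement \) at negative occurrences is cototal in \( \Formula(\vec{A}^c) \). This is the same fact used in the proof of \cref{lem:appx:totality:inclusion}. In \( \vec{A}^c \) we may therefore take \( t_i = \{x_i\} \) for positive \( i \) and \( t_i = \Truncation{A_{\xi_i}} \) for negative \( i \). Since \( \Truncation{\varpi_{\vec{A}^c}} \) is total, at most one element of it agrees with a prescribed tuple on all positive occurrences.

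For (1), take \( \vec{x}, \vec{y} \in \Truncation{\varpi_{\vec{A}}} \) that agree on every positive occurrence. Both lie in \( \Truncation{\varpi_{\vec{A}^c}} \cap t \), so \( \vec{x} = \vec{y} \), and in particular they agree on the negative occurrences. The converse direction is the same argument with the discrete spaces \( \vec{A}^d \). There singletons sit at the negative positions and whole sets at the positive ones.

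For (2), fix \( \vec{x} \in \Truncation{\varpi_{\vec{A}}} \). By \cref{lem:appx:totality:inclusion}, the tuple \( \vec{z} \) with \( z_i = x_i \) at positive \( i \) and \( z_i = x_{\phi_+(i)} \) at negative \( i \) belongs to \( \Truncation{\varpi_{\vec{A}}} \). Here I read the first set in that lemma as the tuples whose negative entries are copied from positive entries along \( \phi_+ \), and I will fix the direction of the two maps to match this. The tuples \( \vec{x} \) and \( \vec{z} \) agree on all positive occurrences, so part (1) gives \( \vec{x} = \vec{z} \), that is, \( x_i = x_{\phi_+(i)} \) for every negative \( i \). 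The statement about \( \phi_- \) is symmetric: use the second set of \cref{lem:appx:totality:inclusion} together with the other half of (1).

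The main obstacle is bookkeeping rather than an idea. I have to check that a product of this mixed form really is cototal in \( \Formula(\vec{B}) \) for an arbitrary formula. This is an induction on \( \Formula \) using that \( \otimes \) and \( \llpar \) preserve such rectangles in totality spaces. I also have to make sure that the partially defined indexing of \( \phi_\pm \) and of the two sets in \cref{lem:appx:totality:inclusion} is read consistently. If a carrier is empty, the occurrence cannot appear in \( \Truncation{\varpi_{\vec{A}}} \), so the statement is vacuous.
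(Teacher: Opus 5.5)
Your proof is correct and follows the paper's argument. For part (1) you transport to (co)discrete totality spaces, where a rectangle of (co)total sets is cototal in \( \Formula(\vec{B}) \), so \( \Truncation{\varpi} \) meets it in exactly one point; for part (2) you compare \( \vec{x} \) with the tuple supplied by \cref{lem:appx:totality:inclusion} and invoke (1). The paper writes out only the negative-agreement half of (1), using discrete spaces, and only the \( \phi_- \) half of (2); your codiscrete-space argument and your reading of the swapped quantifiers in \cref{lem:appx:totality:inclusion} make the symmetric halves explicit.
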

\begin{proof}
    $(1)$
    Assume \( (x_1,\dots,x_m), (y_1,\dots,y_m) \in \Truncation{\varpi_{\vec{A}}} \) and that \( x_i = y_i \) for every negative \( i \).
    We show that \( x_i = y_i \) for every positive \( i \).

    Let \( B_j \) be the discrete space over \( \Truncation{A_j} \) (i.e., \( \TotalElement(B_j) \) is the set of singletons).
    By the logicality, we have \( \Truncation{\varpi_{\vec{A}}} = \Truncation{\varpi_{\vec{B}}} \).
    Hence, \( \vec{x}, \vec{y} \in \Truncation{\varpi_{\vec{B}}} \).
    For positive \( i \), let \( t_i = \Truncation{B_{\xi_i}} \in \CototalElement(B_{\xi_i}) \); for negative \( i \), let \( t_i = \{ x_i \} = \{ y_i \} \in \TotalElement(B_{\xi_i}) \).
    Then \( t \defeq t_1 \times \dots \times t_m \in \CototalElement(A(\vec{B})) \), so \( (\varpi_{\vec{B}}) \cap t \) is a singleton.
    Since \( (\vec{x}), (\vec{y}) \in ((\varpi_{\vec{B}}) \cap t) \), we have \( \vec{x} = \vec{y} \).

    $(2)$
    Take \( (x_1, \dots, x_m) \in \Truncation{\varpi_{\vec{A}}} \).
    Let \( \vec{y} = (y_1,\dots,y_m) \) be the tuple given by \( y_i = x_i \) for negative \( i \) and \( y_i = x_{\phi_{-}(i)} \) for positive \( i \).
    By \cref{lem:appx:totality:inclusion}, we have \( \vec{y} \in \Truncation{\varpi_{\vec{A}}} \).
    By $(1)$, we have \( x_i = y_i \) for every \( i \).
    So \( x_i = x_{\phi_{-}(i)} \) for every positive \( i \).
\end{proof}

\begin{corollary}
    For any totality spaces \( \vec{A} \),
    \begin{align*}
        \Truncation{\varpi_{\vec{A}}}
        &= \{ (x_1,\dots,x_n) \mid x_i = x_{\phi_{+}(i)} \mbox{ for every negative \( i \)} \} \\
        &= \{ (x_1,\dots,x_n) \mid x_i = x_{\phi_{-}(i)} \mbox{ for every positive \( i \)} \}.
    \end{align*}
    Furthermore, \( \phi_+ \) is the inverse of \( \phi_{-} \).
\end{corollary}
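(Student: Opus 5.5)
The corollary has two parts: the two descriptions of \( \Truncation{\varpi_{\vec{A}}} \), and the claim that \( \phi_+ \) and \( \phi_- \) are mutually inverse. My plan is to obtain both set equalities by a double inclusion, combining \cref{lem:appx:totality:inclusion} with part (2) of \cref{lem:appx:totality:equality-condition}. I will then read off the inverse relationship from the resulting description.

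For the first description, write \( S_+ \defeq \{ \vec{x} \mid x_i = x_{\phi_+(i)} \mbox{ for every negative } i \} \) and \( S_- \) for the analogous set defined with \( \phi_- \) and positive \( i \). \cref{lem:appx:totality:equality-condition}(2) says that every \( \vec{x} \in \Truncation{\varpi_{\vec{A}}} \) satisfies both families of equations. Hence \( \Truncation{\varpi_{\vec{A}}} \subseteq S_+ \cap S_- \). Conversely, \cref{lem:appx:totality:inclusion} gives \( S_+ \subseteq \Truncation{\varpi_{\vec{A}}} \) and \( S_- \subseteq \Truncation{\varpi_{\vec{A}}} \). Here I must check that the index conventions in that lemma match: the constraint sets there are quantified over the occurrences in the domain of each map. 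So \( S_+ \subseteq \Truncation{\varpi_{\vec{A}}} \subseteq S_+ \), and likewise for \( S_- \), which yields both equalities.

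For the inverse claim, I will compare \( S_+ \) and \( S_- \) on a well-chosen totality space. Fix a negative occurrence \( i \) and set \( j \defeq \phi_+(i) \), a positive occurrence of the same variable. Suppose for contradiction that \( \phi_-(j) \neq i \). I would take each \( A_k \) to be the discrete totality space on the set of all occurrences, or on a large enough set, and define \( \vec{x} \) by giving each equivalence class a distinct value. The classes are those generated by the equations of \( S_- \), namely \( x_{j'} = x_{\phi_-(j')} \) for positive \( j' \). Since \( \phi_- \) maps positives to negatives, these classes are exactly the negative occurrence \( m \) together with its fibre \( \phi_-^{-1}(m) \). In particular \( j \) and \( i \) lie in different classes, so \( x_i \neq x_j \). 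Then \( \vec{x} \in S_- = S_+ \), while \( \vec{x} \) violates \( x_i = x_{\phi_+(i)} \), a contradiction. Therefore \( \phi_- \circ \phi_+ = \ident \) on negative occurrences, and symmetrically \( \phi_+ \circ \phi_- = \ident \) on positive occurrences.

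The main obstacle is making sure each equivalence class receives its own value. This needs the carrier of the relevant \( A_{\xi} \) to have at least as many points as there are classes among occurrences of that variable. A discrete totality space on a large enough finite set suffices, since the corollary quantifies over arbitrary \( \vec{A} \). Once that is arranged, both \( \phi_+ \) and \( \phi_- \) are bijections, and the pairing \( i \leftrightarrow \phi_+(i) \) defines the axiom links of the proof structure \( \pi \) used in \cref{lem:definability:loader-linking}.
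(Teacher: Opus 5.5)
Your proposal is correct and follows the paper's proof. The two set equalities come from \cref{lem:appx:totality:inclusion} for \( \supseteq \) and \cref{lem:appx:totality:equality-condition}(2) for \( \subseteq \). The inverse claim comes from taking a tuple in the \( \phi_- \)-described set that separates occurrences and using its equality with the \( \phi_+ \)-described set. The one difference is that the paper only derives injectivity of \( \phi_+ \), implicitly relying on the surjectivity asserted but not proved in \cref{lem:appx:totality:inclusion}, whereas your separation of \( i \) from \( \phi_+(i) \) gives \( \phi_-\circ\phi_+=\ident \) (and symmetrically \( \phi_+\circ\phi_-=\ident \)) directly, which makes the mutual-inverse conclusion explicit.
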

\begin{proof}
    \Cref{lem:appx:totality:inclusion} proved the inclusion in the \( ({\supseteq}) \)-direction.
    The other direction follows from \cref{lem:appx:totality:equality-condition}.

    For the second part, assume for the contradiction that \( \phi_{+} \) is not an injection.
    Then, there exists negative indexes \( i \neq i' \) such that \( \phi_{+}(i) = \phi_{+}(i') \).
    Let \( \vec{A} \) be a sequence of totality spaces with at least two elements.
    Take \( x_j \) for negative \( j \) so that \( x_i \neq x_{i'} \).
    We define \( x_j \) for a positive \( j \) by \( x_j \defeq x_{\phi_{-}(j)} \).
    Then \( \vec{x} = (x_1,\dots,x_n) \in \Truncation{\varpi_{\vec{A}}} \).
    So \( x_j = x_{\phi_{+}(j)} \) for every negative \( j \).
    Then \( x_{i} = x_{\phi_{+}(i)} = x_{\phi_{+}(i')} = x_{i'} \), a contradiction.
\end{proof}

This corollary shows that \( \varpi \) is the interpretation of the proof structure induced by the involution \( [\phi_+, \phi_-] \).

The above argument, given by Loader~\cite{Loader1994}, shows that the desired equality when \( \vec{\Grp} \) are sets, say \( \vec{\Grp} = \Truncation{\vec{\Grpb}} \).
To prove the equality for all groupoids, note that \( \Set \hookrightarrow \GroupoidCat \) is reflective and \( \Grp \longrightarrow \Truncation{\Grp} \) is essentially bijective on objects.
So \( \Truncation{\varpi_{\vec{\Grp}}} = \Truncation{\varpi_{\Truncation{\vec{\Grp}}}} = \Truncation{\sem{\pi}_{\Truncation{\vec{\Grp}}}} = \Truncation{\sem{\pi}_{\vec{\Grp}}} \).

\end{document}